\documentclass[11pt,a4paper]{article}

%%%%%
%%PACKAGES
%%%%%%%%%

\usepackage{amsmath,amsfonts,amssymb,amsthm}
\usepackage{amsthm}
\usepackage{graphicx,color}
\usepackage{boxedminipage}
\usepackage{framed}
\usepackage{thmtools}
\usepackage{thm-restate}
\usepackage{xspace}
\usepackage{geometry}
\geometry{
a4paper,
lmargin = 1in,
tmargin = 1in,
rmargin = 1in,
bmargin = 1in,
headheight = 0pt,
headsep = 0pt,
footskip = 6mm,
}

 \usepackage[ pdftex, plainpages = false, pdfpagelabels, 
                 %pdfpagelayout = useoutlines,
                 bookmarks=false,
                 bookmarksopen = true,
                 bookmarksnumbered = true,
                 breaklinks = true,
                 linktocpage,
                 pagebackref,
                 colorlinks = true,  % was true
                 linkcolor = blue,
                 urlcolor  = blue,
                 citecolor = red,
                 anchorcolor = green,
                 hyperindex = true,
                 hyperfigures
                 ]{hyperref} 
 \usepackage{xifthen}
 \usepackage{tabularx}
\usepackage{tikz} 
%\usetikzlibrary{matrix}
%\usetikzlibrary{arrows,shapes,positioning}
%\usetikzlibrary{decorations.pathreplacing}
 \usetikzlibrary{calc}

\DeclareMathOperator{\operatorClassP}{P}
\newcommand{\classP}{\ensuremath{\operatorClassP}}
\DeclareMathOperator{\operatorClassNP}{NP}
\newcommand{\classNP}{\ensuremath{\operatorClassNP}}
\DeclareMathOperator{\operatorClassCoNP}{coNP}
\newcommand{\classCoNP}{\ensuremath{\operatorClassCoNP}}
\DeclareMathOperator{\operatorClassFPT}{FPT\xspace}
\newcommand{\classFPT}{\ensuremath{\operatorClassFPT}\xspace}
\DeclareMathOperator{\operatorClassW}{W}
\newcommand{\classW}[1]{\ensuremath{\operatorClassW[#1]}}

%%%%%%%%%%%%%%%%%%%%%%%%

%%%%%%%%%%%%%%%%%%%%%%%%

\newcommand{\cO}{\mathcal{O}}
\newcommand{\Oh}{\mathcal{O}}

\newcommand{\bran}[1]{branchable\xspace}

\newtheorem{theorem}{Theorem}
\newtheorem{lemma}{Lemma}

\newtheorem{corollary}{Corollary}

\newtheorem{proposition}{Proposition}

\theoremstyle{definition}
\newtheorem{reduction}{Reduction Rule}[section]

%\newcommand{\torso}{{\mbox{\bf torso}}}

%%%%%%
%%%PROBLEM NAMES
%%%%%%%%
\newcommand{\pname}{\textsc}
\newcommand{\ProblemFormat}[1]{\pname{#1}}
\newcommand{\ProblemIndex}[1]{\index{problem!\ProblemFormat{#1}}}
\newcommand{\ProblemName}[1]{\ProblemFormat{#1}\ProblemIndex{#1}{}\xspace}

% Commonly used names

\newcommand{\probLSS}{\ProblemName{Secluded $\Pi$-Subgraph}}
\newcommand{\probCSS}{\ProblemName{Connected Secluded $\Pi$-Subgraph}}
\newcommand{\probLCSS}{\ProblemName{Connected Secluded $\Pi$-Subgraph}}
\newcommand{\probCSWCS}{\ProblemName{Connected Secluded  Colored $\Pi$-Subgraph of Exact Size}}
\newcommand{\probLForb}{\ProblemName{Connected Secluded $\mathcal{F}$-Free Subgraph}}
\newcommand{\probMAXF}{\ProblemName{Maximum or $w$-Weighted Connected Secluded $\mathcal{F}$-Free Subgraph}}
\newcommand{\probBORDF}{\ProblemName{Bordered Maximum or $w$-Weghted Connected Secluded $\mathcal{F}$-Free Subgraph}}

\newcommand{\probLT}{\ProblemName{Large Secluded Tree}}
\newcommand{\probMAXT}{\ProblemName{Maximum or $w$-Weighted Secluded Tree}}
\newcommand{\probBORDT}{\ProblemName{Bordered Maximum or $w$-Weghted  Secluded Tree}}

\newcommand{\probSC}{\ProblemName{Secluded Clique}}
\newcommand{\probSSt}{\ProblemName{Secluded Star}}
\newcommand{\probSP}{\ProblemName{Secluded Long Path}}
\newcommand{\probCSDRS}{\ProblemName{Connected Secluded Regular Subgraph}}

\makeatletter

\makeatother

%%%%%%
%%%Boxed Problem fancy 
%%%%%%
 
%
%
%  def (par) problem
%
%
%%%%% BEGIN FELIX %%%%%%%%%%%%

\newlength{\RoundedBoxWidth}
\newsavebox{\GrayRoundedBox}
\newenvironment{GrayBox}[1]%
   {\setlength{\RoundedBoxWidth}{.93\textwidth}
    \def\boxheading{#1}
    \begin{lrbox}{\GrayRoundedBox}
       \begin{minipage}{\RoundedBoxWidth}}%
   {   \end{minipage}
    \end{lrbox}
    \begin{center}
    \begin{tikzpicture}%
       \node(Text)[draw=black!20,fill=white,rounded corners,%
             inner sep=2ex,text width=\RoundedBoxWidth]%
             {\usebox{\GrayRoundedBox}};
        \coordinate(x) at (current bounding box.north west);
        \node [draw=white,rectangle,inner sep=3pt,anchor=north west,fill=white] 
        at ($(x)+(6pt,.75em)$) {\boxheading};
    \end{tikzpicture}
    \end{center}}     

\newenvironment{defproblemx}[2][]{\noindent\ignorespaces%
                                \FrameSep=6pt%
                                \parindent=0pt%
                \vspace*{-1.5em}
                \ifthenelse{\isempty{#1}}{%
                  \begin{GrayBox}{\textsc{#2}}%                
                }{%
                  \begin{GrayBox}{\textsc{#2} parameterized by~{#1}}%  
                }
                \begin{tabular*}{\textwidth}{@{\hspace{.1em}} >{\itshape} p{1.8cm} p{0.8\textwidth} @{}}%        
            }{
                \end{tabular*}%
                \end{GrayBox}%
                \ignorespacesafterend
            }

\newcommand{\defproblema}[3]{% FJR Version
  \begin{defproblemx}{#1}
    Input:  & #2 \\
    Task: & #3
  \end{defproblemx}
}%

%%%%% END FELIX %%%%%%%%%%%%

\pagestyle{plain}

\begin{document}

\title{Finding Connected Secluded Subgraphs\thanks{The preliminary version of this paper appeared as an extended abstract in the proceedings of IPEC 2017. This work is supported by Research Council of Norway via project ``CLASSIS''.}
}

\author{
Petr A. Golovach\thanks{
Department of Informatics, University of Bergen, Norway, \texttt{\{petr.golovach, pinar.heggernes, paloma.lima\}@ii.uib.no}.} \addtocounter{footnote}{-1}
\and
Pinar Heggernes\footnotemark{}\addtocounter{footnote}{-1}
\and 
Paloma Lima\footnotemark{}
\and
Pedro Montealegre\thanks{Facultad de Ingenier\'{\i}a y Ciencias, Universidad Adolfo Ib\'a\~nez,
Santiago, Chile, \texttt{p.montealegre@uai.cl}.}
}

\date{}

\maketitle

\begin{abstract}
Problems related to finding induced subgraphs satisfying given properties form one of the most studied areas within graph algorithms. Such problems have given rise to breakthrough results and led to development of new techniques both within the traditional \classP \/ vs \classNP \/ dichotomy and within parameterized complexity.  The $\Pi$-{\sc Subgraph} problem asks whether an input graph contains an induced subgraph on at least $k$ vertices satisfying graph property $\Pi$. For many applications, it is desirable that the found subgraph has as few connections to the rest of the graph as possible, which gives rise to the {\sc Secluded} $\Pi$-{\sc Subgraph} problem. Here, input $k$ is the size of the desired subgraph, and input $t$ is a limit on the number of neighbors this subgraph has in the rest of the graph. This problem has been studied from a parameterized perspective, and unfortunately it turns out to be \classW{1}-hard for many graph properties $\Pi$, even when parameterized by $k+t$. We show that the situation changes when we are looking for a connected induced subgraph satisfying $\Pi$.  In particular, we show that the \probLCSS problem is \classFPT
when parameterized by just $t$ for many important graph properties $\Pi$. 
\end{abstract}

\section{Introduction}\label{sec:intro}

Vertex deletion problems are central in parameterized algorithms and complexity, and they have contributed hugely to the development of new algorithmic methods. The $\Pi$-{\sc Deletion} problem, with input  a graph $G$ and an integer $\ell$, asks whether at most $\ell$ vertices can be deleted from $G$ so that the resulting graph satisfies graph property $\Pi$. Its dual, the $\Pi$-{\sc Subgraph} problem, with input $G$ and $k$, asks whether $G$ contains an induced subgraph on at least $k$ vertices satisfying $\Pi$. The problems were introduced already in 1980 by Yannakakis and Lewis \cite{LewisY80}, who showed their \classNP-completeness for almost all interesting graph properties $\Pi$. During the last couple of decades, these problems have been studied extensively with respect to parameterized complexity and kernelization, which has resulted in numerous new techniques and methods in these fields \cite{CyganFKLMPPS15, DowneyF13}. 

In many network problems, the size of the {\it boundary} between the subgraph that we are looking for and the rest of the graph makes a difference.  A small boundary limits the exposure of the found subgraph, and notions like isolated cliques have been studied in this respect \cite{HuffnerKMN09, ItoI09, KomusiewiczHMN09}. Several measures for the boundary have been proposed; in this work we use the open neighborhood of the returned induced subgraph. For a set of vertices $U$ of a graph $G$ and a positive integer $t$, we say that $U$ is \emph{$t$-secluded} if $|N_G(U)|\leq t$. Analogously, 
an induced subgraph $H$ of $G$ is \emph{$t$-secluded} if the vertex set of $H$ is $t$-secluded. For a given graph property $\Pi$, we get the following formal definition of the problem \probLSS. 

\defproblema{\probLSS}%
{A graph $G$ and nonnegative integers $k$ and $t$.}%
{Decide whether $G$ contains a $t$-secluded induced subgraph $H$ on at least $k$ vertices, satisfying $\Pi$.}

Lewis and Yannakakis \cite{LewisY80} showed that $\Pi$-{\sc Subgraph} is \classNP-complete for every hereditary nontrivial graph property $\Pi$. This immediately implies that \probLSS is \classNP-complete for every such $\Pi$.  As a consequence, the interest has shifted towards the parameterized  complexity of the problem, which has been studied by van Bevern et al.~\cite{BevernFMMSS16} for several classes $\Pi$. Unfortunately, in most cases   \probLSS  proves to be \classW{1}-hard, even when parameterized by $k+t$. 
In particular, it is \classW{1}-hard to decide whether a graph $G$ has a $t$-secluded independent set of size $k$ when the problem is parameterized by $k+t$ \cite{BevernFMMSS16}. 
In this extended abstract, we 
show that the situation changes when the secluded subgraph we are looking 
for is required to be connected, in which case we are able to obtain positive results that apply to many properties $\Pi$. In fact, connectivity is central in recently studied variants of secluded subgraphs, like {\sc Secluded Path} \cite{ChechikJPP13,JohnsonLR14} and {\sc Secluded Steiner Tree} \cite{FominGKK16}. 
However, in these problems the boundary measure is the closed neighborhood of the desired path or the steiner tree, connecting a given set of vertices. 
The following formal definition describes the problem that we study in this extended abstract, \probLCSS. For generality, we define a weighted problem.  

 \defproblema{\probLCSS}%
{A graph $G$, a weight function $\omega\colon V(G)\rightarrow \mathbb{Z}_{>0}$, a nonnegative integer $t$ and a positive integer $w$.}%
{Decide whether $G$ contains a connected $t$-secluded induced subgraph $H$ with $\omega(V(H))\geq w$,  satisfying $\Pi$.}

Observe that \probLCSS \/ remains \classNP-complete for all hereditary nontrivial graph properties $\Pi$, following the 
results of 
Yannakakis \cite{Yannakakis79}.
It can be also seen that \probLCSS parameterized by $w$ is \classW{1}-hard even for unit weights, if it is \classW{1}-hard with parameter $k$ to decide whether $G$ has a connected induced subgraph on at least $k$ vertices, satisfying $\Pi$ (see, e.g., \cite{DowneyF13, PapadimitriouY96}).

It is thus more interesting to consider parameterization by $t$, and we show that \probLCSS   is fixed parameter tractable when parameterized by $t$ for many important graph properties $\Pi$. 
Our main result is given in Section~\ref{sec:fpt-forb}  where we consider \probLCSS  for all graph properties $\Pi$ that are characterized by  finite sets $\mathcal{F}$ of forbidden induced subgraphs and refer to this variant of the problem as \probLForb. We show that the problem is fixed parameter tractable when parameterized by $t$ by proving the following theorem.

\begin{theorem}\label{thm:forb}
\probLForb can be solved in time  $2^{2^{2^{\Oh(t\log t)}}}\cdot n^{\Oh(1)}$.
\end{theorem}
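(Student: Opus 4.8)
The plan is to reduce the problem to searching for the open neighbourhood $S=N_G(V(H))$ of the desired subgraph. Put $d=\max_{F\in\mathcal{F}}|V(F)|$, a constant. If $H=G[C]$ is a solution and $S=N_G(C)$, then $|S|\le t$ and $C$ is exactly one connected component of $G-S$; conversely, for any vertex set $S$ with $|S|\le t$, every connected component $C$ of $G-S$ such that $G[C]$ is $\mathcal{F}$-free and $\omega(C)\ge w$ is a solution, because then $N_G(C)\subseteq S$. Reformulating \probLForb as the maximisation problem \probMAXF (compute the maximum of $\omega(C)$ over all connected $\mathcal{F}$-free $t$-secluded induced subgraphs $G[C]$, and compare with $w$), it therefore suffices to generate, within the claimed running time, a family $\mathcal{S}$ of vertex sets such that for some optimal solution $N_G(C)\in\mathcal{S}$; each $S\in\mathcal{S}$ is then tested in polynomial time by scanning the components of $G-S$. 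So the whole difficulty is to pin down $S$.

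To this end I would guess a bounded amount of structure in nested layers and pass to the bordered variant \probBORDF, in which part of the boundary, and the way the solution meets it, is fixed. Fix a vertex $a$ that is to lie in $C$. If $V(G)=C\cup S$, then $S$ is a set of at most $t$ vertices whose deletion leaves $G$ connected and $\mathcal{F}$-free: such sets are produced by the standard bounded-depth branching for $\mathcal{F}$-free vertex deletion — repeatedly locate an induced copy of some $F\in\mathcal{F}$ and branch on which of its at most $d$ vertices is deleted, to depth at most $t$ — giving $d^{t}=2^{\Oh(t\log t)}$ candidates, each of which is checked for connectivity and weight. Otherwise some vertex $b$ lies outside $C\cup S$; then $S$ separates $a$ from $b$, hence contains a minimal $a$--$b$ separator $S_0$ with $|S_0|\le t$, and since $S\setminus S_0$ lies in the component $C'$ of $G-S_0$ containing $a$ while $N_G(C)\setminus C'\subseteq S_0$, the whole of $C$ and of $N_G(C)$ is confined to $G[C'\cup S_0]$, a strictly smaller graph (it excludes $b$) in which the residual boundary budget has dropped by $|S_0|\ge 1$. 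Iterating this brings the recursion depth down to at most $t$.

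The obstacle in turning this into an \classFPT algorithm — and the source of the tower of exponentials — is that doing the above naively re-pays a factor polynomial in $n$ (choosing the outside vertex, and the separator) at each of the up to $t$ levels, yielding only a running time of the form $n^{\Oh(t)}$; one must instead extract, in a single step, enough bounded-size information to determine all the nested separators at once and, with the boundary thus fixed, decide which of the (up to $2^{\Oh(t\log t)}$) ``dangling pieces'' of the shattered graph to absorb into $C$ so as to maximise $\omega(C)$ while keeping $G[C]$ both $\mathcal{F}$-free and $t$-secluded. That last choice ranges over subsets of a set of size $2^{\Oh(t\log t)}$, hence costs $2^{2^{\Oh(t\log t)}}$, and resolving it correctly (where the coloured, exact-size variant \probCSWCS enters, via colour coding) requires one further descent into smaller bordered instances, contributing the third exponential; composing the three nested enumerations — each of which enumerates objects built from subsets of the previous level's objects — gives $2^{2^{2^{\Oh(t\log t)}}}$, while the polynomial factors in $n$ are incurred only a bounded number of times. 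Correctness is by induction on the recursion: for the correct guesses of $a$, the outside vertex, and the separators, an optimal solution survives every step.

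The fundamental difficulty behind all of this is that none of the usual normalisations of the solution is available: one cannot shrink $C$ towards a canonical minimal form without risking disconnection or loss of weight, nor enlarge it towards an extremal separator without risking the creation of a forbidden induced subgraph, so the \emph{local} constraint ``$G[C]$ is $\mathcal{F}$-free'' genuinely conflicts with the \emph{global} constraints ``$C$ connected, $\omega(C)\ge w$, $|N_G(C)|\le t$''. Concretely, the technical heart is to show that a certificate whose size is bounded by a function of $t$ alone (rather than $n^{f(t)}$) suffices, even though $|C|$ is unbounded and a single forbidden induced subgraph may straddle $C$, its boundary, and the rest of the graph; managing exactly this bookkeeping is what the bordered, coloured and exact-size auxiliary problems are designed for.
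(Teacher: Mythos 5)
There is a genuine gap: your proposal correctly isolates the difficulties but never supplies the mechanism that overcomes them. In your main recursion, after guessing $a\in C$ and a vertex $b\notin C\cup S$, you branch on a minimal $a$--$b$ separator $S_0\subseteq S$ of size at most $t$ and recurse into $G[C'\cup S_0]$. The number of such minimal separators is $n^{\Theta(t)}$ in general (and $S_0$ need not be an \emph{important} separator, since it is forced to be a subset of the specific set $S=N_G(C)$, so Lemma~\ref{lem:imp-sep} does not apply to bound the branching), and you yourself note that this only yields $n^{\Oh(t)}$. Your proposed fix --- ``extract, in a single step, enough bounded-size information to determine all the nested separators at once'' --- is exactly the theorem to be proved, not an argument; the subsequent accounting of the three exponentials is numerology attached to an unspecified algorithm. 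The paper resolves this with the recursive-understanding technique, which your sketch does not contain: one either certifies via Lemma~\ref{lem:unbreakable} that $G$ is $(q,t)$-unbreakable (in which case Lemma~\ref{lem:sol-size} forces every solution to be small or co-small, and the two cases are handled by Corollary~\ref{cor:CSWCS} and by important separators after a true-twin blowup), or one finds a separation of order at most $t$ into two parts each larger than $q$, recursively solves the \emph{bordered} problem on one part, and \emph{replaces} that part by a bounded-size graph. That replacement is the technical heart you are missing: it requires the equivalence relation on boundaried graphs with respect to $\mathcal{F}_b$, Lemma~\ref{lem:eq} guaranteeing that equivalent parts can be exchanged without creating or destroying induced copies of graphs in $\mathcal{F}$, and the bounded representative families $\mathcal{G}_p$, $\mathcal{G}_p'$ of Lemmas~\ref{lem:size-eq} and~\ref{lem:size-eq-prime}, together with Reduction Rules~\ref{Ared:one-rec}--\ref{Ared:five-rec} that shrink the recursed-upon side to at most $q=2^{2^{\Oh(t\log t)}}$ vertices. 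Without a replacement step of this kind, your recursion does not make measurable progress in $f(t)\cdot n^{\Oh(1)}$ time.

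A secondary inaccuracy: you assert that for the correct guess the recursion depth is at most $t$ because ``the residual boundary budget has dropped by $|S_0|\ge 1$.'' The budget argument controls depth, but each level still multiplies the number of branches by the number of candidate separators, so bounded depth alone does not rescue the running time; and the appeal to \probCSWCS ``via colour coding'' to resolve which dangling pieces to absorb is not accompanied by any argument that the pieces interact only through a bounded-size interface --- which is precisely what the boundaried-graph equivalence classes are for.
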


In Section~\ref{sec:fpt-tree} we show that similar results could be obtained for some infinite families of forbidden subgraphs by giving an \classFPT algorithm for the case when  $\Pi$ is the property of being acyclic. Furthermore, in Section~\ref{sec:better} we show that we can get faster algorithms for \probLCSS when $\Pi$ is the property of being a complete graph, a star, a path, and a $d$-regular graph. Finally, in Section~\ref{sec:concl} we briefly discuss kernelization for \probLCSS.

\section{Preliminaries}\label{sec:defs}
We consider only finite undirected simple graphs. We use $n$ to denote the number of vertices and $m$ the number of edges of the considered graphs unless it creates confusion. A graph $G$ is identified by its vertex set $V(G)$ and edge set $E(G)$. For $U\subseteq V(G)$, we write $G[U]$ to denote the subgraph of $G$ induced by $U$. 
We write $G-U$ to denote the graph $G[V(G)\setminus U]$; for a single-element $U=\{u\}$, we write $G-u$.
A set of vertices $U$ is \emph{connected} if $G[U]$ is a connected graph.
For a vertex $v$, we denote by $N_G(v)$ the \emph{(open) neighborhood} of $v$ in  $G$, i.e., the set of vertices that are adjacent to $v$ in $G$. For a set $U\subseteq V(G)$, $N_G(U)=(\cup_{v\in U}N_G(v))\setminus U$. 
We denote by $N_G[v]=N_G(v)\cup\{v\}$ the \emph{closed neighborhood} of $v$; respectively, $N_G[U]=\cup_{v\in U}N_G[v]$.
The \emph{degree} of a vertex $v$ is $d_G(v)=|N_G(v)|$. 
Two vertices $u$ and $v$ of graph $G$ are {\it true twins} if $N_G[u]=N_G[v]$, and {\it false twins} if $N_G(u)=N_G(v)$.
A set of vertices $S\subset V(G)$ of a connected graph $G$ is a \emph{separator} if $G-S$ is disconnected. 
A vertex $v$ is a \emph{cut vertex} if $\{v\}$ is a separator.

A graph property is {\it hereditary} if it is preserved under vertex deletion, or equivalently, under taking induced subgraphs. A graph property is {\it trivial} if either the set of graphs satisfying it, or the set of graphs that do not satisfy it, is finite. 
Let $F$ be a graph. 
We say that a graph $G$ is \emph{$F$-free} if $G$ has no induced subgraph isomorphic to $F$. For a set of graphs $\mathcal{F}$, a graph $G$ is \emph{$\mathcal{F}$-free} if $G$ is $F$-free for every $F\in \mathcal{F}$. Let $\Pi$ be the property of being $\mathcal{F}$-free. Then, depending on whether $\mathcal{F}$ is a finite or an infinite set, we say that $\Pi$ is {\it characterized by a finite / infinite set of forbidden induced subgraphs}.

We use the \emph{recursive understanding} technique introduced by Chitnis et al.~\cite{ChitnisCHPP16} for graph problems to solve \probLCSS when $\Pi$ is defined by forbidden induced subgraphs or $\Pi$ is the property to be a forest. This powerful technique is based on the following idea.  Suppose that the input graph has a vertex separator of bounded size that separates the graph into two sufficiently big parts. Then we solve the problem recursively for one of the parts and replace this part by an equivalent graph such that the replacement keeps all essential (partial) solutions of the original part. By such a replacement we obtain a graph of smaller size. Otherwise, if there is no separator of bounded size separating graphs into two big parts, then either the graph has bounded size or it is highly connected, and we exploit these properties. We need the following notions and results from Chitnis et al.~\cite{ChitnisCHPP16}.

Let $G$ be a graph. A pair $(A,B)$, where $A,B\subseteq V(G)$ and $A\cup B=V(G)$, is a \emph{separation of $G$ of order $|A\cap B|$} if $G$ has no edge $uv$ with $u\in A\setminus B$ and $v\in B\setminus A$, i.e., $A\cap B$ is an $(A,B)$-separator. Let $q$ and $k$ be nonnegative integers. A graph $G$  is \emph{(q,k)-unbreakable} if for every separation $(A,B)$ of $G$ of order at most $k$, $|A\setminus B|\leq q$ or $|B\setminus A|\leq q$. Combining Lemmas~19, 20 and 21 of~\cite{ChitnisCHPP16}, we obtain the following.

\begin{lemma}[\cite{ChitnisCHPP16}]\label{lem:unbreakable}
Let $q$ and $k$ be nonnegative integers. There is an algorithm with running time $2^{\Oh(\min\{q,k\}\log (q+k))}\cdot n^3\log n$ that, for a %connected 
graph $G$, either finds a separation $(A,B)$ of order at most $k$ such that $|A\setminus B|>q$ and $|B\setminus A|>q$, or correctly reports that $G$ is $((2q+1)q\cdot 2^k,k)$-unbreakable. 
\end{lemma}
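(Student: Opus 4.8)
The plan is to reconstruct the \emph{randomized contractions} argument of Chitnis et al.~\cite{ChitnisCHPP16} (the statement is exactly what one obtains by combining their Lemmas~19--21, and below I indicate how the pieces fit). The first step is to reduce the task of \emph{finding} a good separation — one of order $s\le k$ with $|A\setminus B|>q$ and $|B\setminus A|>q$ — to a guided search. The structural observation driving everything is the following. If $(A,B)$ is such a separation, then picking any \emph{cores} $X_A\subseteq A\setminus B$ and $X_B\subseteq B\setminus A$ with $|X_A|=|X_B|=q+1$ (possible since both sides exceed $q$), the set $S:=A\cap B$ is an $(X_A,X_B)$-separator of size $\le k$ disjoint from $X_A\cup X_B$. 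Conversely, for \emph{any} vertex sets $X_A,X_B$ of size $q+1$ and \emph{any} $(X_A,X_B)$-separator $C$ with $|C|\le k$ and $C\cap(X_A\cup X_B)=\emptyset$, letting $R$ be the union of the components of $G-C$ meeting $X_A$, the pair $(R\cup C,\ V(G)\setminus R)$ is a separation of order $|C|\le k$ with $|R|\ge|X_A|=q+1>q$ and $|V(G)\setminus(R\cup C)|\ge|X_B|=q+1>q$ — hence a good separation. So it suffices to (a) get hold of one such pair of cores and (b) compute a minimum $(X_A,X_B)$-separator avoiding the cores by a single max-flow computation; the latter is polynomial and contributes part of the $n^3$ factor.

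The second step locates the cores (or the separator) within the claimed time budget. Enumerating all pairs of $(q+1)$-vertex cores costs $n^{\Theta(q)}$, which is far too much, so instead one uses a family of perfect hash functions — explicit splitters, e.g.\ of Naor--Schulman--Srinivasan — consisting of $2^{\Oh(\min\{q,k\}\log(q+k))}\cdot\log n$ colourings such that for the hidden core-plus-separator structure at least one colouring assigns it a usable ``rainbow'' pattern. For each colouring and each of the $2^{\Oh(\min\{q,k\}\log(q+k))}$ admissible colour patterns one contracts the monochromatic connected components in the way prescribed by the pattern; this strictly shrinks the instance while preserving every good separation consistent with the pattern, after which the relevant cores/separator can be read off and the max-flow of Step~1 applied. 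Which object is highlighted — the two cores (size $\Oh(q)$) or the separator (size $\le k$) — is decided by comparing $q$ with $k$, and this is precisely where the $\min\{q,k\}$ rather than $q+k$ appears in the exponent. Multiplying the splitter size $2^{\Oh(\min\{q,k\}\log(q+k))}\cdot\log n$ by the number of patterns $2^{\Oh(\min\{q,k\}\log(q+k))}$ and by the $n^3$ cost of the flow computations yields the stated running time $2^{\Oh(\min\{q,k\}\log(q+k))}\cdot n^3\log n$.

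The third step handles the other branch: if the guided search finds nothing, the algorithm reports that $G$ is $((2q+1)q\cdot2^k,k)$-unbreakable, and we must see this report is correct. Take any separation $(A,B)$ of order $s\le k$; if $|A\setminus B|\le q$ there is nothing to prove, so assume $|A\setminus B|>q$. The separator $A\cap B$ attaches to the connected pieces of $G[A\setminus B]$ and of $G[B\setminus A]$ in at most $2^k$ distinct ways, and a ``$(2q+1)$ pieces, each of size $\le q$'' bookkeeping argument shows that if $|B\setminus A|$ exceeded $(2q+1)q\cdot2^k$ one could refine $(A,B)$ into a good separation realising one of the enumerated patterns — exactly the kind of separation Step~2 would have returned. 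Contrapositively, failure of the search forces $|B\setminus A|\le(2q+1)q\cdot2^k$ for every such $(A,B)$, i.e.\ $G$ is $((2q+1)q\cdot2^k,k)$-unbreakable.

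The main obstacle is Step~2: proving that the contraction of monochromatic components is sound — that the contracted instance possesses a good, pattern-consistent separation iff $G$ does — while simultaneously keeping the number of colourings and patterns at $2^{\Oh(\min\{q,k\}\log(q+k))}$ rather than $2^{\Oh((q+k)\log(q+k))}$, which forces one to run the regimes $q\le k$ and $q>k$ through separate arguments. This is the technical content of Chitnis et al.'s Lemmas~19--21. Once it is in place, derandomising the colour-coding with an explicit splitter family (the source of the extra $\log n$ factor) is routine.
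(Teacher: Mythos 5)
First, a point of reference: the paper does not prove this statement at all --- Lemma~\ref{lem:unbreakable} is imported from Chitnis et al.~\cite{ChitnisCHPP16} by combining their Lemmas~19--21 --- so your proposal has to be judged as a reconstruction of that external argument. Your Step~1 is correct: a separation of order at most $k$ with both sides of size greater than $q$ exists if and only if there are disjoint $(q+1)$-element cores $X_A,X_B$ and an $(X_A,X_B)$-separator of size at most $k$ avoiding the cores, and once the cores are known a single max-flow recovers the separator.

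The genuine gap is in Step~2, exactly where you write that ``the relevant cores/separator can be read off.'' Fix a set $S$ of the family of Lemma~\ref{lem:derand} with $X_A\cup X_B\subseteq S$ and $C\cap S=\emptyset$. The hidden cores may be scattered over many components of $G[S]$, so what you must decide, for each $S$, is whether two disjoint unions of components of $G[S]$, each of total size at least $q+1$, can be separated by at most $k$ vertices; this is not the single flow of Step~1, and no polynomial-time per-$S$ procedure is given (it is an unbalanced-separator problem, not something obtained by contracting monochromatic pieces). Chitnis et al.\ circumvent precisely this: their detection lemma only finds separations in which each side contains a \emph{connected} piece of more than $q$ vertices (so each witness core is a single component of $G[S]$ and the flow step applies), a second lemma detects the remaining ``flower'' separations (a neighborhood of size at most $k$ with many petals of size at most $q$), and the bound $(2q+1)q\cdot 2^k$ is exactly the loss incurred when one shows, by the $2^k$-neighborhoods/size-$q$-pieces pigeonhole you sketch in Step~3, that a connected graph with neither detectable structure is $((2q+1)q\cdot 2^k,k)$-unbreakable. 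Your write-up is internally inconsistent on this point: if Step~2 really detected every separation with both sides larger than $q$, the algorithm could report plain $(q,k)$-unbreakability and the relaxed bound of Step~3 would be superfluous; conversely, the refined separation your Step~3 bookkeeping produces is a flower-type separation whose detection Step~2 never addresses. (Two smaller inaccuracies: the $\min\{q,k\}$ in the exponent comes directly from the family of Lemma~\ref{lem:derand} with $a=\Oh(q)$, $b=k$, not from a case distinction on which object to ``highlight''; and the per-pattern contraction of monochromatic components belongs to the recursive-understanding phase of~\cite{ChitnisCHPP16}, not to the proof of this detection lemma.)
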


We conclude this section by noting that the following variant of  \probCSS is \classFPT when parameterized by $k+t$.
We will rely on this result in the subsequent sections, however we believe that it is also of interest on its own.

\defproblema{\probCSWCS}%
{A graph $G$, coloring $c\colon V(G)\rightarrow \mathbb{N}$, a weight function $\omega\colon V(G)\rightarrow\mathbb{Z}_{\geq 0}$ and nonnegative integers $k$, $t$ and $w$.}%
{Decide whether $G$ contains a connected $t$-secluded induced subgraph $H$ such that $(H,c')$, where $c'(v)=c|_{V(H)}(v)$, satisfies  $\Pi$, $|V(H)|=k$ and 
$\omega(V(H))\geq w$.}

We say that a mapping $c\colon V(G)\rightarrow \mathbb{N}$ is a \emph{coloring} of $G$; note that we do not demand a coloring to be proper. Analogously, we say that $\Pi$ is a \emph{property of colored graphs} if $\Pi$ is a property on pairs $(G,c)$, where $G$ is a graph and $c$ is a coloring.  
Notice that if some vertices of the input graph have labels, then we can assign to each label (or a combination of labels if a vertex can have several labels) a specific color and assign some color to unlabeled vertices. Then we can redefine a considered graph property with the conditions imposed by labels as a property of colored graphs. 
Observe that we allow zero weights. 
We give two algorithms for \probCSWCS with different running times. The first algorithm is based on Lemmas~3.1 and 3.2 of Fomin and Villanger~\cite{FominV12}, which we summarize in Lemma \ref{lem:conn-subgr} below. The second algorithm uses Lemma~\ref{lem:derand} by Chitnis et al.~\cite{ChitnisCHPP16},  and we are going to use it when $k\gg t$. 

\begin{lemma}[\cite{FominV12}]\label{lem:conn-subgr} Let $G$ be a graph. For every $v\in
V(G)$, and $k,t\geq 0$,  the number of connected vertex subsets
$U\subseteq V(G)$ such that
$v \in U$,
$|U| = k$, and 
$|N_G(U)|=t$,  
is at most $\binom{k+t}{t}$.
Moreover, all such subsets can be enumerated in time $\cO (\binom{k+t}{t}\cdot (n+m) \cdot t\cdot (k+t))$.
\end{lemma}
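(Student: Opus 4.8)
The plan is to prove both statements at once by a branching procedure that decides, vertex by vertex, which vertices belong to the desired set $U$ and which fall into its neighborhood; the enumeration algorithm is exactly this procedure, and the counting bound falls out of bounding the number of its leaves. Concretely, I would maintain a state $(P,F)$ of two disjoint vertex sets, with the intended meaning ``$P\subseteq U$ and $F\subseteq N_G(U)$'', subject to the invariants that $G[P]$ is connected, $v\in P$, and $F\subseteq N_G(P)$ (so every vertex already placed into $F$ is genuinely on the boundary of any completion). Start from $(\{v\},\emptyset)$. At a node: if $N_G(P)\setminus F=\emptyset$, then $N_G(P)=F$ by the invariant, $P$ is ``closed off'', and we output $P$ precisely when $|P|=k$ and $|F|=t$, discarding the branch otherwise; if $N_G(P)\setminus F\neq\emptyset$, select a vertex $u\in N_G(P)\setminus F$ by a fixed rule (say, smallest index) and branch into $(P\cup\{u\},F)$ (the ``into $P$'' child) and $(P,F\cup\{u\})$ (the ``into $F$'' child), immediately pruning any branch in which $|P|$ exceeds $k$ or $|F|$ exceeds $t$.

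For correctness, take any admissible $U$ (connected, containing $v$, with $|U|=k$ and $|N_G(U)|=t$). Following the branch that puts the selected $u$ into $P$ when $u\in U$ and into $F$ otherwise keeps $P\subseteq U$ and $F\subseteq N_G(U)$ throughout, so no pruning is ever triggered; moreover the run cannot halt with $P\subsetneq U$, since connectivity of $G[U]$ would then force an edge from $P$ into $U\setminus P$, i.e., a vertex of $N_G(P)$ lying in $U$, contradicting $N_G(P)=F\subseteq N_G(U)$. Hence the run ends by outputting $P=U$. Since the selection rule is deterministic, the root-to-leaf path producing $U$ is fully determined by $U$, so distinct admissible sets are output at distinct leaves; this already shows the number of admissible sets is at most the number of output leaves.

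It remains to count leaves. On the path producing a given admissible $U$, each ``into $P$'' step raises $|P|$ by one and each ``into $F$'' step raises $|F|$ by one, so the path consists of exactly $k-1$ steps of the first kind and exactly $t$ of the second; mapping each successful leaf to the binary string of its left/right turns injects these leaves into the set of strings with $k-1$ symbols of one type and $t$ of the other, of which there are $\binom{k-1+t}{t}\le\binom{k+t}{t}$, which is the claimed bound. For the running time, every leaf of the whole branching tree is either a successful one, or a branch pruned the moment $|P|=k+1$, or one pruned the moment $|F|=t+1$, or a closed-off but non-admissible leaf with $|P|\le k$ and $|F|\le t$; in each case the number of such leaves is a sum of binomial coefficients that collapses via the hockey-stick identity to at most $\binom{k+t}{t}$ times a polynomial factor in $k$ and $t$. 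Since the tree is binary, the total number of nodes is of the same order, and processing a node costs $\cO(n+m)$ (recompute $N_G(P)$, test the stopping condition, pick $u$), giving the claimed $\cO\!\big(\binom{k+t}{t}\cdot(n+m)\cdot t\cdot(k+t)\big)$ bound.

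The step I expect to be the main obstacle is the correctness of the stopping rule together with the exactness of the leaf count: one must use connectivity of $U$ carefully to rule out premature termination, check that no admissible $U$ is lost to pruning, and verify that the map from admissible sets to leaves is injective. Pinning down the precise polynomial factor in the enumeration running time (as opposed to the clean $\binom{k+t}{t}$ count) is the other, essentially bookkeeping, difficulty, handled by the hockey-stick identity as indicated.
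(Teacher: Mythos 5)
Your proposal is correct: the branch-or-boundary procedure on states $(P,F)$, the argument that connectivity of $G[U]$ prevents premature closing, and the injection of successful leaves into binary strings with $k-1$ and $t$ symbols (giving $\binom{k-1+t}{t}\le\binom{k+t}{t}$) together with the leaf-count bound for the running time are all sound. The paper itself gives no proof of this lemma (it is imported from Fomin and Villanger~\cite{FominV12}), and your argument is essentially the standard proof from that source, so there is nothing to compare beyond noting the match.
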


\begin{lemma}[\cite{ChitnisCHPP16}]\label{lem:derand}
Given a set $U$ of size $n$ and integers $0\leq a,b\leq n$, one can construct in time $2^{\Oh(\min\{a,b\}\log (a+b))}n\log n$ a family $\mathcal{S}$ of at most  $2^{\Oh(\min\{a,b\}\log (a+b))}\log n$ subsets of $U$ such that the following holds: for any sets $A,B\subseteq U$, $A\cap B=\emptyset$, $|A|\leq a$, $|B|\leq b$, there exists a set $S\in \mathcal{S}$ with $A\subseteq S$ and $B\cap S=\emptyset$. 
\end{lemma}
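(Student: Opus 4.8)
The plan is to treat the statement as a ``lopsided'' version of the classical universal-set (splitter) constructions, establishing it in two phases: a probabilistic argument giving the claimed \emph{size}, followed by a derandomization giving the claimed \emph{running time}. First I would handle degenerate cases and fix an orientation. If $\min\{a,b\}=0$ the claim is trivial: take $\mathcal{S}=\{\emptyset\}$ if $a=0$ and $\mathcal{S}=\{U\}$ if $b=0$. Since the target bounds are symmetric in $a$ and $b$, and a family $\mathcal{S}$ works for $(a,b)$ if and only if $\{U\setminus S : S\in\mathcal{S}\}$ works for $(b,a)$ (simply interchange the roles of $A$ and $B$), I may assume $1\le a\le b$.

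For the probabilistic construction I would sample $S\subseteq U$ by placing each element of $U$ into $S$ independently with probability $p=\tfrac{a}{a+b}$. For any fixed disjoint $A,B$ with $|A|\le a$ and $|B|\le b$,
\[
\Pr\big[A\subseteq S\ \text{and}\ B\cap S=\emptyset\big]=p^{|A|}(1-p)^{|B|}\ \ge\ p^{a}(1-p)^{b}\ \ge\ \Big(\tfrac{1}{a+b}\Big)^{a}e^{-a}\ \ge\ 2^{-\Oh(a\log(a+b))},
\]
using $p\ge\tfrac{1}{a+b}$ and $(1-p)^{b}\ge e^{-pb/(1-p)}=e^{-a}$. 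Writing $q=2^{-\Oh(\min\{a,b\}\log(a+b))}$ for this probability and letting $\mathcal{S}$ consist of $N=\lceil q^{-1}(a+b)\ln n\rceil+1$ independent samples, a union bound over the $n^{\Oh(a+b)}$ relevant pairs $(A,B)$ shows that $\mathcal{S}$ has the required property with positive probability. Since $\min\{a,b\}\ge 1$ forces $a+b\le 2^{\min\{a,b\}\log(a+b)}$, we get $N=2^{\Oh(\min\{a,b\}\log(a+b))}\log n$, matching the claimed family size; note that the choice of bias $p=a/(a+b)$ is exactly what turns an $a+b$ in the exponent into $\min\{a,b\}$.

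To make the construction deterministic within the stated time bound I would use the familiar two-step derandomization of such samplers. First, apply an explicit perfect hash family $\mathcal{H}$ of $2^{\Oh(\log(a+b))}\log n$ functions $h\colon U\to[m]$ with $m=\Oh((a+b)^{2})$, such that every $W\subseteq U$ with $|W|\le a+b$ is mapped injectively by some $h\in\mathcal{H}$; this removes the dependence on $n$. Second, on the bounded ground set $[m]$, build an explicit family of $2^{\Oh(\min\{a,b\}\log(a+b))}$ subsets with the separating property, and output all pullbacks $\{h^{-1}(S_{0}) : h\in\mathcal{H}\}$. Correctness is then immediate: given $A,B$, pick $h\in\mathcal{H}$ injective on $A\cup B$, separate $h(A)$ from $h(B)$ inside $[m]$, and pull back. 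The step I expect to be the main obstacle is this last derandomization over $[m]$ when $a\ll b$: a brute-force derandomization that enumerates all candidate pairs $(A_{0},B_{0})\subseteq[m]$ costs $2^{\Oh((a+b)\log(a+b))}=n^{\Oh(a+b)}$, far above the permitted $2^{\Oh(\min\{a,b\}\log(a+b))}n^{\Oh(1)}$. One must instead exploit that a single separating event depends on only $a+b$ coordinates and occurs with probability at least $q$, so that (via explicit splitter constructions) the work scales with $q^{-1}$ rather than with the number of pairs --- this is the technical heart of the lemma.
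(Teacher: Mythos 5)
First, note that the paper does not prove this lemma at all: it is imported verbatim from Chitnis et al.~\cite{ChitnisCHPP16}, so the only meaningful comparison is with the proof in that source. Your degenerate cases, the complementation argument reducing to $a\le b$, and the biased-sampling existence proof (with $p=a/(a+b)$, giving success probability $p^a(1-p)^b\ge (a+b)^{-a}e^{-a}=2^{-\Oh(a\log(a+b))}$) are all correct, modulo a harmless constant in the union bound, and the two-phase architecture --- an $(n,a+b,\Oh((a+b)^2))$-perfect hash family followed by a separating family on the reduced universe $[m]$ --- is exactly the architecture of the source's proof.

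However, the step you yourself single out as ``the technical heart'' is left genuinely unresolved, and this is a real gap: you never exhibit the $2^{\Oh(\min\{a,b\}\log(a+b))}$-size separating family on $[m]$, only assert that some probability-driven enumeration ``must'' exist. The resolution is in fact elementary and requires neither the probabilistic phase nor any second-level splitter. Once $h\in\mathcal{H}$ is injective on $A\cup B$, the set $P=h(A)\subseteq[m]$ has $|P|\le a$ and satisfies $h^{-1}(P)\supseteq A$ and $h^{-1}(P)\cap B=\emptyset$ (no element of $B$ can land in $h(A)$ by injectivity). So for $a\le b$ one simply outputs $h^{-1}(P)$ for every $h\in\mathcal{H}$ and every $P\subseteq[m]$ with $|P|\le a$; there are only $\binom{m}{\le a}=2^{\Oh(a\log(a+b))}$ such $P$ per hash function, not $2^{\Oh((a+b)\log(a+b))}$ pairs, because one enumerates candidate images of $A$ alone rather than candidate pairs $(A_0,B_0)$. (For $b<a$ take complements, as in your symmetry reduction.) Combined with the $(a+b)^{\Oh(1)}\log n$ bound on $|\mathcal{H}|$ and the $(a+b)^{\Oh(1)}\,n\log n$ construction time of the splitter, this yields both the claimed family size and the claimed running time. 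Without this (or an equivalent) final step, your argument establishes existence of the family but not its construction within the stated time, which is the entire content of the lemma as used in the paper.
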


\begin{theorem}\label{thm:CSWCS} If property $\Pi$ can be recognized in time $f(n)$, then
\probCSWCS can be solved both in time $2^{k+t}\cdot f(k)\cdot n^{\Oh(1)}$, and in time $2^{\Oh(\min\{k,t\}\log (k+t))}\cdot f(k)\cdot n^{\Oh(1)}$.
\end{theorem}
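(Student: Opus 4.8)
The plan is to reduce the search for a connected $t$-secluded induced subgraph $H$ of exact size $k$ to an enumeration problem over candidate vertex sets, exploiting that both ingredients we need have already been set up: Lemma~\ref{lem:conn-subgr} counts and enumerates connected vertex subsets of a prescribed size with a prescribed-size open neighborhood, and Lemma~\ref{lem:derand} derandomizes a color-coding-style separation of a small set from a small set. For the first running time $2^{k+t}\cdot f(k)\cdot n^{\Oh(1)}$: observe that any solution $H$ has $|V(H)|=k$ and $|N_G(V(H))|\le t$, so $V(H)$ is exactly one of the sets counted by Lemma~\ref{lem:conn-subgr} (for some choice of base vertex $v\in V(H)$, in fact for all of them). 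Hence I would iterate over all $v\in V(G)$, and over all $t'\in\{0,1,\dots,t\}$; for each, invoke Lemma~\ref{lem:conn-subgr} with parameters $k$ and $t'$ to enumerate all connected $U\ni v$ with $|U|=k$ and $|N_G(U)|=t'$. There are at most $\binom{k+t'}{t'}\le\binom{k+t}{t}\le 2^{k+t}$ such sets per $(v,t')$, and for each candidate $U$ we test in time $f(k)$ whether $(G[U],c|_U)$ satisfies $\Pi$ (we may also check $\omega(U)\ge w$ in polynomial time). If some candidate passes, we answer \yes; otherwise \no. The total cost is $\sum_v\sum_{t'}\binom{k+t}{t}\cdot(f(k)+n^{\Oh(1)})\le 2^{k+t}\cdot f(k)\cdot n^{\Oh(1)}$, which is the claimed bound. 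Correctness is immediate: the enumeration is exhaustive over all connected $t$-secluded induced subgraphs of size $k$ containing any fixed vertex, hence over all of them.

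For the second running time $2^{\Oh(\min\{k,t\}\log(k+t))}\cdot f(k)\cdot n^{\Oh(1)}$, the point is that $\binom{k+t}{t}$ can be as large as $2^{\Theta(\min\{k,t\}\log(k+t))}$ only when $\min\{k,t\}$ is small, but when $k$ is large and $t$ is small the bound $2^{k+t}$ is too weak and Lemma~\ref{lem:conn-subgr} alone does not suffice; we instead use Lemma~\ref{lem:derand}. The idea is standard color coding applied to the pair $(A,B)$ with $A=V(H)$ and $B=N_G(V(H))$: these are disjoint, $|A|=k$, $|B|\le t$. Applying Lemma~\ref{lem:derand} with ground set $U=V(G)$, $a=k$, $b=t$ gives a family $\mathcal{S}$ of $2^{\Oh(\min\{k,t\}\log(k+t))}\log n$ subsets such that some $S\in\mathcal{S}$ satisfies $V(H)\subseteq S$ and $N_G(V(H))\cap S=\emptyset$. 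For such an $S$, $G[S]$ has a connected component equal to $G[V(H)]$: indeed, $V(H)$ is connected and none of its neighbors lies in $S$, so $V(H)$ is a union of connected components of $G[S]$, and being connected it is a single component. So for each $S\in\mathcal{S}$ I would compute the connected components of $G[S]$, and for every component $C$ with $|C|=k$, $\omega(C)\ge w$, and $(G[C],c|_C)$ satisfying $\Pi$ (tested in time $f(k)$, applied to the at most $n/k$ components of size exactly $k$), declare success. The running time is $|\mathcal{S}|\cdot(f(k)\cdot n^{\Oh(1)}+n^{\Oh(1)})=2^{\Oh(\min\{k,t\}\log(k+t))}\cdot f(k)\cdot n^{\Oh(1)}$, as required, and correctness follows because any solution $H$ is caught via a separating set $S$, and conversely any component $C$ we accept is by construction a connected induced subgraph of size $k$ with $\omega(C)\ge w$ satisfying $\Pi$, and it is $t$-secluded since $|N_G(C)|\le|N_{G}(C)|$; we must additionally verify $|N_G(C)|\le t$, which is a direct polynomial-time check we add to the acceptance test.

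The main obstacle is a subtle one in the second algorithm: Lemma~\ref{lem:derand} guarantees a set $S$ avoiding $N_G(V(H))$, but a component $C$ of $G[S]$ of size $k$ that we find need not itself be $t$-secluded in $G$ — the separation property only isolates the \emph{intended} solution, not every candidate. This is why the explicit check $|N_G(C)|\le t$ in the acceptance condition is essential rather than cosmetic: without it we might return a size-$k$ connected $\Pi$-subgraph whose neighborhood is large. With that check in place, soundness is restored, and completeness is unaffected since the true solution still survives. A second, more routine point to get right is the exact-size requirement: Lemma~\ref{lem:conn-subgr} enumerates sets of size exactly $k$, which matches \probCSWCS directly, and in the component-based algorithm we simply restrict to components of size exactly $k$; there is no need to handle ``at least $k$'' here. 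Finally one should note that when $\min\{k,t\}=t$ the second bound is never worse than the first up to the trading between $2^{k+t}$ and $2^{\Oh(t\log(k+t))}$, so stating both bounds is genuinely informative: the first is better when $t=\Omega(k/\log k)$ and the second when $k$ dominates, and downstream (in Section~\ref{sec:fpt-forb}) we invoke whichever is favorable for the regime at hand.
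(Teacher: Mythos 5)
Your proposal is correct and matches the paper's proof essentially step for step: the first bound comes from enumerating connected size-$k$ sets with small open neighborhood via Lemma~\ref{lem:conn-subgr} and testing each with the $\Pi$-recognizer, and the second from the derandomized separating family of Lemma~\ref{lem:derand} applied with $a=k$, $b=t$, after which the solution appears as a connected component of some $G[S]$. Your explicit remark that one must still verify $|N_G(C)|\le t$ for each candidate component is subsumed in the paper's phrase ``verify whether $H$ gives us a solution,'' so there is no substantive difference.
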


\begin{proof}
Let $(G,c,\omega,k,t,w)$ be an instance of \probCSWCS.

First, we use Lemma~\ref{lem:conn-subgr} and in time $2^{k+t}\cdot n^{\Oh(1)}$ enumerate all connected $U\subseteq V(G)$ with $|U|=k$ and $|N_G(U)|\leq t$. By Lemma~\ref{lem:conn-subgr}, we have at most  $\binom{k+t}{t}tn$ sets. For every such a set $U$, we check in time $f(k)+O(k)$ whether the colored induced subgraph $G[U]$ satisfies $\Pi$ and $\omega(U)\geq w$. It is straightforward  to see that  $(G,c,\omega,k,t,w)$ is a yes-instance if and only if we find $U$ with these properties.

To construct the second algorithm, assume that  $(G,c,\omega,k,t,w)$ is a yes-instance. Then there is $U\subseteq V(G)$ such that $U$ is a connected $k$-vertex set such that $|N_G(U)|\leq t$, $\omega(U)\geq w$ and the colored graph $H=G[U]$ satisfies $\Pi$. Using Lemma~\ref{lem:derand}, we can construct in time $2^{\Oh(\min\{k,t\}\log (k+t))}\cdot n^{\Oh(1)}$ a family $\mathcal{S}$ of at most  $2^{\Oh(\min\{k,t\}\log (k+t))}\log n$ subsets of $V(G)$ such that the following holds: for any sets $A,B\subseteq V(G)$, $A\cap B=\emptyset$, $|A|\leq k$, $|B|\leq t$, there exists a set $S\in \mathcal{S}$ with $A\subseteq S$ and $B\cap S=\emptyset$. In particular, we have that there is $S\in \mathcal{S}$ such that $U\subseteq S$ and $N_G(U)\cap S=\emptyset$. It implies that $G[U]$ is a component of $G[S]$. 

Therefore, $(G,c,\omega,k,t,w)$ is a yes-instance if and only if there is $S\in\mathcal{S}$ such that a component of $G[S]$ is a solution for the instance. We construct the described set $\mathcal{S}$. Then for every $S\in\mathcal{S}$, we consider the components of  $G[S]$, and for every component $H$, we verify in time $f(k)+\Oh(k)$, whether $H$ gives us a solution. 
\end{proof}

Theorem~\ref{thm:CSWCS} immediately gives the following corollary.

\begin{corollary}\label{cor:CSWCS}
If $\Pi$ can be recognized in polynomial time, then \probCSWCS can be solved both in time $2^{k+t}\cdot n^{\Oh(1)}$, and in time $2^{\Oh(\min\{k,t\}\log (k+t))}\cdot n^{\Oh(1)}$.
\end{corollary}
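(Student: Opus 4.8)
\emph{Proof proposal.} The plan is to obtain the corollary as an immediate specialization of Theorem~\ref{thm:CSWCS}. The only ingredient needed is the observation that if $\Pi$ can be recognized in polynomial time, then the recognition-time function appearing in Theorem~\ref{thm:CSWCS} satisfies $f(n)=n^{\Oh(1)}$; concretely, there is a constant $c$ such that $f(n)\le n^c$ for all sufficiently large $n$.

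First I would dispose of a degenerate case so that the polynomial recognition bound can be applied at argument $k$ and then folded into an $n^{\Oh(1)}$ factor. If $k>n$, the instance $(G,c,\omega,k,t,w)$ is trivially a no-instance, since $G$ has no induced subgraph on $k$ vertices at all, and this can be detected in polynomial time; hence we may assume $k\le n$. Under this assumption $f(k)\le k^{c}\le n^{c}=n^{\Oh(1)}$.

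Now I invoke Theorem~\ref{thm:CSWCS}, which solves \probCSWCS\ in time $2^{k+t}\cdot f(k)\cdot n^{\Oh(1)}$ and in time $2^{\Oh(\min\{k,t\}\log(k+t))}\cdot f(k)\cdot n^{\Oh(1)}$. Substituting the bound $f(k)\le n^{\Oh(1)}$ and absorbing it into the already-present polynomial factor yields the two claimed running times $2^{k+t}\cdot n^{\Oh(1)}$ and $2^{\Oh(\min\{k,t\}\log(k+t))}\cdot n^{\Oh(1)}$, completing the argument. There is no genuine obstacle here: the statement is purely a restatement of Theorem~\ref{thm:CSWCS} specialized to polynomially recognizable $\Pi$, and the only (trivial) point worth spelling out is the reduction to the case $k\le n$ that legitimizes the absorption of $f(k)$ into $n^{\Oh(1)}$.
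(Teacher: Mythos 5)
Your proposal is correct and matches the paper exactly: the paper derives the corollary as an immediate consequence of Theorem~\ref{thm:CSWCS} by substituting $f(k)=k^{\Oh(1)}\le n^{\Oh(1)}$, which is precisely what you do. The extra remark disposing of the case $k>n$ is a harmless (and reasonable) bit of pedantry that the paper leaves implicit.
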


\section{\probLCSS for properties characterized by finite sets of forbidden induced subgraphs}\label{sec:fpt-forb}
In this section we show that \probLCSS is \classFPT parameterized by $t$ when $\Pi$ is characterized by a finite set of forbidden induced subgraphs. 
We refer to this restriction of our problem as \probLForb.
Throughout this section, we assume that we are given a fixed finite set $\mathcal{F}$ of graphs. 

Recall that to apply the recursive understanding technique introduced by Chitnis et al.~\cite{ChitnisCHPP16}, we should be able to recurse when the input graph contains a separator of bounded size that separates the graph into two sufficiently big parts. To do this, we have to combine partial solutions in both parts. A danger in our case is that a partial solution in one part might contain a subgraph of a graph in $\mathcal{F}$.
We have to avoid creating subgraphs belonging to $\mathcal{F}$ when we combine partial solutions.
To achieve this goal, we need some definitions and auxiliary combinatorial results.

Let $p$ be a nonnegative integer. A pair $(G,x)$, where $G$ is a graph and $x=(x_1,\ldots,x_p)$ is a $p$-tuple of distinct vertices of $G$, is called a \emph{$p$-boundaried graph} 
or simply a \emph{boundaried graph}. Respectively, $x=(x_1,\ldots,x_p)$ is a \emph{boundary}. Note that a boundary is an ordered set. Hence, two $p$-boundaried graphs that differ only by the order of the vertices in theirs boundaries are distinct. Observe also that a boundary could be empty. We say that $(G,x)$ is a \emph{properly $p$-boundaried graph} if each component of $G$ has at least one vertex of the boundary. Slightly abusing notation, we may say that $G$ is a ($p$-) boundaried graph assuming that a boundary is given.

Two $p$-boundaried  graphs $(G_1,x^{(1)})$ and $(G_2,x^{(2)})$, where $x^{(h)}=(x_1^{(h)},\ldots,x_p^{(h)})$ for $h=1,2$, are \emph{isomorphic} if there is an isomorphism of $G_1$ to $G_2$ that maps each $x_i^{(1)}$ to $x_i^{(2)}$ for $i\in\{1,\ldots,p\}$. 
We say that $(G_1,x^{(1)})$ and $(G_2,x^{(2)})$ are \emph{boundary-compatible} if  for any distinct  $i,j\in\{1,\ldots,p\}$, $x_i^{(1)}x_j^{(1)}\in E(G_1)$ if and only if $x_i^{(2)}x_j^{(2)}\in E(G_2)$.

Let  $(G_1,x^{(1)})$ and $(G_2,x^{(2)})$ be  boundary-compatible $p$-boundaried graphs and let  $x^{(h)}=(x_1^{(h)},\ldots,x_p^{(h)})$ for $h=1,2$.  We define the \emph{boundary sum} $(G_1,x^{(1)})\oplus_b(G_2,x^{(2)})$ (or  simply $G_1\oplus_b G_2$) as the (non-boundaried) graph obtained by taking disjoint copies of $G_1$ and $G_2$ and identifying $x_i^{(1)}$ and $x_i^{(2)}$ for each $i\in\{1,\ldots,p\}$.

 Let $G$ be a graph and let $y=(y_1,\ldots,y_p)$ be a $p$-tuple of vertices of $G$. For a $s$-boundaried graph $(H,x)$ with the boundary $x=(x_1,\ldots,x_s)$ and pairwise distinct $i_1,\ldots,i_s\in \{1,\ldots,p\}$, we say that $H$ is an \emph{induced boundaried subgraph of $G$ with respect to $(y_{i_1},\ldots,y_{i_s})$} if $G$ contains an induced subgraph $H'$ isomorphic to $H$ such that the corresponding isomorphism of $H$ to $H'$ maps $x_j$ to $y_{i_j}$ for $j\in\{1,\ldots,s\}$ and $V(H')\cap \{y_1,\ldots,y_p\}=\{y_{i_1},\ldots,y_{i_s}\}$.

We construct the set of boundaried graphs $\mathcal{F}_b$ as follows. For each $F\in \mathcal{F}$, each separation $(A,B)$ of $F$ and each $p=|A\cap B|$-tuple $x$ of the vertices of $(A\cap B)$, we include $(F[A],x)$ in $\mathcal{F}_b$ unless it already contains an isomorphic boundaried graph. 
We say that two properly $p$-boundaried 
graphs $(G_1,x^{(1)})$ and $(G_2,x^{(2)})$, where $x^{(h)}=(x_1^{(h)},\ldots,x_p^{(h)})$, are \emph{equivalent  (with respect to $\mathcal{F}_b$)} if 
\begin{itemize}
\item[(i)] $(G_1,x^{(1)})$ and $(G_2,x^{(2)})$ are boundary-compatible,
\item[(ii)] for any $i,j\in\{1,\ldots,p\}$, $x_i^{(1)}$ and $x_j^{(1)}$ are in the same component of $G_1$ if and only if $x_i^{(2)}$ and $x_j^{(2)}$ are in the same component of $G_2$,
\item[(iii)] for any pairwise distinct $i_1,\ldots,i_s\in\{1,\ldots,p\}$, $G_1$ contains an $s$-boundaried induced subgraph $H\in \mathcal{F}_b$
 with respect to the $s$-tuple $(x_{i_1}^{(1)},\ldots,x_{i_s}^{(1)})$ if and only if   $H$ is an  induced subgraph of $G_2$ with respect to the $s$-tuple $(x_{i_1}^{(2)},\ldots,x_{i_s}^{(2)})$.
\end{itemize}
It is straightforward to verify that the introduced relation is indeed an equivalence relation on the set of properly $p$-boundaried  graphs. 
The following property of the equivalence with respect to $\mathcal{F}_b$ is crucial for our algorithm.

\begin{lemma}\label{lem:eq}
Let $(G,x)$, $(H_1,y^{(1)})$ and $(H_2,y^{(2)})$ be  boundary-compatible $p$-boundaried graphs, $x=(x_1,\ldots,x_p)$ and $y^{(h)}=(y_1^{(h)},\ldots,y_p^{(h)})$ for $h=1,2$. 
If $(H_1,y^{(1)})$ and $(H_2,y^{(2)})$ are equivalent with respect to $\mathcal{F}_b$, then $(G,x)\oplus_b (H_1,y^{(1)})$ is $\mathcal{F}$-free if and only if $(G,x)\oplus_b (H_2,y^{(2)})$ is $\mathcal{F}$-free.
\end{lemma}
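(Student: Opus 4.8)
The plan is to argue by contradiction and exploit symmetry: it suffices to show that if $(G,x)\oplus_b(H_1,y^{(1)})$ contains an induced copy of some $F\in\mathcal{F}$, then so does $(G,x)\oplus_b(H_2,y^{(2)})$; the reverse implication follows by swapping the roles of $H_1$ and $H_2$, which is legitimate since the equivalence relation with respect to $\mathcal{F}_b$ is symmetric. So suppose $G\oplus_b H_1$ has an induced subgraph $F'$ isomorphic to some $F\in\mathcal{F}$. Write $Z=\{x_1,\dots,x_p\}$ for the (identified) boundary vertices inside $G\oplus_b H_1$, and split the vertex set of $F'$ as $V(F')=P\cup Q$, where $P=V(F')\cap (V(G)\setminus Z')$ lies strictly on the $G$-side, $Q=V(F')\cap(V(H_1)\setminus Z'')$ lies strictly on the $H_1$-side, and $R=V(F')\cap Z$ is the part of $F'$ on the shared boundary. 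Since $(A,B):=(P\cup R,\,Q\cup R)$ is a separation of $F'$ (there are no edges between the $G$-side-only part and the $H_1$-side-only part in a boundary sum), the pair $(F'[Q\cup R], x_R)$ — where $x_R$ lists the vertices of $R$ in a fixed order — is, up to isomorphism, one of the boundaried graphs placed into $\mathcal{F}_b$ when we processed $F$ and this separation.

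Next I would transport the $H_1$-side portion to the $H_2$-side. Let $I=\{i : x_i\in R\}$ be the index set of boundary vertices used by $F'$. The subgraph $F'[Q\cup R]$, sitting inside $G\oplus_b H_1$, is precisely an $|I|$-boundaried induced subgraph of $H_1$ with respect to the tuple $(y_i^{(1)})_{i\in I}$: its vertices outside the boundary, namely $Q$, avoid $\{y_1^{(1)},\dots,y_p^{(1)}\}$ (because $F'$ meets the boundary exactly in $R$), and it induces in $H_1$ the member of $\mathcal{F}_b$ identified above. Moreover, because $\mathcal{F}_b$ was built by collecting one representative per isomorphism class and condition (i) of equivalence guarantees boundary-compatibility, the boundaried graph $(F'[Q\cup R],x_R)$ really is isomorphic to an element of $\mathcal{F}_b$ with the boundary matched up. Now apply condition (iii) of equivalence: since $H_1$ has this element of $\mathcal{F}_b$ as an induced boundaried subgraph with respect to $(y_i^{(1)})_{i\in I}$, so does $H_2$ with respect to $(y_i^{(2)})_{i\in I}$. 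Denote by $Q'$ the image of $Q$ under the corresponding isomorphism, so that $H_2[Q'\cup \{y_i^{(2)}:i\in I\}]$ is isomorphic to $F'[Q\cup R]$ via an isomorphism fixing the boundary pointwise (in the indexed sense) and with $Q'\cap\{y_1^{(2)},\dots,y_p^{(2)}\}=\emptyset$.

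Finally I would glue: define the candidate copy of $F$ in $G\oplus_b H_2$ to be the vertex set $W:=P\cup \{x_i:i\in I\}\cup Q'$, where here $\{x_i:i\in I\}$ denotes the identified boundary vertices (equivalently $y_i^{(2)}$ for $i\in I$). The claimed isomorphism $F'\to (G\oplus_b H_2)[W]$ is the identity on $P\cup R$ and the transport map $Q\to Q'$ on the $H_1$-side. We must check it is an isomorphism of induced subgraphs; this is where the bookkeeping lives. Edges within $P\cup R$ are unchanged because that part of the graph sits identically in $G\oplus_b H_1$ and $G\oplus_b H_2$ (it is contained in the $G$-side, and $R$ consists of boundary vertices whose mutual adjacencies agree by boundary-compatibility of $(G,x)$ with both $H_i$). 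Edges within $Q\cup R$ are preserved by construction of the transport map. Edges between $P$ and $Q$: there are none in $G\oplus_b H_1$ (no $G$-side-only to $H_1$-side-only edges), and none in $G\oplus_b H_2$ between $P$ and $Q'$ for the same reason, so this is consistent. Edges between $P$ and $R$, and between $Q'$ and $R$: the former are internal to the $G$-side and unchanged; the latter are exactly the edges between $Q$ and $R$ by the transport isomorphism. The one genuinely delicate point — and the step I expect to be the main obstacle — is making sure that the separation-and-reassembly is set up so that $(F'[Q\cup R],x_R)$ lands in $\mathcal{F}_b$ with the \emph{right} boundary ordering, i.e. that the definition of $\mathcal{F}_b$ (which quantifies over \emph{all} $p$-tuples of the vertices of $A\cap B$) and the definition of ``induced boundaried subgraph with respect to $(y_{i_1},\dots,y_{i_s})$'' line up on the nose with the particular indexing $I$ inherited from $F'$; once the indexing is pinned down, conditions (i)--(iii) do exactly what is needed and no adjacency can be missed. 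Carrying out this indexing carefully, and checking that no non-edge of $F'$ becomes an edge in $(G\oplus_b H_2)[W]$ (a symmetric check using that $F'$ is an \emph{induced} subgraph), completes the argument.
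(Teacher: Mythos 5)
Your proposal is correct and follows essentially the same route as the paper's proof: reduce to one direction by symmetry, decompose the induced copy of $F$ across the identified boundary into a $G$-side part and an $H_1$-side boundaried part, observe that the latter is (isomorphic to) a member of $\mathcal{F}_b$ by the construction of $\mathcal{F}_b$ from all separations of graphs in $\mathcal{F}$, transport it into $H_2$ via condition (iii) of the equivalence, and glue. The extra bookkeeping you flag (boundary indexing and the induced-subgraph check across the two sides) is exactly what the paper's terser argument implicitly relies on, and your handling of it is sound.
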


\begin{proof}
By symmetry, it is sufficient to show that if $G\oplus_b H_1$ is not  $\mathcal{F}$-free, then the same hold for $G\oplus_b H_2$. 
Suppose that $F$ is an induced subgraph of $G\oplus_b H_1$ isomorphic to a graph of $\mathcal{F}$.
If $V(F)\subseteq V(G)$, then the claim is trivial. Suppose that this is not the case and $V(F)\cap V(H_1)\neq\emptyset$. Recall that $G\oplus_b H_1$ is obtained by identifying each $x_i$ and $y_i^{(1)}$. Denote the identified vertices by $y_1^{(1)},\ldots, y_p^{(1)}$.
 Let $F_1=F[V(F)\cap V(H_1)]$ and $F'=F[V(F)\cap V(G)]$; note that $F'$ could be empty. 
 Let $\{y_{i_1}^{(1)},\ldots,y_{i_s}^{(1)}\}=V(F)\cap \{y_1^{(1)},\ldots, y_p^{(1)}\}$; note that this set could be empty. 
Clearly, $(F_1,(y_{i_1}^{(1)},\ldots,y_{i_s}^{(1)}))$ is an $s$-boundaried subgraph of $H_1$ with respect to $(y_1^{(1)},\ldots,y_p^{1})$.
Observe that $\mathcal{F}_b$ contains an $s$-boundaried graph isomorphic to $(F_1,(y_{i_1}^{(1)},\ldots,y_{i_s}^{(1)}))$. Because $H_1$ and $H_2$ are equivalent with respect to $\mathcal{F}_b$, there is an induced $s$-boundaried subgraph
$(F_2,(y_{i_1}^{(2)},\ldots,y_{i_s}^{(2)}))$ of $H_2$ with respect to $(y_1^{(2)},\ldots,y_p^{(2)})$ isomorphic to $(F_1,(y_{i_1}^{(1)},\ldots,y_{i_s}^{(1)}))$. Then $F'\oplus_b F_2$ is isomorphic to $F$, that is, $G\oplus_b H_2$ contains $F$ as an induced subgraph.
\end{proof}

\begin{lemma}\label{lem:eq-check}
It can be checked in time $(|V(G_1)|+|V(G_2)|)^{\Oh(1)}$ whether two properly $p$-boundaried graphs $G_1$ and $G_2$ are equivalent, and the constant hidden in the $\Oh$-notation depends on $\mathcal{F}$ only. 
\end{lemma}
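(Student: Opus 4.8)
The plan is to verify directly that each of the three conditions (i), (ii), (iii) defining equivalence can be tested in polynomial time with the degree of the polynomial and the constants depending only on $\mathcal{F}$, and hence only on the maximum number of vertices $f = \max_{F \in \mathcal{F}} |V(F)|$. Set $n_1 = |V(G_1)|$, $n_2 = |V(G_2)|$, $N = n_1 + n_2$. For (i), boundary-compatibility, we simply compare, for every pair $i,j \in \{1,\dots,p\}$, whether $x_i^{(1)}x_j^{(1)} \in E(G_1)$ iff $x_i^{(2)}x_j^{(2)} \in E(G_2)$; this is $\Oh(p^2)$ time. For (ii), we compute the connected components of $G_1$ and of $G_2$ (in $\Oh(N + |E(G_1)| + |E(G_2)|) = \Oh(N^2)$ time), and then for every pair $i,j$ check whether $x_i^{(1)}, x_j^{(1)}$ lie in the same component of $G_1$ iff $x_i^{(2)}, x_j^{(2)}$ lie in the same component of $G_2$; again $\Oh(p^2)$ pairs. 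Note $p$ is bounded by $f$ (a boundaried graph in $\mathcal{F}_b$ has boundary size at most $|A \cap B| \le |V(F)| \le f$, and condition on $p$-boundaried graphs is only nontrivial when such subgraphs exist, but in any case $p \le N$ always suffices as a crude bound, and the number of pairs is polynomial).

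The substantive condition is (iii). First I would precompute $\mathcal{F}_b$; since $\mathcal{F}$ is fixed and finite, $\mathcal{F}_b$ is a fixed finite set of boundaried graphs, each on at most $f$ vertices with boundary size at most $f$. For a fixed boundaried graph $(H,x) \in \mathcal{F}_b$ with $|V(H)| = h \le f$ and $|x| = s$, and for a fixed $s$-tuple $(y_{i_1}, \dots, y_{i_s})$ of boundary vertices of $G_\ell$, deciding whether $H$ is an induced boundaried subgraph of $G_\ell$ with respect to that tuple amounts to checking whether there is a set of $h - s$ further vertices of $G_\ell$, disjoint from the whole boundary $\{y_1,\dots,y_p\}$, such that the induced subgraph on these together with $y_{i_1},\dots,y_{i_s}$ is isomorphic to $H$ via the map sending $x_j \mapsto y_{i_j}$. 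Since $h - s \le f$ is a constant, we can brute-force over all $\binom{n_\ell}{h-s} = \Oh(n_\ell^{f})$ choices of the extra vertices and check the induced-isomorphism condition for each in constant time (the candidate subgraph has at most $f$ vertices). We then iterate this over all $(H,x) \in \mathcal{F}_b$ (constantly many), all choices of which boundary indices $i_1 < \dots < i_s$ to use (at most $2^p$, but since a match requires $s \le f$ we may restrict to $s$-subsets with $s \le f$, giving $\Oh(p^f)$ choices) and all orderings of that subset ($s! \le f!$ many). For each such combination we record, for $G_1$ and for $G_2$, the Boolean "does such an induced boundaried subgraph exist", and condition (iii) holds iff these two Booleans agree for every combination. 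The total running time is $\Oh(N^{f} \cdot g(\mathcal{F}))$ for some function $g$ of $\mathcal{F}$ only, which is $(n_1 + n_2)^{\Oh(1)}$ with the implied constant depending only on $\mathcal{F}$, as required.

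The main obstacle — really the only point needing care — is bookkeeping around condition (iii): one must be careful that the notion of induced boundaried subgraph "with respect to $(y_{i_1},\dots,y_{i_s})$" requires the witnessing induced subgraph $H'$ to meet the boundary $\{y_1,\dots,y_p\}$ in exactly $\{y_{i_1},\dots,y_{i_s}\}$ and no other boundary vertex, so when enumerating the $h-s$ extra vertices we must forbid all of $\{y_1,\dots,y_p\}$, not merely the ones already used. Also one must track the correspondence of boundary orderings correctly, since $\mathcal{F}_b$ contains boundaried graphs as ordered-boundary objects and the tuples $(x_{i_1}^{(\ell)},\dots,x_{i_s}^{(\ell)})$ inherit the order from $x^{(\ell)}$; matching these against the two possible $G_\ell$ requires that the same index set $\{i_1,\dots,i_s\}$ in the same order be used on both sides, which is exactly what condition (iii) quantifies over, so the algorithm just loops over ordered tuples of distinct indices. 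None of this affects the polynomial bound. Finally, I would remark that $p$ itself need not be bounded by a constant — two $p$-boundaried graphs can have arbitrarily large $p$ — but since equivalence is only ever invoked in our algorithm for $p \le \Oh(t)$, and in any case the running time $\Oh(N^{\Oh(1)})$ above has its exponent depending only on $f$ while the $p$-dependence enters only polynomially, the stated bound holds for all $p$.
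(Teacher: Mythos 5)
Your proposal is correct and follows essentially the same route as the paper: conditions (i) and (ii) are checked directly, and (iii) is verified by brute force over the constantly many graphs in $\mathcal{F}_b$, the polynomially many (ordered) tuples of boundary indices of length at most the maximum boundary size in $\mathcal{F}_b$, and the polynomially many candidate vertex subsets of bounded size, exactly as in the paper's counting argument. Your extra care about excluding all boundary vertices other than $\{y_{i_1},\ldots,y_{i_s}\}$ and about respecting the boundary ordering is a welcome clarification but does not change the argument.
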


\begin{proof}
Let $(G_1,x^{(1)})$ and $(G_2,x^{(2)})$, where $x^{(h)}=(x_1^{(h)},\ldots,x_p^{(h)})$ for $h=1,2$, be two boundaried graphs. Clearly, the conditions (i) and (ii) of the definition of the equivalence with respect to $\mathcal{F}$ can be checked in polynomial time. 
To verify (iii), let $a=|\mathcal{F}_b|$, $b$ be the maximum size of the boundary of graphs in $\mathcal{F}_b$ and let $c$ be the maximum number of vertices of a graph in $\mathcal{F}_b$. 
Clearly, the values of $a$, $b$ and $c$ depend on $\mathcal{F}$ only. 
For each $s$-tuple of indices $(i_1,\ldots,i_s)$ where $s\leq b$, we check whether an $s$-boundaried graph $H\in \mathcal{F}_b$ is an $s$-boundaried induced subgraph of $G_1$ and $G_2$ with respect to $(x_{i_1}^{(1)},\ldots,x_{i_s}^{(1)})$ and $(x_{i_1}^{(2)},\ldots,x_{i_s}^{(2)})$ respectively. Since there are at most $bp^b$ $s$-tuples of indices $(i_1,\ldots,i_s)$, at most $a$ graphs
in $\mathcal{F}_b$ and  $G_h$ has at most $c|V(G_h)|^c$ induced subgraphs with at most $c$ vertices for $h=1,2$, we have that (iii) can be checked in polynomial time.
\end{proof}

For each nonegative integer $p$, we consider a set $\mathcal{G}_p$ of properly $p$-boundaried  graphs obtained by picking a graph with minimum number of vertices in each equivalence class.  
We show that the size of $\mathcal{G}_p$ and the size of each graph in the set is upper bounded by some functions of $p$, and this set  can be constructed in time that depends only on $p$ assuming that $\mathcal{F}_b$ is fixed.  
We need the following observation made by Fomin et al.~\cite{FominGKK16}.

\begin{lemma}[\cite{FominGKK16}]\label{lem:bigdeg}
Let $G$ be a connected graph and $S\subseteq V(G)$. Let $F$ be an inclusion minimal connected induced subgraph of $G$ such that $S\subseteq V(F)$ and $X=\{v\in V(F)|d_F(v)\geq 3\}\cup S$. Then $|X|\leq 4|S|-6$.
\end{lemma}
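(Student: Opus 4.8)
The plan is to use the inclusion-minimality of $F$ twice: first to reduce $F$ to a spanning tree all of whose leaves must lie in $S$, and then to control the extra edges that appear because $F$ is \emph{induced} rather than a tree. The basic observation is that if $v\in V(F)\setminus S$, then $F-v$ is disconnected---otherwise it would be a smaller connected induced subgraph containing $S$---so $v$ is a cut vertex of $F$; moreover every component of $F-v$ meets $S$, since deleting a component disjoint from $S$ would again give a smaller connected induced subgraph containing $S$.

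Next I would let $T$ be an inclusion-minimal connected subgraph of $F$ with $S\subseteq V(T)$. Such a $T$ is a tree, and since $F[V(T)]$ is a connected induced subgraph of $F$ containing $S$, minimality of $F$ forces $V(T)=V(F)$, so $T$ is a spanning tree of $F$. Minimality of $T$ forces every leaf of $T$ into $S$, whence $T$ has at most $|S|$ leaves and therefore at most $|S|-2$ vertices of degree at least $3$, its branch vertices (we may assume $|S|\ge 2$). I would also record that for any non-tree edge $uv$ and any vertex $w$ with $d_T(w)=2$ lying on the tree-path between $u$ and $v$, the graph $F-w$ is still connected (the edge $uv$ reconnects the two subtrees of $T-w$), so $w\in S$ by minimality. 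In particular the vertices of $X\setminus S$ that already have degree at least $3$ in $T$ number at most $|S|-2$.

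It remains to bound $D_2:=\{\,v\in V(F)\setminus S:\ d_F(v)\ge 3,\ d_T(v)=2\,\}$, which is the technical heart of the argument. Each $v\in D_2$ is incident with a non-tree edge, and removing $v$ splits $T$, and hence (by the first paragraph) $F$, into exactly two subtrees, each meeting $S$; rooting $T$ at a leaf and writing $B_v$ for the one not containing the root, $B_v$ is a subtree of $T$ containing a leaf of $T$, the map $v\mapsto B_v$ is injective, and the sets $B_v$ form a laminar family. The fact recorded above prevents a non-tree edge incident to a vertex $v\in D_2$ from having its other endpoint strictly below a vertex $v'\in D_2$ that lies strictly below $v$ (such an edge would join the two components of $F-v'$), and it forces the tree-path of every non-tree edge incident to $D_2$ to contain a branch vertex of $T$ or a vertex of $S$ in its interior; since there are at most $|S|-2$ branch vertices and at most $|S|$ vertices in $S$, a counting argument of this kind yields $|D_2|\le 2|S|-4$. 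Combining the three estimates, $|X|=|S|+|D_2|+|\{v\in X\setminus S:\ d_T(v)\ge 3\}|\le |S|+(2|S|-4)+(|S|-2)=4|S|-6$, and this bound is tight---for instance when $F$ is a triangle with one pendant vertex attached to each corner and $S$ is the set of the three pendant vertices.

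The main obstacle is the last paragraph: a degree-$2$ vertex of the spanning tree $T$ may still have large degree in $F$ via non-tree edges, and turning the above outline into a proof---showing that the cut-vertex structure together with the induced-ness constraint admits at most $2|S|-4$ such vertices---is where genuine care is needed. The increase from the factor $2$ valid for an ordinary Steiner tree to the factor $4$ here is precisely the overhead of working with induced subgraphs.
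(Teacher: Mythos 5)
Your structural groundwork is sound: minimality of $F$ does imply that every $v\in V(F)\setminus S$ is a cut vertex all of whose components meet $S$; a minimal Steiner tree $T$ inside $F$ is indeed spanning (since $F[V(T)]$ is connected, induced and contains $S$), its leaves lie in $S$, so it has at most $|S|-2$ branch vertices; and your key observation — that any vertex of $T$-degree $2$ lying strictly inside the tree-path of a non-tree edge would leave $F-w$ connected and hence must belong to $S$ — is correct and is exactly the kind of fact one needs. Your tightness example (triangle with three pendants) is also right, and the target split $|X|\le |S|+(|S|-2)+|D_2|$ with $|D_2|\le 2|S|-4$ is consistent with tight instances. (For what it is worth, the paper itself gives no proof to compare against: it imports the lemma from Fomin et al., so your argument has to stand on its own.)

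The genuine gap is the step you yourself flag: the bound $|D_2|\le 2|S|-4$ is asserted, and the counting you gesture at does not deliver it. Knowing that every fundamental path of a non-tree edge incident to $D_2$ has a branch vertex or an $S$-vertex in its interior, together with ``at most $|S|-2$ branch vertices and at most $|S|$ vertices of $S$'', bounds nothing by itself, because one special vertex can serve arbitrarily many $D_2$ vertices. Concretely, let $w$ be a branch vertex joined by long segments to $d$ leaves $s_1,\ldots,s_d\in S$, let $v_1,\ldots,v_d$ be the segment vertices adjacent to $w$, and add a perfect matching of non-tree edges $v_1v_2, v_3v_4,\ldots$; this $F$ is inclusion-minimal (every non-$S$ vertex separates two $S$-vertices), every $v_i$ lies in $D_2$, and all of them charge to the single vertex $w$, so no injective charge to special vertices exists. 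The natural refinement — decompose $T$ into its $\le 2|S|-3$ maximal segments between special vertices and note that only the (at most two) interior vertices adjacent to a segment's endpoints can be in $D_2$ — only yields $|D_2|\le 4|S|-6$, hence $|X|\le 6|S|-8$, which is strictly weaker than the claimed $4|S|-6$. Closing this factor requires a genuinely finer analysis (e.g., how the $T$-degree of a branch vertex trades off against the leaf budget, and when both near-endpoint vertices of a segment can simultaneously carry non-tree edges), and none of that is in the proposal; as written, the heart of the lemma remains unproved.
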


\begin{lemma}\label{lem:size-eq}
For every positive integer $p$, $|\mathcal{G}_p|=2^{\Oh(p^2)}$, and for every $H\in \mathcal{G}_p$, $|V(H)|=p^{\Oh(1)}$, where the constants hidden in the $\Oh$-notations depend on $\mathcal{F}$ only. 
Moreover, for every $p$-boundaried graph $G$, the number of $p$-boundaried graphs in $\mathcal{G}_p$ that are compatible with $G$ is $2^{\Oh(p\log p)}$.
\end{lemma}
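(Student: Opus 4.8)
The plan is to establish the bound on the number of vertices of a minimum representative first, and then to read off the two counting statements from it. To bound $|V(H)|$ over $H\in\mathcal{G}_p$, I would take an arbitrary properly $p$-boundaried graph $(G,x)$ with $x=(x_1,\dots,x_p)$ and exhibit an equivalent boundaried graph on $p^{\Oh(1)}$ vertices; the minimum representative of its class is then no larger. Begin with $W=\{x_1,\dots,x_p\}$. To respect condition~(iii) of the equivalence, for every pair consisting of a member $H\in\mathcal{F}_b$ and a tuple $\tau$ of boundary indices for which $G$ contains a boundaried induced copy of $H$ with respect to $\tau$, add the vertices of one such copy to $W$; since $|\mathcal{F}_b|$ and the orders of its members are bounded by constants depending only on $\mathcal{F}$, and there are only $p^{\Oh(1)}$ such pairs, this adds $p^{\Oh(1)}$ vertices. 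To respect condition~(ii), for each component $C$ of $G$ meeting $W$, apply Lemma~\ref{lem:bigdeg} to $G[C]$ with $S=W\cap C$ to obtain an inclusion-minimal connected induced subgraph $F_C$ of $G[C]$ containing $S$ in which only $\Oh(|S|)$ vertices have degree at least $3$, the rest lying on internally degree-$2$ induced paths joining them. So far everything is an induced subgraph of $G$, hence acquires no $\mathcal{F}_b$-subgraph absent from $G$, which already settles one direction of~(iii). It remains to shorten each of the (possibly long) internally degree-$2$ induced paths inside the $F_C$'s down to a bounded number of vertices; after this the graph has $p^{\Oh(1)}$ vertices, as required.

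Verifying that the shortening preserves the equivalence type is what I expect to be the \emph{main obstacle}. Conditions~(i) and~(ii) are immediate, since the boundary is untouched and the endpoints of a shortened path stay connected, so all the work is in condition~(iii). No required boundaried $\mathcal{F}_b$-subgraph is destroyed: a member of $\mathcal{F}_b$ has only a bounded number of vertices, so any induced copy of it meeting the interior of such a path uses only a bounded-length stretch of it, which I keep. The harder direction is that no new $\mathcal{F}_b$-subgraph is created; for this I would keep the shortened path long enough that no connected vertex set of size at most $\max_{H\in\mathcal{F}_b}|V(H)|$ can simultaneously meet the ``deep interior'' of the path and a vertex outside the path, which forces every induced $\mathcal{F}_b$-copy in the new graph, once the identification of the path's endpoints is undone, to be an induced copy already present in $G$ with the same boundary tuple. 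Getting the constants here right is also what pins down the exponent hidden in the $p^{\Oh(1)}$ bound on $|V(H)|$.

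Given the size bound, $|\mathcal{G}_p|$ is at most the number of non-isomorphic boundaried graphs of the shape produced above. Such a graph decomposes into a dense part, namely the arbitrary graph placed on the $p$-vertex boundary ($2^{\binom{p}{2}}$ possibilities), and the remaining $p^{\Oh(1)}$ non-boundary vertices, which by construction form only the few branch vertices and short connecting paths of the skeletons $F_C$ together with the $p^{\Oh(1)}$ reserved $\mathcal{F}_b$-witnesses, plus the choice of a same-component partition of the boundary ($B_p=2^{\Oh(p\log p)}$ of these). A careful accounting of these pieces, using Lemma~\ref{lem:eq-check} to merge isomorphic copies, bounds the number of types by $2^{\Oh(p^2)}$; taking one minimum representative per type then gives $|\mathcal{G}_p|=2^{\Oh(p^2)}$, and the same argument makes $\mathcal{G}_p$ constructible in time depending only on $p$ and $\mathcal{F}$. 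Finally, for a fixed $p$-boundaried graph $G$, being compatible with $G$ pins down the graph on the boundary and thus removes exactly the $2^{\binom{p}{2}}$ factor; only the partition and the sparse skeleton-plus-witnesses remain free, contributing $2^{\Oh(p\log p)}$, so at most $2^{\Oh(p\log p)}$ members of $\mathcal{G}_p$ are compatible with $G$.
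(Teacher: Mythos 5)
Your treatment of the size of the representatives is essentially the paper's argument in constructive clothing: keep the boundary, keep one witness copy for each realized pair (boundary tuple, member of $\mathcal{F}_b$), connect the kept vertices inside each component via Lemma~\ref{lem:bigdeg}, and shorten long degree-two paths, using that every member of $\mathcal{F}_b$ has at most $c$ vertices and therefore misses part of any sufficiently long path, so copies lift back after the shortening is undone. The paper argues extremally instead (a minimum representative must already look like this, otherwise contracting an edge of a long path yields a smaller equivalent graph), but the content and the resulting $p^{\Oh(1)}$ bound are the same, so this part is fine.

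The two counting claims are where your proposal has a genuine gap. You bound $|\mathcal{G}_p|$, and later the number of compatible members, by the number of non-isomorphic boundaried graphs ``of the shape produced above'', claiming that outside the boundary only a sparse skeleton plus witnesses remains. But your representative is an \emph{induced} subgraph of an arbitrary graph on $p^{\Oh(1)}$ kept vertices: the witness vertices and branch vertices may be adjacent to arbitrary subsets of the boundary and of each other, and there can be far more than $p$ of them (already up to order $p^{b}$ witness vertices, where $b$ is the largest boundary size in $\mathcal{F}_b$). Consequently the number of isomorphism types of such boundaried graphs is $2^{p^{\Theta(1)}}$ in general, not $2^{\Oh(p^2)}$, and fixing the adjacency on the boundary does not reduce it to $2^{\Oh(p\log p)}$; the ``careful accounting'' you defer to cannot be carried out along these lines. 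The paper sidesteps this entirely: $\mathcal{G}_p$ contains exactly one graph per equivalence class, and the class of a properly $p$-boundaried graph is determined by the three invariants in the definition of the equivalence --- the adjacency among the boundary vertices ($2^{\binom{p}{2}}$ choices), the partition of the boundary according to components ($B_p=2^{\Oh(p\log p)}$ choices), and the record of which boundaried members of $\mathcal{F}_b$ occur with respect to which boundary tuples (at most $bp^b$ relevant tuples and at most $2^a$ choices of members per tuple, with $a,b$ depending on $\mathcal{F}$ only). Counting possible values of these invariants gives $|\mathcal{G}_p|=2^{\Oh(p^2)}$ with no reference to the size of the representatives, and compatibility with a fixed $G$ freezes the first invariant, which is exactly what yields the $2^{\Oh(p\log p)}$ bound in the ``moreover'' part. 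Your size bound is still needed, but only for the second claim of the lemma and for the brute-force construction in Lemma~\ref{lem:eq-constr}, not as the basis of the counting.
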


\begin{proof}
Let $a=|\mathcal{F}_b|$, $b$ be the maximum size of the boundary of graphs in $\mathcal{F}_b$ and let $c$ be the maximum number of vertices of a graph in $\mathcal{F}_b$. 
Clearly, the values of $a$, $b$ and $c$ depend on $\mathcal{F}$ only. 
Assume that the boundary $x=(x_1,\ldots,x_p)$ is fixed. 

There are $2^{\binom{p}{2}}$ possibilities to select a set of edges with both end-vertices in $\{x_1,\ldots,x_p\}$. 
The number of possible partitions of the boundary into components is the Bell number $B_p$ and $B_p=2^{\Oh(p\log p)}$.
The number of $s$-tuples of vertices of $\{x_1,\ldots,x_p\}$ that could be boundaries of the copies of $s$-boundaried induced subgraphs  $H\in \mathcal{F}_b$  is at most $bp^b$. 
Respectively, the number of distinct equivalence classes is at most $2^{\binom{p}{2}}B_p bp^b2^a$, that is, $|\mathcal{G}_p|\leq 2^{\binom{p}{2}}B_pbp^b2^a=2^{\Oh(p^2)}$.  

Let $G$ be a $p$-boundaried graph in one of the classes with minimum number of vertices. Notice that for each  $s$-tuple of vertices of $\{x_1,\ldots,x_p\}$, $G$ can contain several copies of the same $H$ as an induced subgraph with respect to this $s$-tuple. In this case we pick one of them and obtain that  $G$ contains at most $bp^b2^a$ distinct boundaried induced subgraphs $H\in \mathcal{F}_b$. 
Let $S$ be the set of vertices of $G$ that belong to these subgraphs or to the boundary $x$. We have that $|S|\leq bp^b2^ac+p$. Let $X=\{v\in V(G)\mid d_G(v)\geq 3\}\cup S$. By applying Lemma~\ref{lem:bigdeg} to each component of $G$, we obtain that $|X|\leq 4|S|-6$.  

By the minimality of $G$, every vertex of degree one is in $S$. Hence, $Y=V(G)\setminus X$ contains only vertices of degree two and, therefore,  $G[Y]$ is a union of disjoint paths.
Observe that by the minimality of $G$, each vertex of $Y$ is a cut vertex of the component of $G$ containing it. It implies that $G[Y]$ contains at most $|X|-1$ paths. Suppose that $G[Y]$ contains a path $P$ with at least $2c+2$ vertices. Let $G'$ be the graph obtained from $G$ by the contraction of one edge of $P$. We claim that $G$ and $G'$ are equivalent with respect to $\mathcal{F}_b$. Since the end-vertices of the contracted edges are not the vertices of the boundary, the conditions (i) and (ii) of the definition of the equivalence are fulfilled. 
Therefore, it is sufficient to verify (iii). Let 
$i_1,\ldots,i_s\in\{1,\ldots,p\}$.  Suppose that $G$ contains an $s$-boundaried induced subgraph $H\in \mathcal{F}_b$ with respect to the $s$-tuple $(x_{i_1},\ldots,x_{i_s})$. Then at least two adjacent vertices of $P$ are not included in the copy of $H$ in $G$. It implies that 
$H$ is an  induced subgraph of $G'$ with respect to $(x_{i_1},\ldots,x_{i_s})$. Suppose that $G'$ contains an $s$-boundaried induced subgraph $H\in \mathcal{F}_b$ with respect to the $s$-tuple $(x_{i_1},\ldots,x_{i_s})$. Then at least one vertex of $P$ is not included in the copy of $H$ in $G'$. Then 
$H$ is an  induced subgraph of $G'$ with respect to $(x_{i_1},\ldots,x_{i_s})$. But the equivalence of $G$ and $G'$ contradicts the minimality of $G$. We conclude that each path in $G[Y]$ contains at most $2c+1$ vertices. Then the total number of vertices of $G$ is at most $|X|+(|X|-1)(2c+1)=p^{\Oh(1)}$.

To see that for any $p$-boundaried graph $G$, the number of  graphs in $\mathcal{G}_p$ that are compatible with $G$ is $2^{\Oh(p\log p)}$, notice that if $(H,(x_1,\ldots,x_p))\in \mathcal{G}_p$ and is compatible with $G$, then the adjacency between the vertices of the boundary is defined by $G$. 
Then the number of $s$-tuples of vertices of $\{x_1,\ldots,x_p\}$ that could be boundaries of the copies of $s$-boundaried induced subgraphs  from $\mathcal{F}_b$  is at most $bp^b$ and for each $s$-tuple we can have at most $2^a$ $s$-boundaried induced subgraphs  from $\mathcal{F}_b$.
Taking into account that there $2^{\Oh(p\log p)}$ possibilities for the verties of the boundary be partitioned according to their inclusions in the components, we obtain the claim.
\end{proof}

Consider now the class $\mathcal{C}$ of $p$-boundaried graphs, such that a $p$-boundaried graph \linebreak $(G,(x_1,\ldots,x_p\})$ $\in\mathcal{C}$ if and only if it holds that for every component $H$ of $G-\{x_1,\ldots,x_p\}$, $N_G(V(H))=\{x_1,\ldots,x_p\}$. We consider our equivalence relation with respect to $\mathcal{F}_b$ on $\mathcal{C}$ and define $\mathcal{G}_p'$ as follows. 
In each equivalence class, we select a graph  $(G,(x_1,\ldots,x_p\})\in\mathcal{C}$ such that both the number of components of $G-\{x_1,\ldots,x_p\}$ is minimum and the number of vertices of $G$ is minimum subject to the first condition, and then include it in $\mathcal{G}_p'$.

\begin{lemma}\label{lem:size-eq-prime}
For every positive integer $p$, $|\mathcal{G}_p'|=2^{\Oh(p^2)}$, and for each $H\in \mathcal{G}_p'$, $|V(H)|=p^{\Oh(1)}$, and the constants hidden in the $\Oh$-notations depend on $\mathcal{F}$ only. 
Moreover, for any $p$-boundaried graph $G$, the number of $p$-boundaried graphs in $\mathcal{G}_p'$ that are compatible with $G$ is $p^{\Oh(1)}$.
\end{lemma}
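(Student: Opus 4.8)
The plan is to run the proof of Lemma~\ref{lem:size-eq} essentially verbatim, inserting one extra argument that exploits the two structural properties built into $\mathcal{C}$: in a $\mathcal{C}$-graph every component of $G-\{x_1,\ldots,x_p\}$ is adjacent to \emph{all} of $x_1,\ldots,x_p$, and there are no edges between two such components. For the bound on $|\mathcal{G}_p'|$, note that for $p\geq 1$ every graph in $\mathcal{C}$ is properly $p$-boundaried (a component containing an interior vertex $v$ contains, through the $(G-\{x_1,\ldots,x_p\})$-component of $v$ whose neighbourhood is the whole nonempty boundary, some $x_j$), so the equivalence relation used on $\mathcal{C}$ is the restriction of the one defining $\mathcal{G}_p$; hence the number of equivalence classes, and thus $|\mathcal{G}_p'|$, is at most $|\mathcal{G}_p|=2^{\Oh(p^2)}$ by Lemma~\ref{lem:size-eq}. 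It remains to bound $|V(H)|$ and the compatibility count, and for $|V(H)|$ the plan is to first bound the number $\ell$ of components of $H-\{x_1,\ldots,x_p\}$ and then reuse the path–contraction argument of Lemma~\ref{lem:size-eq}.

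To bound $\ell$: for every tuple $\tau$ of distinct boundary vertices of length at most the maximum boundary size occurring in $\mathcal{F}_b$, and every $\hat H\in\mathcal{F}_b$ that occurs as an induced boundaried subgraph of $H$ with respect to $\tau$, fix one such occurrence; it uses $\Oh(1)$ non-boundary vertices and so meets $\Oh(1)$ of the components $H_1,\ldots,H_\ell$, whence the set $\mathcal{H}^\ast$ of components meeting some fixed occurrence has $|\mathcal{H}^\ast|=p^{\Oh(1)}$. I claim $\ell\leq\max\{1,|\mathcal{H}^\ast|\}$. Indeed, if some $H_i\notin\mathcal{H}^\ast$ and at least one interior component survives its deletion (which holds whenever $\ell\geq 2$, and also whenever $\mathcal{H}^\ast\ne\emptyset$), then $H-V(H_i)$ is again in $\mathcal{C}$ (deleting a full interior component changes no other interior component's neighbourhood, as there are no edges between interior components), has fewer interior components, and is equivalent to $H$ with respect to $\mathcal{F}_b$: condition~(i) is immediate, condition~(ii) holds because $H-V(H_i)$ still has an interior component adjacent to all of $x_1,\ldots,x_p$ so the partition of the boundary is unchanged, and condition~(iii) holds because the fixed occurrences survive and deleting vertices creates no new induced subgraph; this contradicts the minimality of the number of components of $H$.

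Once $\ell=p^{\Oh(1)}$, the remaining bound mirrors Lemma~\ref{lem:size-eq}. Let $S$ consist of the boundary, the vertices on all fixed occurrences, and, for each boundary vertex and each interior component in which it has a unique neighbour, that neighbour; then $|S|=p^{\Oh(1)}$. By minimality of $|V(H)|$ subject to minimum $\ell$, deleting any vertex outside $S$ either disconnects its component, removes $H$ from $\mathcal{C}$, or changes $\ell$, so applying Lemma~\ref{lem:bigdeg} component-wise gives $|X|=p^{\Oh(1)}$ for $X=\{v:d_H(v)\geq 3\}\cup S$; every degree-one vertex lies in $S$, so $H[V(H)\setminus X]$ is a union of at most $|X|-1$ paths (each internal vertex being a cut vertex), and contracting an edge of such a path keeps $H$ in $\mathcal{C}$ with the same $\ell$ and, exactly as in Lemma~\ref{lem:size-eq}, preserves the equivalence class; hence each such path has $\Oh(1)$ vertices and $|V(H)|=p^{\Oh(1)}$. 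For the number of graphs of $\mathcal{G}_p'$ compatible with a given $G$: once $G$ fixes the boundary adjacency, the data of condition~(iii) contributes only $p^{\Oh(1)}$ possibilities exactly as in Lemma~\ref{lem:size-eq}, while — and this is the one genuine saving over $\mathcal{G}_p$ — the data of condition~(ii) now contributes only $\Oh(1)$ possibilities instead of the Bell number $B_p=2^{\Oh(p\log p)}$: for a graph in $\mathcal{C}$ the partition of the boundary is forced, being determined by the fixed boundary adjacency when $H-\{x_1,\ldots,x_p\}$ is empty, and putting all of $x_1,\ldots,x_p$ into one component otherwise. Multiplying, the count is $p^{\Oh(1)}$.

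The step I expect to be the crux is the bound on the number of interior components: deleting a ``useless'' component must be done while simultaneously staying inside $\mathcal{C}$, not disturbing the connectivity pattern of the boundary, and not destroying any required induced $\mathcal{F}_b$-subgraph, and it is precisely the two defining properties of $\mathcal{C}$ — every interior component sees the whole boundary, and there are no edges between interior components — that make all three of these hold at once.
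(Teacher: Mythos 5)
Your proposal is correct and its overall architecture coincides with the paper's: drop the Bell-number factor from the class count (for a graph in $\mathcal{C}$ with nonempty interior, every interior component sees the whole boundary, so the boundary's partition into components is forced), bound the number of interior components by $p^{\Oh(1)}$, bound each component by rerunning the Lemma~\ref{lem:size-eq} machinery (Lemma~\ref{lem:bigdeg} plus the $(2c+1)$-path-contraction argument), and observe that the compatibility count loses exactly the $2^{\Oh(p\log p)}$ partition factor. The one step you argue by a genuinely different route is the bound on the number of interior components. The paper proceeds in two stages: it shows that at most $c$ of the graphs $Q_i'=G[V(Q_i)\cup\{x_1,\ldots,x_p\}]$ can be pairwise equivalent (if $c+1$ were, any occurrence of a member of $\mathcal{F}_b$ misses one of them and, by equivalence, survives the deletion of another), and then separately bounds the number of pairwise nonequivalent $Q_i'$ by $p^{\Oh(1)}$. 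You instead fix one witness occurrence for each pair consisting of a boundary tuple and a member of $\mathcal{F}_b$, note that these witnesses meet only $p^{\Oh(1)}$ interior components, and delete any component meeting none of them; the verification that the deletion stays in $\mathcal{C}$, preserves conditions (i)--(iii), and strictly decreases the component count is exactly right. Your version is more direct and avoids the paper's separate count of nonequivalent $Q_i'$, which is the least transparent point of the published argument (it rests on the same coarse signature counting that both you and the paper use for the compatibility bound); the paper's version, in exchange, isolates the reusable fact that at most $c$ equivalent components can coexist in a minimal representative. The remaining details (including the slightly enlarged set $S$ you carry into Lemma~\ref{lem:bigdeg}, which is harmless) match the paper's treatment.
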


\begin{proof}
Let $a=|\mathcal{F}_b|$, $b$ be the maximum size of the boundary of graphs in $\mathcal{F}_b$ and let $c$ be the maximum number of vertices of a graph in $\mathcal{F}_b$. 
Clearly, the values of $a$, $b$ and $c$ depend on $\mathcal{F}$ only. 
Assume that the boundary $x=(x_1,\ldots,x_p)$ is fixed. 

There are $2^{\binom{p}{2}}$ possibilities to select a set of edges with both end-vertices in $\{x_1,\ldots,x_p\}$. 
The number of $s$-tuples of vertices of $\{x_1,\ldots,x_p\}$ that could be boundaries of the copies of $s$-boundaried induced subgraphs  $H\in \mathcal{F}_b$  is at most $bp^b$. 
Respectively, the number of distinct equivalence classes of $\mathcal{C}$ is at most $2^{\binom{p}{2}} bp^b2^a$, that is, $|\mathcal{G}_p'|\leq 2^{\binom{p}{2}}bp^b2^a=2^{\Oh(p^2)}$.  

Let $(G,x)$ be a $p$-boundaried graph in one of the classes such that the number of components of $G-\{x_1,\ldots,x_p\}$ is minimum and the number of vertices of $G$ is minimum subject to (i). Let $Q_1,\ldots,Q_r$ be the components of $G-\{x_1,\ldots,x_p\}$. Let $Q_i'=G[V(Q_i)\cup\{x_1,\ldots,x_p\}]$.

Let $i\in\{1,\ldots,r\}$. Let also $Q_i''$ be the graph obtained from $Q_i$ by the deletion of the edges with both end-vertices in the boundary.
Observe that for each  $s$-tuple of vertices of $\{x_1,\ldots,x_p\}$, $Q_i'$ can contain several copies of the same $H$ as an induced subgraph with respect to this $s$-tuple. In this case we pick one of them and obtain that  $Q_i'$ contains at most $bp^b2^a$ distinct boundaried induced subgraphs $H\in \mathcal{F}_b$. 
Let $S$ be the set of vertices of $Q_i'$ that belong to these subgraphs or to the boundary $x$. We have that $|S|\leq bp^b2^ac+p$. Let $X=\{v\in V(G)\mid d_{Q_i''}(v)\geq 3\}\cup S$. By applying Lemma~\ref{lem:size-eq} to  $Q_i''$, we obtain that $|X|\leq 4|S|-6$.  Then by the same arguments as in the proof of Lemma~\ref{lem:size-eq}, we obtain that 
$Q_i''$ and, therefore, $Q_i'$ has at most $|X|+(|X|-1)(2c+1)=p^{\Oh(1)}$ vertices. Since $V(Q_i)\subseteq V(Q_i')$, we have that $Q_i$ has $p^{\Oh(1)}$ vertices.

Suppose that there are $c+1$ pairwise distinct but equivalent $(Q_{j_0}',x),\ldots,(Q_{j_c}',x)$ for $j_0,\ldots,j_c\in\{1,\ldots,r\}$. 
Assume that $G$ contains an $s$-boundaried induced subgraph $H\in \mathcal{F}_b$
 with respect to an $s$-tuple $(x_{i_1},\ldots,x_{i_s})$ for some $i_1,\ldots,i_s\in\{1,\ldots,p\}$. 
Since $|V(H)|\leq c$, there is $h\in\{0,\ldots,c\}$ such that $V(H)\cap V(Q_{j_h})=\emptyset$. Because $(Q_{j_0}',x),\ldots,(Q_{j_c}',x)$ are equivalent, we obtain that $H$ is an $s$-bounderied induced subgraph of $G-V(Q_{j_0})$ contradicting the condition (i) of the choice of $G$.
Therefore, there are at most $c$ pairwise equivalent boundaried graphs in $\{Q_1',\ldots,Q_r'\}$.

We claim that the number of pairwise nonequivalent graphs in  $\{Q_1',\ldots,Q_r'\}$ is $p^{\Oh(1)}$. 
Notice that the adjacency between the boundary vertices is defined by $G$. 
Then the number of $s$-tuples of vertices of $\{x_1,\ldots,x_p\}$ that could be boundaries of the copies of $s$-boundaried induced subgraphs  from $\mathcal{F}_b$  is at most $bp^b$ and for each $s$-tuple we can have at most $2^a$ $s$-boundaried induced subgraphs  from $\mathcal{F}_b$. Then the claim follows.

We conclude that $r=cp^{\Oh(1)}$. Since $|V(Q_i)|=p^{\Oh(1)}$ for each $i\in\{1,\ldots,r\}$, $|V(G)|=p^{\Oh(1)}$.

To see that for any $p$-boundaried graph $G$, the number of  graphs in $\mathcal{G}_p'$ that are compatible with $G$ is $p^{\Oh(1)}$, notice that if $(H,(x_1,\ldots,x_p))\in \mathcal{G}_p$ and is compatible with $G$, then the adjacency between the vertices of the boundary is defined by $G$. 
Then the number of $s$-tuples of vertices of $\{x_1,\ldots,x_p\}$ that could be boundaries of the copies of $s$-boundaried induced subgraphs  from $\mathcal{F}_b$  is at most $bp^b$ and for each $s$-tuple we can have at most $2^a$ $s$-boundaried induced subgraphs  from $\mathcal{F}_b$.
\end{proof}

Lemmas~\ref{lem:eq-check}, \ref{lem:size-eq} and \ref{lem:size-eq-prime} immediately imply that $\mathcal{G}_p$  and $\mathcal{G}_p'$ 
can be constructed by  brute force.

\begin{lemma}\label{lem:eq-constr}
The sets $\mathcal{G}_p$  and $\mathcal{G}_p'$
 can be constructed in time $2^{p^{\Oh(1)}}$. 
\end{lemma}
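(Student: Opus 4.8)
The plan is a direct brute-force construction that exploits the three preceding lemmas: Lemma~\ref{lem:size-eq} and Lemma~\ref{lem:size-eq-prime} bound the size of every representative, Lemma~\ref{lem:eq-check} lets us test equivalence in polynomial time, and together they bound the number of equivalence classes.

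First I would fix $p$ and let $N=p^{\Oh(1)}$ be the upper bound on $|V(H)|$ guaranteed by Lemma~\ref{lem:size-eq} (for $\mathcal{G}_p$) or Lemma~\ref{lem:size-eq-prime} (for $\mathcal{G}_p'$). I would then enumerate, in nondecreasing order of the number of vertices, all properly $p$-boundaried graphs on at most $N$ vertices; there are at most $\sum_{n\le N} \frac{n!}{(n-p)!}\, 2^{\binom{n}{2}} = 2^{\Oh(N^2)} = 2^{p^{\Oh(1)}}$ of them, and the whole list can be generated within that time bound. For $\mathcal{G}_p'$ I would additionally discard any enumerated graph that does not lie in the class $\mathcal{C}$, which is a polynomial-time check of the condition $N_G(V(H))=\{x_1,\dots,x_p\}$ for every component $H$ of $G-\{x_1,\dots,x_p\}$, and I would order the survivors first by the number of components of $G-\{x_1,\dots,x_p\}$ and then by $|V(G)|$, matching the tie-breaking rule in the definition of $\mathcal{G}_p'$.

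Next I would maintain a running list $\mathcal{L}$ of representatives, initially empty. Processing the enumerated graphs in the chosen order, for each graph $G$ I would use the algorithm of Lemma~\ref{lem:eq-check} to test whether $G$ is equivalent (with respect to $\mathcal{F}_b$) to some member of $\mathcal{L}$, and if not, append $G$ to $\mathcal{L}$. Because the enumeration order respects the minimality criteria used to define $\mathcal{G}_p$ and $\mathcal{G}_p'$, the graph first added from each equivalence class is exactly the chosen representative, so at the end $\mathcal{L}=\mathcal{G}_p$ (respectively $\mathcal{L}=\mathcal{G}_p'$). Since by Lemma~\ref{lem:size-eq}/Lemma~\ref{lem:size-eq-prime} every representative has at most $N$ vertices, truncating the enumeration at $N$ vertices loses no class.

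For the running time: there are $2^{p^{\Oh(1)}}$ enumerated graphs; at any step $\mathcal{L}$ contains at most $2^{\Oh(p^2)}$ graphs, each of size $p^{\Oh(1)}$, so by Lemma~\ref{lem:eq-check} the equivalence tests performed at one step cost $2^{\Oh(p^2)}\cdot p^{\Oh(1)}$ in total. Multiplying gives overall time $2^{p^{\Oh(1)}}$, as claimed. I do not foresee a genuine obstacle; the only points requiring care are that the enumeration must be carried out in an order consistent with the tie-breaking conventions in the two definitions, so that the first graph retained in each class is the intended minimum, and that the size bounds from Lemmas~\ref{lem:size-eq} and~\ref{lem:size-eq-prime} are explicitly invoked to justify stopping the enumeration at $N$ vertices.
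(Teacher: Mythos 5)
Your proposal is correct and is exactly the brute-force construction the paper intends: the paper's proof consists only of the remark that Lemmas~\ref{lem:eq-check}, \ref{lem:size-eq} and \ref{lem:size-eq-prime} immediately yield the result by brute force, and you have simply spelled out that enumeration-plus-equivalence-testing argument, including the correct handling of the tie-breaking in the definition of $\mathcal{G}_p'$.
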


To apply the recursive understanding technique, we also have to solve a special variant of \probLCSS tailored for recursion. First, we define the following auxiliary problem for a positive integer $w$.

 \defproblema{\probMAXF}%
{A graph $G$, sets $I,O,B\subseteq V(G)$ such that $I\cap O=\emptyset$ and $I\cap B=\emptyset$, a weight function $\omega\colon V(G)\rightarrow\mathbb{Z}_{\geq 0}$ and a nonnegative integer $t$.}%
{Find a $t$-secluded $\mathcal{F}$-free induced connected subgraph $H$ of $G$ of maximum weight or weight at least $w$
such that $I\subseteq V(H)$, $O\subseteq V(G)\setminus V(H)$ and $N_G(V(H))\subseteq B$ and output $\emptyset$ if such a subgraph does not exist.}

Notice that \probMAXF is an optimization problem and a \emph{solution} is either an induced subgraph $H$ of maximum weight or of weight at least $w$, or $\emptyset$. Observe also that we allow zero weights for technical reasons.

We recurse if we can separate graphs by a separator of bounded size into two big parts and we use the vertices of the separator to combine partial solutions in both parts. This leads us to the following problem. Let $(G,I,O,B,\omega,t)$ be an instance of \probMAXF and let $T\subseteq V(G)$ be a set of \emph{border terminals}. We say that an instance $(G',I',O',B',\omega',t')$ is obtained by a \emph{border complementation} if  there is a partition $(X,Y,Z)$ of $T$ (some sets could be empty), where $X=\{x_1,\ldots,x_p\}$, such that $Y=\emptyset$ if $X=\emptyset$, $I\cap T\subseteq X$, $O\cap T\subseteq Y\cup Z$ and $Y\subseteq B$, and there is a $p$-boundaried graph $(H,y)\in \mathcal{G}_p$ such that $(H,y)$ and $(G,(x_1,\ldots,x_p))$ are boundary-compatible, and the following holds:
\begin{itemize} 
\item[(i)] $G'$ is obtained from $(G,(x_1,\ldots,x_p))\oplus_b (H,y)$ (we keep the notation $X=\{x_1,\ldots,x_p\}$ for the set of vertices obtained by the identification in the boundary sum) by adding edges joining every vertex of $V(H)$ with every vertex of $Y$,
\item[(ii)] $I'= I\cup V(H)$,
\item[(iii)] $O'=O\cup Y\cup Z$,
\item[(iv)] $B'=B\setminus X$,
\item[(v)] $\omega'(v)=\omega(v)$ for $v\in V(G)$ and $\omega'(v)=0$ for $v\in V(H)\setminus X$,
\item[(vi)] $t'\leq t$.
\end{itemize}
We also say that $(G',I',O',B',w',t')$ is a \emph{border complementation of $(G,I,O,B,w,t)$ with respect to $(X=\{x_1,\ldots,x_p\},Y,Z,H)$}.
We say that $(X=\{x_1,\ldots,x_p\},Y,Z,H)$ is \emph{feasible} if it holds that $Y=\emptyset$ if $X=\emptyset$, $I\cap T\subseteq X$, $O\cap T\subseteq Y\cup Z$ and $Y\subseteq B$, and the $p$-boundaried graph $H\in \mathcal{G}_p$  and $(G,(x_1,\ldots,x_p))$ are boundary-compatible.

 \defproblema{\probBORDF}%
{A graph $G$, sets $I,O,B\subseteq V(G)$ such that $I\cap O=\emptyset$ and $I\cap B=\emptyset$, a weight function $\omega\colon V(G)\rightarrow\mathbb{Z}_{\geq 0}$, a nonnegative integer $t$, and a set $T\subseteq V(G)$ of border terminals of size at most $2t$.}
{Output a solution for each instance $(G',I',O',B',w',t')$ of \probMAXF that can be obtained from $(G,I,O,B,w,t)$ by a border complementation distinct from the border complementation with respect to $(\emptyset,\emptyset,T,\emptyset)$, and for the border complementation with respect to $(\emptyset,\emptyset,T,\emptyset)$ output a nonempty solution if it has weight at least $w$ and output $\emptyset$ otherwise.}

Two instances $(G_1,I_1,O_1,B_1,\omega_1,t,T)$ and $(G_2,I_2,O_2,B_2,\omega_2,t,T)$ of \probBORDF (note that $t$ and $T$ are the same) are said to be \emph{equivalent} if 
\begin{itemize}
\item[(i)] $T\cap I_1=T\cap I_2$, $T\cap O_1=T\cap O_2$ and $T\cap B_1=T\cap B_2$,
\item[(ii)] for the border complementations $(G_1',I_1',O_1',B_1',\omega_1',t')$ and $(G_2',I_2',O_2',B_2',\omega_2',t')$ of the instances $(G_1,I_1,O_1,B_1,\omega_1,t')$ and $(G_2,I_2,O_2,B_2,\omega_2,t')$ respectively of \probMAXF with respect to every feasible $(X=\{x_1,\ldots,x_p\},Y,Z,H)$ and $t'\leq t$, it holds that
\begin{itemize}
\item[a)] if $(G_1',I_1',O_1',B_1',\omega_1',t')$ has a nonempty solution $R_1$, then $(G_2',I_2',O_2',B_2',\omega_2',t')$ has a nonempty solution $R_2$ with $w_2'(V(R_2))\geq \min\{\omega_1'(V(R_1)),w\}$ and, vice versa, 
\item[b)]if $(G_2',I_2',O_2',B_2',\omega_2',t')$ has a nonempty solution $R_2$, then $(G_1',I_1',O_1',B_1',\omega_1',t')$ has a nonempty solution $R_1$ with $\omega_1'(V(R_1))\geq \min\{\omega_2'(V(R_2)),w\}$.
\end{itemize}
\end{itemize}
Strictly speaking, if $(G_1,I_1,O_1,B_1,\omega_1,t,T)$ and $(G_2,I_2,O_2,B_2,\omega_2,t,T)$ are equivalent, then a solution of the first problem is not necessarily a solution of the second. Nevertheless,  \probBORDF is an auxiliary problem and in the end we use it to solve an instance $(G,\omega,t,w)$ of \probLForb by calling the algorithm for \probBORDF for $(G,\emptyset,\emptyset,V(G),\omega,t,\emptyset)$.
Clearly, $(G,\omega,t,w)$ is a yes-instance if and only if a solution for the corresponding instance of \probBORDF  contains a connected subgraph $R$ with $\omega(V(R))\geq w$.
It allows us to not distinguish equivalent instances of \probBORDF and their solutions. 

\subsection{High connectivity phase}
In this section we solve \probBORDF for $(q,t)$-unbreakable graphs. For this purpose, we use \emph{important separators} defined by Marx in~\cite{Marx06}. Essentially, we follow the terminology given in~\cite{CyganFKLMPPS15}.
Recall that for $X,Y\subseteq V(G)$, a set $S\subseteq V(G)$ is an $(X,Y)$-separator if $G-S$ has no path joining a vertex of $X\setminus S$ with a vertex of $Y\setminus S$. 
An $(X,Y)$-separator $S$ is \emph{minimal} if no proper subset of $S$ is an  $(X,Y)$-separator. For $X\subseteq V(G)$ and $v\in V(G)$, it is said that $v$ is \emph{reachable} from $X$ if there is an $(x,v)$-path in $G$ with $x\in X$. A minimal $(X,Y)$-separator $S$ can be characterized by the set of vertices reachable from $X\setminus S$ in $G-S$.

\begin{lemma}[\cite{CyganFKLMPPS15}]\label{lem:min-sep}
If $S$ is a minimal $(X,Y)$-separator in $G$, then $S=N_G(R)$ where $R$ is the set of vertices reachable from $X\setminus S$ in $G-S$. 
\end{lemma}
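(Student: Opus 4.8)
The plan is to prove the two set inclusions $N_G(R)\subseteq S$ and $S\subseteq N_G(R)$ separately, where throughout $R$ denotes the set of vertices reachable from $X\setminus S$ in $G-S$. Note first that $R\cap S=\emptyset$, since the vertices of $S$ are absent from $G-S$.

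For the inclusion $N_G(R)\subseteq S$, take any $v\in N_G(R)$. By definition of the open neighborhood, $v\notin R$, and $v$ has a neighbor $u\in R$. If $v\notin S$, then $v$ belongs to $G-S$ and is adjacent there to $u\in R$; since $u$ is reachable from $X\setminus S$ in $G-S$, so is $v$, contradicting $v\notin R$. Hence $v\in S$, as required. This direction uses only the definition of $R$ and not minimality.

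For the inclusion $S\subseteq N_G(R)$, fix $s\in S$; I want to exhibit a neighbor of $s$ inside $R$. Because $S$ is a \emph{minimal} $(X,Y)$-separator, $S\setminus\{s\}$ is not an $(X,Y)$-separator, so $G-(S\setminus\{s\})$ contains a path $P$ from a vertex $x\in X\setminus(S\setminus\{s\})$ to a vertex $y\in Y\setminus(S\setminus\{s\})$. Since $S$ itself is a separator, $P$ must meet $S$; as $P$ avoids $S\setminus\{s\}$, the vertex $s$ is the unique vertex of $S$ on $P$, so $s$ lies on $P$ and $x\notin S$, i.e.\ $x\in X\setminus S$. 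Traverse $P$ from $x$ to the first occurrence of $s$: every vertex strictly before $s$ on this initial segment avoids $S$ and is joined to $x\in X\setminus S$ by a subpath inside $G-S$, hence lies in $R$. In particular the vertex immediately preceding $s$ on $P$ (which exists because $x\neq s$) is a neighbor of $s$ lying in $R$, so $s\in N_G(R)$. Combining the two inclusions gives $S=N_G(R)$.

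The only delicate point — and the step I expect to need the most care in the write-up — is the claim that the witnessing path $P$ starts at a vertex of $X\setminus S$ rather than at $s$ itself; this is what lets its initial segment certify $s\in N_G(R)$, and it is exactly here that the definition of separator (which only forbids paths between $X\setminus S$ and $Y\setminus S$) together with minimality is used. Everything else is routine reachability bookkeeping.
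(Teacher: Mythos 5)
The paper does not prove this lemma at all---it is imported verbatim from~\cite{CyganFKLMPPS15}---so there is no in-paper argument to compare against. Your two-inclusion structure is the standard proof, and the direction $N_G(R)\subseteq S$ is complete and correct exactly as you wrote it.

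The ``delicate point'' you flag in the other direction is, however, a genuine gap and not just a presentational one. From ``$P$ avoids $S\setminus\{s\}$ and meets $S$'' you conclude ``$x\notin S$,'' but nothing you have written rules out $x=s$, which is possible precisely because the paper's definition of an $(X,Y)$-separator allows $S$ to intersect $X$ (it only forbids paths between $X\setminus S$ and $Y\setminus S$). Worse, under that definition the lemma itself is false in this degenerate case, so no amount of care can close the gap: take $V(G)=\{a,s,y\}$ with the single edge $sy$, $X=\{a,s\}$, $Y=\{y\}$. Then $S=\{s\}$ is an $(X,Y)$-separator, and it is minimal because $\emptyset$ is not one (the path $s$--$y$ joins $X$ to $Y$); yet $R=\{a\}$ and $N_G(R)=\emptyset\neq S$. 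Here the only witnessing path for $S\setminus\{s\}=\emptyset$ not being a separator starts at $x=s$ itself, which is exactly the case your argument cannot handle. The fix is to add the hypothesis under which the lemma is actually used (and under which it is stated in the cited textbook), namely that the separator is disjoint from $X$ (or from $X\cup Y$): then $x\in X\setminus S'$ forces $x\neq s$, hence $x\in X\setminus S$, the vertex of $P$ immediately preceding the first occurrence of $s$ lies in $R$, and the rest of your argument goes through verbatim. With that hypothesis made explicit, the proof is correct and is the standard one.
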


Let $X,Y\subseteq V(G)$ for a graph $G$. Let $S\subseteq V(G)$ be an $(X,Y)$-separator and let $R$ be the set of vertices reachable from $X\setminus S$ in $G-S$. It is said that 
$S$ is an \emph{important $(X,Y)$-separator} if $S$ is minimal and there is no $(X,Y)$-separator $S'\subseteq V(G)$ with $|S'|\leq |S|$ such that $R\subset R'$ where $R'$ is the set of vertices reachable from $X\setminus S'$ in $G-S'$.

\begin{lemma}[\cite{CyganFKLMPPS15}]\label{lem:imp-sep}
Let $X,Y\subseteq V(G)$ for a graph $G$, let $t$ be a nonnegative integer and let $\mathcal{S}_t$ be the set of all important $(X,Y)$-separators of size at most $t$. Then $|\mathcal{S}_t|\leq 4^t$
and $\mathcal{S}_t$ can be constructed in time $\Oh(|\mathcal{S}_t|\cdot t^2\cdot (n+m))$.
\end{lemma}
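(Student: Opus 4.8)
The statement is a classical result of \cite{CyganFKLMPPS15}; the plan is to reprove it by the standard submodular uncrossing argument combined with a bounded--depth branching, and then to read the enumeration algorithm off the same recursion.

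\textbf{Step 1: an extremal minimum separator.} First I would exploit submodularity of the map $R\mapsto|N_G(R)|$, i.e. $|N_G(A)|+|N_G(B)|\ge|N_G(A\cup B)|+|N_G(A\cap B)|$ for all $A,B\subseteq V(G)$. Write $\lambda$ for the minimum size of an $(X,Y)$-separator; the degenerate cases ($X\cap Y\neq\emptyset$, giving $\mathcal{S}_t=\emptyset$, or $\lambda=0$, giving $\mathcal{S}_t=\{\emptyset\}$) are immediate. For a minimal separator $S$ let $R_S$ be the set of vertices reachable from $X\setminus S$ in $G-S$, so $S=N_G(R_S)$ by Lemma~\ref{lem:min-sep}. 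Uncrossing two minimum separators $S_1,S_2$: since $R_{S_1}\cap R_{S_2}$ still yields an $(X,Y)$-separator of size at least $\lambda$, submodularity forces $|N_G(R_{S_1}\cup R_{S_2})|\le\lambda$, so the family of reachable sets of minimum separators is closed under union and has a unique inclusion--maximal element $R^*$; put $S^*=N_G(R^*)$. The same uncrossing applied to an arbitrary important separator $S$ of size at most $t$ together with $S^*$ shows $R_S\supseteq R^*$: otherwise $N_G(R_S\cup R^*)$ is an $(X,Y)$-separator of size at most $|S|$ whose reachable side strictly contains $R_S$, contradicting importance of $S$. Hence every important $(X,Y)$-separator of size at most $t$ has $R^*$ on its reachable side.

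\textbf{Step 2: branching and the $4^t$ bound.} I would bound $|\mathcal{S}_t|$ by induction on the measure $\mu=2t-\lambda$, in fact proving the sharper $|\mathcal{S}_t|\le 2^{\mu}\le 4^t$. If $\lambda>t$ then $\mathcal{S}_t=\emptyset$, and if $\lambda=0$ we are done; otherwise $S^*\neq\emptyset$, so fix $v\in S^*$. By Step~1, for every $S\in\mathcal{S}_t$ we have $R^*\subseteq R_S$, so if $v\notin S$ then $v\in N_G(R^*)$ forces $v\in R_S$. This splits $\mathcal{S}_t$ into two cases. If $v\in S$, then $S\setminus\{v\}$ is an important $(X,Y)$-separator of $G-v$ of size at most $t-1$ (minimality and importance transfer by a direct check: a strictly better separator in $G-v$ would, together with $v$, beat $S$ in $G$), and $\lambda_{G-v}(X,Y)\ge\lambda-1$, so this case contributes at most $2^{\,2(t-1)-(\lambda-1)}=2^{\mu-1}$ by the induction hypothesis. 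If $v\notin S$, then $R_S\supseteq R^*\cup\{v\}$, hence $S$ is an important $(R^*\cup\{v\},Y)$-separator of size at most $t$; replacing the source set by $R^*\cup\{v\}$ (equivalently, contracting it to a single source) strictly raises the minimum separator size, because a minimum $(R^*\cup\{v\},Y)$-separator would have reachable side strictly containing $R^*$, impossible for a minimum $(X,Y)$-separator by maximality of $R^*$; so the new measure is at most $\mu-1$ and this case contributes at most $2^{\mu-1}$. Summing, $|\mathcal{S}_t|\le 2^{\mu-1}+2^{\mu-1}=2^{\mu}\le 4^t$.

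\textbf{Step 3: the enumeration algorithm.} The plan is to turn Step~2 into a recursive procedure. At each node I would compute $\lambda$ and the extremal minimum separator $S^*$ by one bounded max-flow computation (Ford--Fulkerson with at most $t+1$ augmenting paths, each found by BFS in $O(n+m)$ time, aborting once the value exceeds $t$, then reading off the cut closest to $Y$), pick $v\in S^*$, and recurse on $G-v$ with budget $t-1$ and on the contracted-source instance with budget $t$, accumulating along each branch the vertices committed to the separator. A leaf at which the current $\lambda$ is $0$ outputs its committed set (of size at most $t$) as a candidate; a final pass deduplicates and checks importance. The recursion tree has $O(|\mathcal{S}_t|)$ nodes and depth $O(t)$, and each node costs $O(t(n+m))$ for the flow plus $O(t(n+m))$ bookkeeping, for a total of $O(|\mathcal{S}_t|\cdot t^2\cdot(n+m))$.

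\textbf{Main obstacle.} The delicate point is the \emph{both}-branches drop of the measure in Step~2: showing that deleting $v$ (resp. contracting $R^*\cup\{v\}$ into the source) sends important separators to important separators and that $\lambda$ behaves as claimed — the strict increase in the second branch is exactly where the maximality of $R^*$ obtained by submodular uncrossing in Step~1 is used — together with the bookkeeping that the committed vertices along a branch reconstruct precisely the intended separator. Everything else is routine.
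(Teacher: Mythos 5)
This lemma is imported verbatim from \cite{CyganFKLMPPS15}; the paper gives no proof of its own, so there is nothing paper-internal to compare against. Your proposal correctly reconstructs the standard argument from that reference: submodularity of $R\mapsto|N_G(R)|$ yields the unique ``farthest'' minimum separator $S^*=N_G(R^*)$, every important separator has $R^*$ on its reachable side, and the two-way branching on a vertex $v\in S^*$ with measure $\mu=2t-\lambda$ gives $|\mathcal{S}_t|\le 2^{\mu}\le 4^t$, with the enumeration read off the recursion tree. One detail to tighten for the vertex-separator version: in the branch $v\notin S$ the new source set should be $X\cup R^*\cup\{v\}$ rather than $R^*\cup\{v\}$ alone. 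Since $R_S$ need not be connected in $G-S$, the set reachable from $R^*\cup\{v\}$ alone could be a proper subset of $R_S$ (components of $R_S$ met only by $X\cap S^*$ would be lost), in which case $S$ need not even be a minimal, let alone important, separator for the new pair; keeping $X$ in the source forces the reachable set to remain exactly $R_S$, so importance transfers and the strict increase of $\lambda$ (hence the drop of $\mu$) goes through as you argue. Otherwise the plan is sound and matches the cited source.
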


The following lemma shows that we can separately lists all graphs $R$ in a solution of \probBORDF with  $|V(R)\cap V(G)|\leq q$  and all graphs $R$ with $|V(G)\setminus V(R)|\leq q+t$.

\begin{lemma}\label{lem:sol-size}
Let $(G,I,O,B,\omega,t,T)$ be an instance of \probBORDF where $G$ is a $(q,t)$-unbreakable graph for a positive integer $q$. Then  for each nonempty graph $R$ in a solution of \probBORDF, either $|V(R)\cap V(G)|\leq q$ or  $|V(G)\setminus V(R)|\leq q+t$.
\end{lemma}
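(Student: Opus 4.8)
The plan is to turn any nonempty solution $R$ into a separation of $G$ of order at most $t$ and then invoke the hypothesis that $G$ is $(q,t)$-unbreakable. First I would unwind the definitions: a nonempty graph $R$ occurring in a solution of \probBORDF is a solution of \probMAXF for some instance $(G',I',O',B',w',t')$ obtained from $(G,I,O,B,\omega,t)$ by a border complementation with respect to a feasible $(X=\{x_1,\ldots,x_p\},Y,Z,H)$ and some $t'\le t$; in particular $R$ is a connected induced subgraph of $G'$ with $|N_{G'}(V(R))|\le t'\le t$. The only structural fact about the construction of $G'$ that I need is that $G'$ is obtained from the boundary sum $(G,(x_1,\ldots,x_p))\oplus_b(H,y)$ by \emph{adding} edges, so that $V(G)\subseteq V(G')$ and $E(G)\subseteq E(G')$. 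Writing $U=V(R)\cap V(G)$, this also gives $V(G)\setminus V(R)=V(G)\setminus U$.

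The key step is to show $N_G(U)\subseteq N_{G'}(V(R))$, which then forces $|N_G(U)|\le t$. For this, take $v\in N_G(U)$: then $v\in V(G)\setminus U$, so $v\notin V(R)$ (otherwise $v\in V(R)\cap V(G)=U$), while $v$ is adjacent in $G$, hence in $G'$, to some $u\in U\subseteq V(R)$; thus $v\in N_{G'}(V(R))$.

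Next I would form the pair $(A,B)$ with $A=U\cup N_G(U)$ and $B=V(G)\setminus U$. Then $A\cup B=V(G)$, $A\cap B=N_G(U)$, $A\setminus B=U$, and $B\setminus A=V(G)\setminus(U\cup N_G(U))$; by definition of $N_G(U)$ there is no $G$-edge between $A\setminus B$ and $B\setminus A$, so $(A,B)$ is a separation of $G$ of order $|N_G(U)|\le t$. Applying $(q,t)$-unbreakability yields $|A\setminus B|\le q$ or $|B\setminus A|\le q$. In the first case $|V(R)\cap V(G)|=|U|\le q$; in the second case $|V(G)\setminus(U\cup N_G(U))|\le q$, hence $|V(G)\setminus V(R)|=|V(G)\setminus U|\le q+|N_G(U)|\le q+t$, as claimed.

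I do not expect a genuine obstacle here; the points that need care are purely bookkeeping. One must remember that $R$ is a solution for a border complementation of the given instance rather than for $G$ itself, so $R$ may use vertices of the glued graph $H$ outside the boundary, which is exactly why one passes to $U=V(R)\cap V(G)$ and why $E(G)\subseteq E(G')$ is the relevant inclusion. The degenerate case $U=\emptyset$ needs no separate treatment, since then $|V(R)\cap V(G)|=0\le q$ immediately.
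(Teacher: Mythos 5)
Your proposal is correct and follows essentially the same route as the paper: both construct from $R$ a separation of $G$ whose separator is contained in $N_{G'}(V(R))\cap V(G)$ (the paper takes $(N_{G'}[V(R)\cap V(G)],\,V(G)\setminus V(R))$, you take $(U\cup N_G(U),\,V(G)\setminus U)$ with $U=V(R)\cap V(G)$) and then apply $(q,t)$-unbreakability to conclude. Your write-up is in fact a bit more careful than the paper's about why the separator has size at most $t$ and about restricting everything to $V(G)$.
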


\begin{proof}
Let $R$ be a nonempty graph listed in a solution of \probBORDF for an instance $(G',I',O',B',\omega',t')$ \probMAXF.
Assume that $G'$ is obtained from $(G,(x_1,\ldots,x_p))\oplus_b (H,y)$ for $H\in \mathcal{G}_p$.
Let $U=N_{G'}[V(R)\cap V(G)]$ and $W=V(G)\setminus V(R)$. Clearly $(U,W)$ is a separation of $G$ of order at most $t$. Since $G$ is $(q,t)$-unbreakable, either $|U\setminus W|\leq q$ or $|W\setminus U|\leq q$.
If $|U\setminus W|\leq q$, then $|V(R)\cap V(G)|\leq |U\setminus W|\leq q$. 
If $|W\setminus U|\leq q$, then $|V(G)\setminus V(R)|\leq q+t$.
\end{proof}

Now we can prove the following crucial lemma. 

\begin{lemma}\label{lem:unbreak}
\probBORDF for $(q,t)$-unbreakable graphs can be solved in time  $2^{(q+t\log(q+t)))}\cdot n^{\Oh(1)}$ if the sets $\mathcal{G}_p$ for $p\leq 2t$  are given. 
\end{lemma}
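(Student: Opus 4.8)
The plan is to iterate over all border complementations of $(G,I,O,B,\omega,t)$ with respect to $T$, solve the resulting \probMAXF instances on the modified graphs, and assemble the output. Since $|T|\le 2t$, there are at most $3^{2t}$ partitions $(X,Y,Z)$ of $T$; for a partition with $p=|X|$, Lemma~\ref{lem:size-eq} says at most $2^{\Oh(p\log p)}=2^{\Oh(t\log t)}$ graphs of $\mathcal{G}_p$ are boundary-compatible with $(G,(x_1,\dots,x_p))$, and $t'\le t$ has at most $t+1$ values, so there are $2^{\Oh(t\log t)}$ border complementations, each yielding an instance $(G',I',O',B',w',t')$ of \probMAXF computable in polynomial time from the given sets $\mathcal{G}_p$. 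Hence it suffices to solve \probMAXF on each $G'$ in time $2^{\Oh(q+t\log(q+t))}\cdot n^{\Oh(1)}$ and then, for the complementation with respect to $(\emptyset,\emptyset,T,\emptyset)$, apply the extra rule about reporting $\emptyset$ unless the weight reaches $w$. By Lemma~\ref{lem:sol-size} — using $V(G')\setminus V(G)\subseteq I'\subseteq V(R)$, so that $V(G')\setminus V(R)=V(G)\setminus V(R)$ — every nonempty $R$ in a solution of such an instance falls into \emph{Case~1}, $|V(R)\cap V(G)|\le q$, or \emph{Case~2}, $|V(G')\setminus V(R)|\le q+t$. I would search the two cases independently and return a maximum-weight valid subgraph among those found (or $\emptyset$); correctness follows since a global maximum lies in one of the two cases, Case~1 is enumerated exhaustively, and for a Case~2 global maximum the Case~2 procedure below produces a valid subgraph of at least its weight.

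For Case~1, note $|V(R)|\le q+|V(H)|=q+t^{\Oh(1)}=:N$ by Lemma~\ref{lem:size-eq}, $R$ is connected in $G'$, and $|N_{G'}(V(R))|\le t'$. Fixing a vertex $v\in I'$ (or branching over all $v\in V(G')$ with an extra factor $n$ if $I'=\emptyset$), I would invoke Lemma~\ref{lem:conn-subgr} to enumerate, for all $k\le N$ and $\ell\le t'$, the connected sets $U\ni v$ with $|U|=k$ and $|N_{G'}(U)|=\ell$. There are at most $Nt\binom{N+t}{t}=2^{\Oh(t\log(q+t))}$ of them, produced in $2^{\Oh(t\log(q+t))}\cdot n^{\Oh(1)}$ time; for each I would check in $|U|^{\Oh(1)}$ time whether $I'\subseteq U$, $O'\cap U=\emptyset$, $N_{G'}(U)\subseteq B'$, and $G'[U]$ is $\mathcal{F}$-free, and record $\omega(U)$ if so.

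For Case~2, put $W=V(G')\setminus V(R)$, so $|W|\le q+t$, $O'\subseteq W$, $I'\cap W=\emptyset$, $G'-W$ is connected and $\mathcal{F}$-free, and $N_{G'}(V(G')\setminus W)\subseteq B'$ has size $\le t'$; if $|O'|>q+t$ there is nothing to do. The idea is to first strip the $\mathcal{F}$-freeness requirement by a bounded search tree, then handle the rest with important separators. Starting from $D_0=O'$, I would run the standard branching for hitting all induced copies of graphs of $\mathcal{F}$ in $G'-D_0$ (repeatedly pick such a copy — at most $c:=\max_{F\in\mathcal{F}}|V(F)|$ vertices — and branch on which of its vertices to delete) down to depth $q+t-|O'|$. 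This produces at most $c^{\,q+t}=2^{\Oh(q+t)}$ sets $M$ with $O'\subseteq M$, $|M|\le q+t$, and $G'-M$ being $\mathcal{F}$-free, such that for every valid Case~2 subgraph $R$ some produced $M$ satisfies $O'\subseteq M\subseteq W=V(G')\setminus V(R)$. Fixing such an $M$, any $R$ with $R\cap M=\emptyset$ is automatically $\mathcal{F}$-free since $G'[R]\subseteq G'-M$, so the remaining task is to maximize $\omega(R)$ over connected sets $R\supseteq I'$ with $R\cap M=\emptyset$, $N_{G'}(R)\subseteq B'$, and $|N_{G'}(R)|\le t'$. For such an $R$, $S:=N_{G'}(R)$ is an $(I',M)$-separator contained in $B'$ with $R$ exactly the set reachable from $I'$ in $G'-S$; replacing $S$ by an important $(I',M)$-separator contained in $B'$ dominating it (a standard fact, \cite{CyganFKLMPPS15}; there are $2^{\Oh(t)}$ such separators of size $\le t'$ by Lemma~\ref{lem:imp-sep}, and the restriction to $B'$ is imposed by the usual modification rendering each vertex of $V(G')\setminus B'$ unusable in a minimal separator) only enlarges the reachable set, which, since the weights are nonnegative, does not decrease $\omega(\cdot)$ and keeps the subgraph disjoint from $M\supseteq O'$, hence valid. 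So, for each $M$, I would enumerate the $2^{\Oh(t)}$ important $(I',M)$-separators $S\subseteq B'$ of size $\le t'$ (branching over a vertex forced into $R$, with a factor $n$, if $I'=\emptyset$), take $R$ to be the component of $G'-S$ containing $I'$, check $O'\cap R=\emptyset$ and $N_{G'}(R)\subseteq B'$, and record $\omega(R)$.

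Case~1 costs $2^{\Oh(t\log(q+t))}\cdot n^{\Oh(1)}$ and Case~2 costs $c^{\,q+t}\cdot 2^{\Oh(t)}\cdot n^{\Oh(1)}=2^{\Oh(q+t)}\cdot n^{\Oh(1)}$ (with $c=\Oh(1)$ depending only on $\mathcal{F}$), so \probMAXF on each $G'$ is solved in $2^{\Oh(q+t\log(q+t))}\cdot n^{\Oh(1)}$; multiplying by the $2^{\Oh(t\log t)}$ border complementations yields the claimed bound. I expect the main obstacle to be Case~2: one must realise that the $\mathcal{F}$-freeness constraint can be decoupled from the secluded-connectivity constraint — the former absorbed into a bounded-depth branching whose depth is controlled precisely by $q+t$ via unbreakability (Lemma~\ref{lem:sol-size}), the latter solved by important separators — and one must be careful to keep the enumerated separators inside $B'$ and to argue, using nonnegativity of the weights, that ``pushing towards $I'$'' never loses an optimal solution, since the $\mathcal{F}$-free subgraph itself need not be the source side of an important separator.
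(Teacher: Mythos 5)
Your proposal is correct and follows essentially the same route as the paper's proof: enumerate the $2^{\Oh(t\log t)}$ border complementations, split via Lemma~\ref{lem:sol-size} into the small-solution case (handled by enumerating small connected $t$-secluded sets, which is exactly what underlies Corollary~\ref{cor:CSWCS}) and the small-complement case (handled by bounded-depth branching to hit all forbidden induced subgraphs followed by important $(I',O')$-separators). The one place where the paper spells out what you delegate to ``the usual modification'' is the enforcement of $N_{G'}(V(R))\subseteq B'$, which it achieves by replacing each vertex outside $I'\cup O'\cup B'$ with $t+1$ true twins before computing important separators — the same undeletability trick you invoke.
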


\begin{proof}
Assume that the sets $\mathcal{G}_p$ for $p\leq 2t$  are given. We consider all possible instances $(G',I',O',B',\omega',t')$ of \probMAXF obtained from the input instance $(G,I,O,B,w,t)$ as it is explained in the definition of \probBORDF. To construct each instance, we consider all at most $3^{2t}$ partitions $(X,Y,Z)$ of $T$,   where $X=\{x_1,\ldots,x_p\}$, such that $Y=\emptyset$ if $X=\emptyset$, $I\cap T\subseteq X$, $O\cap T\subseteq Y\cup Z$ and $Y\subseteq B$. Then 
we consider all  $p$-boundaried graph $(H,y)\in \mathcal{G}_p$ such that $(H,y)$ and $(G,(x_1,\ldots,x_p))$ are boundary-compatible. By Lemma~\ref{lem:size-eq}, there are  $2^{\Oh(t\log t)}$ such sets.  Consider now a constructed instance $(G',I',O',B',\omega',t')$ and assume that $G'$ is obtained  from $(G,(x_1,\ldots,x_p))\oplus_b (H,y)$ for $(H,y)\in\mathcal{G}_p$.
We find a $t$-secluded $\mathcal{F}$-free induced connected subgraph $R$ of $G'$ of maximum weight 
such that $I'\subseteq V(R)$, $O'\subseteq V(G')\setminus V(R)$ and $N_{G'}(V(R))\subseteq B'$ if such a subgraph exists.
By Lemma~\ref{lem:sol-size}, either $|V(R)\cap V(G)|\leq q$ or  $|V(G)\setminus V(R)|\leq q+t$.

First, we find  
a $t$-secluded $\mathcal{F}$-free induced connected subgraph $R$ of $G'$ with $|V(R)\cap V(G)|\leq q$ of maximum weight 
such that $I'\subseteq V(R)$, $O'\subseteq V(G')\setminus V(R)$ and $N_{G'}(V(R))\subseteq B'$.
If  $|V(R)\cap V(G)|\leq q$, then  $|V(R)|\leq |V(H)|+|V(R)\cap V(G)|$. By Lemma~\ref{lem:size-eq}, $|V(H)|=t^{c}$ for some constant $c$. It implies that $|V(R)|\leq t^c+q$. To find $R$, we consider all $k\leq  t^c+q$ and find  a $t$-secluded $\mathcal{F}$-free induced connected subgraph $R$ of $G'$ of maximum weight such that $I'\subseteq V(R)$, $O'\subseteq V(G')\setminus V(R)$, $N_{G'}(V(R))\subseteq B'$ and $|V(R)|=k$. By Corollary~\ref{cor:CSWCS},
it can be done in time $2^{\Oh(\min\{t^c+q,t\}\log (t^c+q+t))}\cdot n^{\Oh(1)}$.

Now we find a $t$-secluded $\mathcal{F}$-free induced connected subgraph $R$ of $G'$ with $|V(G)\setminus V(R)|\leq q+t$ of maximum weight 
such that $I'\subseteq V(R)$, $O'\subseteq V(G')\setminus V(R)$ and $N_{G'}(V(R))\subseteq B'$.

Because $|V(R)\cap V(G)|\leq q+t$,
 there is a set $O'\subseteq S\subseteq V(G)\setminus V(R)$ such that $|S|\leq q+t$ and $G'-S$ is $\mathcal{F}$-free. 
We list all such sets $S$ 
using the standard branching algorithm for this problem (see, e.g.,~\cite{CyganFKLMPPS15}). The main idea of the algorithm is that if $G'$ has an induced subgraph $F$ isomorphic to a graph of $\mathcal{F}$, then at least one vertex of $F$ should be in $S$.  Initially we set $S=O$ and set a branching parameter $h=q+t-|O|$. If $h<0$, we stop.
We check whether $G'-S$ has an induced subgraph $F$ isomorphic to a graph of $\mathcal{F}$. If we have no such graph, we return $S$. If $V(F)\subseteq I'$, then we  stop.
Otherwise, we branch on the vertices of $F$. For each $v\in V(F)\setminus I'$, we set $S=S\cup \{v\}$ and set $h=h-1$ and recurse. It is straightforward to verify the correctness of the algorithm and see that it runs in in time $2^{\Oh(q+t)}\cdot n^{(1)}$, because $\mathcal{F}$ is fixed and each graph from this set has a constant size. If the algorithm fails to output any set $S$, then we conclude that $(G',I',O',B',\omega',t')$ has no solution $R$ with $|V(R)\setminus V(G)|\leq q$. From now on we assume that this is not the case. 

For each $S$, we set $O'=O'\cup S$  and find a solution for the modified instance $(G',I',O',B',\omega',t')$. Then we choose a solution of maximum size (if exist). 

If $I'=\emptyset$, we guess a vertex $u\in V(G')\setminus O'$ that is included in a solution. We set $I'=\{u\}$ and $B'=B'\setminus \{u\}$ and solve the modified instance $(G',I',O',B',w',t')$. Then we choose a solution of maximum size for all guesses of $u$. From now on we have $I'\neq \emptyset$.

We apply a series of reduction rules for $(G',I',O',B',\omega',t')$. Let $h=q+t$.

\begin{reduction}\label{Ared:one}
If $G'$ is disconnected and has vertices of $I'$ in distinct components, then return the answer no and stop. Otherwise, let $Q$ be a component of $G'$ containing $I'$ and  set $G'=Q$, $B'=B'\cap V(Q)$, $O'=O'\cap V(Q)$ and $h=h-|V(G)\setminus V(Q)|$. If $h<0$, then return no and stop.  
\end{reduction}

It is straightforward to see that the rule is safe, because a solution is a connected graph. Notice that from now on we can assume that $G'$ is connected. If $O'=\emptyset$, $G'$ is a solution and we get the next rule.

\begin{reduction}\label{Ared:two}
If $O'=\emptyset$, then return $G'$.  
\end{reduction}

From now on we assume that $O'\neq\emptyset$.
Let $Q$ be a component of $G'-B'$. Notice that for any solution $R$, either $V(Q)\subseteq V(R)$ or $V(Q)\cap V(R)=\emptyset$, because $N_{G'}(V(R))\subseteq B'$. Moreover, if
$V(Q)\cap V(R)=\emptyset$, then $N_{G'}[V(Q)]\cap V(R)=\emptyset$.
This  leads to the following rule.

\begin{reduction}\label{Ared:three}
For a component $Q$ of $G'-B'$ do the following in the given order:
\begin{itemize}
\item if $V(Q)\cap I'\neq\emptyset$ and $V(Q)\cap O'\neq\emptyset$, then return no and stop,
\item if $V(Q)\cap I'\neq\emptyset$, then set $I'=I'\cup V(Q)$,
\item if $V(Q)\cap O'\neq\emptyset$, then set $O'=O'\cup N_{G'}[V(Q)]$.
\end{itemize}
\end{reduction}

The rule is applied for each component $Q$ exactly once.
Now our aim is find all inclusion maximal induced subgraphs $R$ of $G'$ such that $I'\subseteq V(R)$, $O'\cap V(R)=\emptyset$, $N_{G'}(V(R))\subseteq B'$, $|N_{G'}(V(R))|\leq t'$ and all the vertices of $R$ are reachable from $I'$. Then, by maximality, a solution is such a subgraph $R$ that is connected and, subject to connectivity, has a maximum weight. We doing it using important $(N_{G'}[I'],O')$-separators. The obstacle is the condition that $N_{G'}(V(R))\subseteq B'$.  Let $Q$ be a component of $G'-O'$. By Reduction Rule~\ref{Ared:three}, we have that exactly one of the following holds: either (i) $V(Q)\subseteq I'$ or (ii) $V(Q)\subseteq O'$ or (iii) $V(Q)\cap I'=V(Q)\cap O'=\emptyset$. To ensure that  $N_{G'}(V(R))\subseteq B'$, we have to insure that if (iii) is fulfilled, then it holds that either $V(Q)\subseteq V(R)$ or $V(Q)\cap V(R)=\emptyset$. To do it, we construct the auxiliary graph $G''$ as follows. For each $v\in V(G')\setminus (I'\cup O'\cup B')$, we replace $v$ by $t+1$ true twin vertices 
$v_0,\ldots,v_t$ that are adjacent to the same vertices as $v$ in $G$ or to the corresponding true twins obtained from the neighbors of $v$.  For an induced subgraph $R$ of $G'$, we say that the induced subgraph $R'$ of $G''$ is an \emph{image} of $R$ if $R'$ is obtained by the same replacement of the vertices  $v\in V(R)\setminus (I'\cup O'\cup B')$ by $t+1$ twins. Respectively, we say that $R$ is a \emph{preimage} of $R'$. 

We claim that if $R'$ is an  induced subgraph of $G''$ such that $I'\subseteq V(R')$, $O'\cap V(R')=\emptyset$,
$|N_{G''}(V(R'))|\leq t'$ and all the vertices of $R'$ are reachable from $I'$, then $R'$ has a preimage $R$ and $N_{G'}(V(R))=N_{G''}(V(R'))\subseteq B'$. 

To prove the claim, consider an inclusion maximal induced subgraph $R'$ of $G''$ such that $I'\subseteq V(R')$, $O'\cap V(R')=\emptyset$,  
$|N_{G''}(V(R'))|\leq t'$ and all the vertices of $R'$ are reachable from $I'$. Let $v'\in N_{G''}(R'')$ and assume that $v'\notin B'$. Clearly, $v'\notin I'$. Notice that $v'\notin O'$, 
because by Reduction Rule~\ref{Ared:three}, we have that for any $w\in O'\setminus B'$, $N_{G'}[w]\subseteq O'$. Since $v'\notin B'\cup I'\cup O'$, $v'\in\{v_0,\ldots,v_t\}$ for $t+1$ true twins constructed for some vertex $v\in V(G')$. Because $|N_{G''}(V(R'))|\leq t'$, there is $i\in\{0,\ldots,t\}$, such that $v_i\notin N_{G''}(R'')$.
As $v_i$ and $v'$ are twins, $v_i\in V(R'')$. Let $R''=G''[V(R')\cup \{v_0,\ldots,v_t\}]$. We obtain that $I'\subseteq V(R')$, $O'\cap V(R')=\emptyset$, 
$|N_{G''}(V(R'))|\leq t'$ and all the vertices of $R'$ are reachable from $I'$, but $V(R')\subset V(R'')$ contradicting maximality. Hence, $N_{G''}(V(R'))\subseteq B'$. Then $R'$ has a preimage $R$ and $N_{G'}(V(R))=N_{G''}(V(R'))\subseteq B'$. 

Using the  claim, we conclude that to find all inclusion maximal induced subgraphs $R$ of $G'$ such that $I'\subseteq V(R)$, $O'\cap V(R)=\emptyset$, $N_{G'}(V(R))\subseteq B'$, $|N_{G'}(V(R))|\leq t'$ and all the vertices of $R$ are reachable from $I'$, we should list inclusion maximal induced subgraphs $R'$ of $G''$ such that $I'\subseteq V(R')$, $O'\cap V(R')=\emptyset$,  
$|N_{G''}(V(R'))|\leq t'$ and all the vertices of $R'$ are reachable from $I'$, and then take preimages of the graphs $R'$.

To find the  maximal induced subgraphs $R'$ of $G''$ such that $I'\subseteq V(R')$, $O'\cap V(R')=\emptyset$,  
$|N_{G''}(V(R'))|\leq t'$ and all the vertices of $R'$ are reachable from $I'$, we use Lemma~\ref{lem:imp-sep}.
In time $4^t\cdot n^{\Oh(1)}$ we construct the set $\mathcal{S}_{t'}$ of all important $(N_{G''}[I'],O')$-separators of size at most $t'$ in $G''$. Then for each $S\in \mathcal{S}_{t'}$, we find $R'$ that is the union of the components of $G''-S$ containing the vertices of $I'$.

Since Reduction Rules~\ref{Ared:one}--\ref{Ared:three} can be applied in polynomial time and $G''$ can be constructed in polynomial time, we have that the total running time is 
$2^{\Oh(t+q)}\cdot n^{\Oh(1)}$.

Now we compare the two subgraphs $R$ that we found for the cases $|V(R)\cap V(G)|\leq q$ and $|V(G)\setminus V(R)|\leq q+t$ and output the subgraph of maximum weight or the empty set if we failed to find these subgraph. Taking into account the time used to construct the instances $(G',I',O',B',\omega',t')$, we obtain that the total running time is $2^{\Oh(q+t\log(q+t)))}\cdot n^{\Oh(1)}$
\end{proof}

\subsection{The \classFPT algorithm for \probLForb}\label{Asec:forb}
In this section we construct an \classFPT algorithm for \probLForb parameterized by $t$. We do this by solving \probBORDF in \classFPT-time for general case.

\begin{lemma}\label{lem:bordforb}
\probBORDF  can be solved in time $2^{2^{2^{\Oh(t\log t)}}}\cdot n^{\Oh(1)}$.
\end{lemma}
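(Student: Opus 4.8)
The plan is to apply the \emph{recursive understanding} framework of Chitnis et al.~to \probBORDF, using the machinery built up in Lemmas~\ref{lem:eq}--\ref{lem:unbreak}. Given an instance $(G,I,O,B,\omega,t,T)$ with $|T|\leq 2t$, first invoke the algorithm of Lemma~\ref{lem:unbreakable} with $k=t$ and a threshold $q$ chosen large enough that $q > |T|$ and $q$ dominates the polynomial bounds $p^{\Oh(1)}$ on sizes of graphs in $\mathcal{G}_p$ for $p\leq 2t$; concretely one takes $q=2^{\Theta(t\log t)}$. Either the algorithm reports that $G$ is $(q',t)$-unbreakable for $q'=(2q+1)q\cdot 2^t = 2^{\Oh(t\log t)}$ --- in which case we invoke Lemma~\ref{lem:unbreak} directly, solving the instance in time $2^{\Oh(q'+t\log(q'+t))}\cdot n^{\Oh(1)} = 2^{2^{\Oh(t\log t)}}\cdot n^{\Oh(1)}$ --- or it returns a separation $(A,B^*)$ of order $\leq t$ with both $|A\setminus B^*|$ and $|B^*\setminus A|$ larger than $q$. (I rename the second part $B^*$ to avoid clashing with the set $B$ of the instance.)

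In the recursion step, let $C=A\cap B^*$, so $|C|\leq t$. If $|B^*\setminus A|$ is the larger side we recurse on the side $G[A]$; I would always recurse on the side containing at most half of the border terminals $T$, say $G[A]$, and set $T_A = (T\cap A)\cup C$, which has size at most $|T|/2 + t \leq 2t$ so the recursion is well-formed. We call the algorithm for \probBORDF recursively on $G[A]$ with terminal set $T_A$, inherited sets $I\cap A$, $O\cap A$, $B\cap A$, restricted weights, and parameter $t$. Its output gives, for every feasible border complementation of the subinstance on $G[A]$ --- in particular for every choice of partition $(X,Y,Z)$ of $T_A$ and every $(H,y)\in\mathcal{G}_p$ compatible with $G[A]$ --- an optimal (or weight-$\geq w$) partial solution. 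We then replace $G[A]$ in $G$ by a small representative: for the ``through'' part of a global solution, its trace on $A$ together with the way it meets $C$ is exactly a partial solution of one of these border complementations (with $X\subseteq C$ the vertices of $C$ in the solution, $Z$ the vertices of $C$ outside it, $Y=\emptyset$, and the attached $(H,y)$ encoding how components of $G[A]\setminus C$ hanging off the boundary could create a forbidden subgraph together with $G[B^*]$). By Lemma~\ref{lem:eq}, only the $\mathcal{F}_b$-equivalence class of this trace matters for $\mathcal{F}$-freeness of the merged graph, so we may glue in the minimum-size representative from $\mathcal{G}_p$ (or $\mathcal{G}_p'$ for the components entirely inside $A$ attached to $C$), carrying the optimal weight as a vertex weight on a marker vertex. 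This shrinks $|A\setminus B^*|$ to $p^{\Oh(1)}=2^{\Oh(t\log t)}$ while, by the equivalence statement for \probBORDF and the defining property of $\mathcal{G}_p,\mathcal{G}_p'$, producing an equivalent instance; since $|A\setminus B^*|>q$ was large, total graph size strictly decreases, so the recursion terminates after $n$ steps.

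For the running time: let $g(t)$ bound the time of \probBORDF on $n$ vertices up to an $n^{\Oh(1)}$ factor. Each recursive call spawns at most one recursive subcall (we recurse on one side only), and between subcalls we do: one run of Lemma~\ref{lem:unbreakable} ($2^{\Oh(t\log t)}n^{\Oh(1)}$ time); the unbreakable-case solve via Lemma~\ref{lem:unbreak} ($2^{2^{\Oh(t\log t)}}n^{\Oh(1)}$); and the construction of the replacement gadget, which ranges over all $3^{|T_A|}=2^{\Oh(t)}$ partitions of $T_A$, all $2^{\Oh(t\log t)}$ compatible graphs in $\mathcal{G}_p$, and all $2^{\Oh(t\log t)}$ in $\mathcal{G}_p'$, each requiring us to read off and compare the recursively computed solutions --- all in $2^{\Oh(t\log t)}n^{\Oh(1)}$ time once the recursive output table is in hand. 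The subtlety is that we must solve the \emph{recursive} instance in full (all its border complementations), but that is exactly what a call to \probBORDF returns, and its terminal set still has size $\leq 2t$. Thus $g(t)$ satisfies $g(t)\leq g(t) + 2^{2^{\Oh(t\log t)}}$ \emph{per level} with $n$ levels, but the number of distinct border complementations we must output is $2^{\Oh(t\log t)}\cdot 2^{\Oh(t\log t)}\cdot 2^{\Oh(t)} = 2^{\Oh(t\log t)}$, and for each we run the unbreakable solver or combine subresults; the dominating cost per level is the unbreakable solve, $2^{2^{\Oh(t\log t)}}$. Actually the extra blow-up to the \emph{triply} exponential $2^{2^{2^{\Oh(t\log t)}}}$ comes from the gadget-replacement bookkeeping across the branching on $(X,Y,Z,H)$ compounded over the structure: each of the $2^{\Oh(t\log t)}$ complementations of the parent instance requires, in the worst case, re-solving against each of $2^{\Oh(t\log t)}$ complementations of the child, and the unbreakable base case itself enumerates $2^{\Oh(t\log t)}$ compatible boundaried graphs each costing $2^{2^{\Oh(t\log t)}}$ via Lemma~\ref{lem:unbreak}, giving $2^{2^{\Oh(t\log t)}}$ there and one further exponential from the product over complementations when merging, i.e. $2^{2^{2^{\Oh(t\log t)}}}$.

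The main obstacle is the correctness of the gluing step: verifying that the trace of a global solution on the recursion side really is captured by \emph{some} feasible border complementation of the subinstance, and that after substituting the $\mathcal{G}_p$/$\mathcal{G}_p'$ representative the set of achievable (weight of a connected $\mathcal{F}$-free $t$-secluded subgraph satisfying the $I/O/B$ constraints) is preserved for \emph{every} border complementation of the parent. This requires (a) checking that connectivity across the boundary is respected --- handled by condition (ii) of the equivalence and by keeping components-of-$G[A]\setminus C$-attached-to-$C$ in $\mathcal{G}_p'$; (b) checking that $\mathcal{F}$-freeness of the merged graph depends only on the equivalence class --- this is precisely Lemma~\ref{lem:eq}; and (c) checking that the secludedness budget $t$ is respected, i.e. that $N_G$ of the glued solution is unchanged, which holds because the marker gadget has its neighborhood confined to $X\subseteq C$. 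I would handle (a)--(c) by a careful case analysis on whether $V(F)$ (a putative forbidden induced subgraph) lies wholly in $A\setminus C$, wholly in $B^*$, or straddles $C$, mirroring the proof of Lemma~\ref{lem:eq}, and on whether a component of the solution lies on one side or uses the boundary. Once this is established, termination and the running-time recurrence are routine, yielding the claimed $2^{2^{2^{\Oh(t\log t)}}}\cdot n^{\Oh(1)}$ bound.
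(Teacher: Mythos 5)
Your overall architecture (unbreakability test, recursion on the side with few border terminals, replacement by an equivalent small instance, direct solution of the unbreakable case via Lemma~\ref{lem:unbreak}) is the same as the paper's, but there is a genuine quantitative gap that breaks both the progress argument and the running-time analysis. You set the threshold $q=2^{\Theta(t\log t)}$ and claim the recursed side shrinks to $p^{\Oh(1)}=2^{\Oh(t\log t)}$ vertices. That bound is not achievable with the machinery at hand: after the recursive call, the set $S$ collecting the boundary vertices of all returned partial solutions has size $2^{\Oh(t\log t)}$ (there are $\Oh\bigl(t\cdot 3^{2t}\cdot 2^{\Oh(t\log t)}\bigr)$ border complementations, each contributing up to $t$ vertices), and the components of $G[W]-S$ must then be grouped by their neighbourhoods \emph{in $S$}. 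There can be up to $2^{|S|}=2^{2^{\Oh(t\log t)}}$ such neighbourhood classes, each needing its own representative, so the reduced part has size $2^{2^{\Oh(t\log t)}}$ (cf.\ (\ref{Aeq:w*})). Consequently $q$ must itself be chosen doubly exponential in $t$, as in (\ref{Aeq:q}), for $|V(G^*)|<|V(G)|$ to hold; and it is precisely this $q$, fed into the $((2q+1)q\cdot 2^t,t)$-unbreakable case of Lemma~\ref{lem:unbreak}, that produces the triple exponential. Your account of the third exponential (a product over the $2^{\Oh(t\log t)}$ complementations of parent and child) is incorrect: such a product multiplies a $2^{2^{\Oh(t\log t)}}$ cost by only $2^{\Oh(t\log t)}$ and cannot create another exponential level.

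A second, independent gap lies in the gluing step. You replace the entire recursed side by a single minimum-size $\mathcal{G}_p$ representative carrying ``the optimal weight,'' but the replacement must be equivalent as a \probBORDF instance, i.e.\ with respect to \emph{every} border complementation of the parent simultaneously, and---crucially---a solution $R$ of the parent may have part of its open neighbourhood at vertices of the recursed side that your gadget discards. The paper first proves that $B$ may be replaced by $\hat{B}=(B\cap U)\cup(B\cap S)$, i.e.\ that one may assume $N(V(R))\cap W\subseteq S$, by using Lemma~\ref{lem:eq} to exchange the trace of $R$ on $W$ for a recursively computed partial solution $\tilde{R}$ whose neighbourhood lies in $S$. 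Only after this step can the components of $G[W]-S$ be treated as all-in or all-out and compressed via $\mathcal{G}_p'$; and even then, for each neighbourhood class of size at most $t$ one must retain a separate maximum-weight representative, since a single such component may constitute the entire (disconnected-from-the-rest) solution. None of this appears in your sketch, and without the $\hat{B}$ step your case analysis on where a forbidden induced subgraph lies does not address the secludedness budget at all.
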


\begin{proof}
Given $\mathcal{F}$, we construct the set $\mathcal{F}_b$. Then we use Lemma~\ref{lem:eq-constr} to construct the sets $\mathcal{G}_p$ for $p\in \{0,\ldots,t\}$ in time $2^{t^{\Oh(1)}}$.

By Lemma~\ref{lem:size-eq}, there is a constant $c$ that depends only on $\mathcal{F}$ such that  for every nonnegative $p$ and 
for any $p$-boundaried graph $G$, there are at most $2^{c p\log p}$ $p$-boundaried graphs in $\mathcal{G}_p$ that are compatible with $G$ and 
there are at most $p^{c}$  $p$-boundaried graphs in $\mathcal{G}_p'$ that are compatible with $G$. 
We define 
\begin{equation}\label{Aeq:q}
q=2^{((t+1)t3^{2t}2^{c2t\log (2t)}+2t)}\cdot 2((t+1)t3^{2t}2^{c2t\log (2t)}+2t)^c t^c+(t+1)t3^{2t}2^{c2t\log (2t)}+2t.
\end{equation}
The choice of $q$ will become clear later in the proof. 
Notice that $q=2^{2^{\Oh(t\log t)}}$.

Consider an instance $(G,I,O,B,\omega,t,T)$ of \probBORDF.

We use the algorithm from Lemma~\ref{lem:unbreakable} for $G$. This algorithm in time $2^{2^{\Oh(t\log t)}}\cdot n^{\Oh(1)}$ either  finds a separation $(U,W)$ of $G$ of order at most $t$ such that $|U\setminus W|>q$ and $|W\setminus U|>q$ or correctly reports that $G$ is $((2q+1)q\cdot 2^t,t)$-unbreakable. In the latter case we solve the problem using Lemma~\ref{lem:unbreak} in time $2^{2^{2^{\Oh(t\log t)}}}\cdot n^{\Oh(1)}$. 
Assume from now that there is a separation  $(U,W)$ of order at most $t$ such that $|U\setminus W|>q$ and $|W\setminus U|>q$.

Recall that $|T|\leq 2t$. Then $|T\cap (U\setminus W)|\leq t$ or $|T\cap (W\setminus U)|\leq t$. Assume without loss of generality that $|T\cap (W\setminus U)|\leq t$. Let $\tilde{G}=G[W]$, $\tilde{I}=I\cap W$, $\tilde{O}=O\cap W$, $\tilde{\omega}$ is the restriction of $\omega$ to $W$, and define $\tilde{T}=(T\cap W)\cup (U\cap W)$. Since $|U\cap W|\leq t$, $|\tilde{T}|\leq 2t$.

If $|W|\leq (2q+1)q\cdot 2^t$, then we solve \probBORDF for the instance $(\tilde{G},\tilde{I},\tilde{O},\tilde{B},\tilde{\omega},t,\tilde{T})$ by brute force in time $2^{2^{2^{\Oh(t\log t)}}}$ trying all possible subset of $W$ at most $t+1$ values of $0\leq t'\leq t$. Otherwise, we solve  $(\tilde{G},\tilde{I},\tilde{O},\tilde{B},\tilde{\omega},t,\tilde{T})$ recursively. Let $\mathcal{R}$ be the set of nonempty induced subgraphs $R$ that are included in the obtained solution for $(\tilde{G},\tilde{I},\tilde{O},\tilde{B},\tilde{\omega},t,\tilde{T})$. 

For $R\in \mathcal{R}$, define $S_R$ to be the set of vertices of $W\setminus V(R)$ that are adjacent to the vertices of $R$ in the graph obtained by the border complementation for which $R$ is a solution of the corresponding instance of \probMAXF. Note that $|S_R|\leq t$.
If $\mathcal{R}\neq\emptyset$, then let $S=\tilde{T}\cup_{R\in\mathcal{R}}S_R$, and $S=\tilde{T}$ if $\mathcal{R}=\emptyset$.
Since \probMAXF is solved for at most $t+1$ of values of $t'\leq t$, at most $3^{2t}$ three-partitions $(X,Y,Z)$ of $\tilde{T}$ and at most $2^{c2t\log (2t)}$ choices of a $p$-boundaried graph $H\in \mathcal{F}_b$ for $p=|X|$, we have that $|\mathcal{R}|\leq (t+1)3^{2t}2^{c2t\log (2t)}$. Taking into account that $|T'|\leq 2t$, 
\begin{equation}\label{Aeq:s}
|S|\leq (t+1)t3^{2t}2^{c2t\log (2t)}+2t.
\end{equation}

Let $\hat{B}=(B\cap U)\cup (B\cap S)$. We claim that the instances $(G,I,O,B,\omega,t,T)$ and $(G,I,O,\hat{B},\omega,t,T)$ of \probBORDF are equivalent. 

Recall that we have to show that 
\begin{itemize}
\item[(i)]  $T\cap B=T\cap \hat{B}$,
\item[(ii)] for the border complementations $(G',I',O',B',\omega',t')$ and $(G',I',O',\hat{B}',\omega',t')$ of the instances $(G,I,O,B,\omega,t')$ and $(G,I,O,\hat{B},\omega,t')$ respectively of \probMAXF with respect to every feasible $(X=\{x_1,\ldots,x_p\}$, $Y,Z,H)$ and $t'\leq t$, it holds that if $(G',I',O',B',\omega',t')$ has a nonempty solution $R_1$, then $(G',I',O',\hat{B}',\omega',t')$ has a nonempty solution $R_2$ with $\omega'(V(R_2))\geq \min\{\omega'(V(R_1)),w\}$ and, vice versa, if $(G',I',O',\hat{B}',\omega',t')$ has a nonempty solution $R_2$, then $(G',I',O',B',\omega',t')$ has a nonempty solution $R_1$ with $\omega'(V(R_1))\geq \min\{\omega'(V(R_2)),w\}$.
\end{itemize}

The condition (i) holds by the definition of $\hat{B}$. Because $\hat{B}\subseteq B$,  we immediately obtain that if $(G',I',O',\hat{B}',\omega',t')$ has a nonempty solution $R_2$, then $(G',I',O',B',\omega',t')$ has a nonempty solution $R_1$ with $\omega'(V(R_1))\geq \min\{\omega'(V(R_2)),w\}$. 
It remains to prove that  for a border complementation $(G',I',O',B',\omega',t')$ and $(G',I',O',\hat{B}',\omega',t')$ of $(G,I,O,B,\omega,t')$ and $(G,I,O,\hat{B},\omega,t')$ respectively of \probMAXF with respect to a feasible $(X=\{x_1,\ldots,x_p\},Y,Z,H)$ and $t'\leq t$, it holds that if $(G',I',O',B',\omega',t')$ has a nonempty solution $R_1$, then $(G',I',O',\hat{B}',\omega',t')$ has a nonempty solution $R_2$ with $\omega'(V(R_2))\geq \min\{\omega'(V(R_1)),w\}$.

If $V(R_1)\cap V(G)\subseteq U\setminus W$, then $N_{G'}(V(R_1))\subseteq \hat{B}'$. Therefore, for a solution $R_2$ of $(G',I',O',\hat{B}',\omega',t')$, $\omega'(V(R_2))\geq \min\{\omega'(V(R_1)),w\}$. Assume that $V(R_1)\cap W\neq \emptyset$. 
Let $\tilde{X}=\tilde{T}\cap(V(R_1)\cap W)=\{y_1,\ldots,y_r\}$, let $\tilde{Y}$ be the set of vertices of $\tilde{T}\setminus V(R_1)$ that are adjacent to vertices of $R_1$
outside  $W\setminus U$ and $\tilde{Z}=\tilde{T}\setminus (\tilde{X}\cup \tilde{Y})$. 
Let $(R_1',(y_1,\ldots,y_r))$ be the $r$-bounderied graph obtained from $R_1$ by the deletion of the vertices of $W\setminus U$ (note that the graph could be empty).
We have that $\mathcal{G}_r$ contains an $r$-boundaried graph $\tilde{H}$ that is equivalent to $(R_1',(y_1,\ldots,y_r))$. 
Recall that we have a solution of \probBORDF for $(\tilde{G},\tilde{I},\tilde{O},\tilde{B},\tilde{\omega},t,\tilde{T})$. In particular, we have a solution $\tilde{R}\in \mathcal{R}$ for the instance 
$(\tilde{G},\tilde{I},\tilde{O},\tilde{B},\tilde{\omega},\tilde{t})$ of \probMAXF obtained by the border complementation with respect to $(\tilde{X}=(y_1,\ldots,y_r),\tilde{Y},\tilde{Z},\tilde{H})$, where $\tilde{t}$ is the number of neighbors of $R_1$ in $W$.  
Recall also that the neighbors of the vertices of $\tilde{R}$ are in $S$.
Denote by $(\tilde{R}',(y_1,\ldots,y_r))$ the $r$-bounderied subgraph obtained from $\tilde{R}$ by the deletion of the vertices that are outside of  $W$. 
By Lemma~\ref{lem:eq}, $R_2=(R_1',(y_1,\ldots,y_r))\oplus_b (\tilde{R},(y_1,\ldots,y_r))$ is $\mathcal{F}$-free. Observe also that $\omega'(R_2)\geq \min\{\omega'(R_1),w\}$. It implies that 
 $(G',I',O',\hat{B}',\omega',t')$ has a nonempty solution $R_2$ with $\omega'(V(R_2))\geq \min\{\omega'(V(R_1)),w\}$.

Since, $(G,I,O,B,\omega,t,T)$ and $(G,I,O,\hat{B},\omega,t,T)$ of \probBORDF are equivalent, we can consider $(G,I,O,\hat{B},\omega,t,T)$. Now we apply some reduction rules that produce equivalent instances of \probBORDF or report that we have no solution. The ultimate aim of these rules is to reduce the size of $G$. 

Let $Q$ be a component of $G[W]-S$. Notice that for any nonempty graph  $R$ in a solution of $(G,I,O,\hat{B},w,t,T)$, either $V(Q)\subseteq V(R)$ or $V(Q)\cap V(R)=\emptyset$, because $N_{G[W]}(V(R))\subseteq S$. Moreover, if
$V(Q)\cap V(R)=\emptyset$, then $N_{G[W]}[V(Q)]\cap V(R)=\emptyset$. 
Notice also that if $v\in N_{G[W]}(V(Q))$ is a vertex of $R$, then  $V(Q)\subseteq V(R)$.
These observation are crucial for the following reduction rules.

\begin{reduction}\label{Ared:one-rec}
For a component $Q$ of $G[W]-S$ do the following in the given order:
\begin{itemize}
\item if $N_{G[W]}[V(Q)]\cap I\neq\emptyset$ and $V(Q)\cap O\neq\emptyset$, then return $\emptyset$ and stop,
\item if $N_{G[W]}[V(Q)]\cap I\neq\emptyset$, then set $I=I\cup V(Q)$,
\item if $V(Q)\cap O\neq\emptyset$, then set $O=O\cup N_{G[W]}[V(Q)]$.
\end{itemize}
\end{reduction}

The rule is applied to each component $Q$ exactly once. Notice that after application of the rule, for every component $Q$ of $G[W]-S$, we have that either $V(Q)\subseteq I$ or $V(Q)\subseteq O$ or $V(Q)\cap (I\cup O\cup\hat{B})=\emptyset$.

Suppose that $Q_1$ and $Q_2$ are components of $G[W]-S$ such that $N_{G[W]}(V(Q_1))=N_{G[W]}(V(Q_2))$ and $|N_{G[W]}(V(Q_1))|=|N_{G[W]}(V(Q_2))|>t$. Then if $V(Q_1)\subseteq V(R)$ for a 
nonempty graph  $R$ in a solution of $(G,I,O,\hat{B},\omega,t,T)$, then at least one vertex of $N_{G[W]}(V(Q_1))$ is in $R$ as  $R$ have at most $t$ neighbors outside $R$. This gives the next rule.

\begin{reduction}\label{Ared:two-rec}
For components $Q_1$ and $Q_2$ of $G[W]-S$ such that  $N_{G[W]}(V(Q_1))=N_{G[W]}(V(Q_2))$ and $|N_{G[W]}(V(Q_1))|=|N_{G[W]}(V(Q_2))|>t$ do the following in the given order:
\begin{itemize}
\item if $(V(Q_1)\cup V(Q_2))\cap I\neq\emptyset$ and $(V(Q_1)\cup V(Q_2))\cap O\neq\emptyset$, then return $\emptyset$ and stop,
\item if $(V(Q_1)\cup V(Q_2))\cap I\neq\emptyset$, then set $I=I\cup (V(Q_1)\cup V(Q_2))$,
\item if $(V(Q_1)\cup V(Q_2))\cap O\neq\emptyset$, then set $O=O\cup N_{G[W]}[V(Q_1)\cup V(Q_2)]$.
\end{itemize}
\end{reduction}

We apply the rule for all pairs of components $Q_1$ and $Q_2$ with $N_{G[W]}(V(Q_1))=N_{G[W]}(V(Q_2))$ and $|N_{G[W]}(V(Q_1))|=|N_{G[W]}(V(Q_2))|>t$, and for each pair the rule is applied once. 

If $V(Q)\subseteq O$ for a component $Q$ of $G[W]-S$, then $N_{G[W]}(V(Q))\subseteq O$. It immediately implies that the vertices of $Q$ are irrelevant and can be removed.

\begin{reduction}\label{Ared:del-rec}
If there is a component $Q$ of $G[W]-S$ such that  $N_{G[W]}(V(Q))\subseteq O$, then 
set $G=G-V(Q)$, $W=W\setminus V(Q)$ and $O=O\setminus V(Q)$.
\end{reduction}

Notice that for each component $Q$, we have now that either $V(Q)\subseteq I$ or $V(Q)\subseteq W\setminus (I\cup O\cup\hat{B})$.

To define the remaining rules, we construct
 the sets $\mathcal{G}_p'$ for $p\in \{0,\ldots,|S|\}$ in time $2^{2^{\Oh(t\log t)}}$ using Lemma~\ref{lem:eq-constr}. 

Let $Q$ be a component of $G[W]-S$ and let $N_{G[W]}(V(Q))=\{x_1,\ldots,x_p\}$. Let $G'$ be the graph obtained from $G$ by the deletion of the vertices of $V(Q)$ and let $x=(x_1,\ldots,x_p)$. Let $(H,y)$ be a connected $p$-boundaried graph of the same weight as $G[N_{G[W]}[V(Q)]]$.  Then by Lemma~\ref{lem:eq}, we have that the instance of \probBORDF obtained from 
$(G,I,O,\hat{B},\omega,t,T)$ by the replacement of $G$ by $(G',x)\oplus_b (H,y)$ is equivalent to $(G,I,O,\hat{B},\omega,t,T)$. We use it in the remaining reduction rules.

Suppose again that $Q_1$ and $Q_2$ are components of $G[W]-S$ such that $N_{G[W]}(V(Q_1))=N_{G[W]}(V(Q_2))$ and $|N_{G[W]}(V(Q_1))|=|N_{G[W]}(V(Q_2))|>t$. 
Then, as we already noticed, if $V(Q_1)\cup V(Q_2)\subseteq V(R)$ for a 
nonempty graph  $R$ in a solution of $(G,I,O,\hat{B},\omega,t,T)$, then at least one vertex of $N_{G[W]}(V(Q_1))$ is in $R$. 
It means that if we are constructing a solution $R$, then the restriction of the size of the neighborhood of $R$ ensures the connectivity between $Q_1$ and $Q_2$ if we decide to include these components in $R$. 
Together with Lemma~\ref{lem:eq} this shows that the following rule is safe.

\begin{reduction}\label{Ared:three-rec}
Let $L=\{x_1,\ldots,x_p\}\subseteq S$, $p>t$, and let $x=(x_1,\ldots,x_p)$. Let also $Q_1,\ldots,Q_r$,$r\geq 1$, be the components of $G[W]-S$ with $N_{G[W]}(V(Q_i))=L$ for all $i\in\{1,\ldots,r\}$.
Let $Q=G[\cup_{i=1}^rN_{G[W]}[V(Q_i)]]$ and $w'=\sum_{i=1}^r\omega(V(Q_i))$. Find a $p$-boundaried graph $(H,y)\in\mathcal{G}_p'$ 
 that is equivalent to $(Q,x)$ with respect to $\mathcal{F}_b$ and denote by $A$ the set of nonboundary vertices of $H$.
Then do the following.
\begin{itemize}
\item Delete the vertices of $V(Q_1),\ldots,V(Q_r)$ from $G$ and denote the obtained graph $G'$.
\item Set $G=(G',x)\oplus_b (H,y)$ and $W=(W\setminus\cup_{i=1}^rV(Q_i))\cup A$.
\item Select arbitrarily $u\in A$ and modify $\omega$ as follows:
\begin{itemize}
\item keep the weight same for every $v\in V(G')$ including the boundary vertices $x_1,\ldots,x_p$,
\item set $\omega(v)=0$ for $v\in A\setminus\{u\}$,
\item set $\omega(u)=w'$.
\end{itemize} 
\item If $V(Q_1)\subseteq I$, then set $I=I\setminus(\cup_{i=1}^rV(Q_i))\cup A$.
\end{itemize}
\end{reduction}

To see the safeness of the rule, observe additionally that the neighborhood of  each component of $H[A]$ is $L$, because $(H,y)\in \mathcal{G}_p'$. 
The rule is applied exactly once for each inclusion maximal sets of components $\{Q_1,\ldots,Q_r\}$ having the same neighborhood of size at least $t+1$.

We cannot apply this trick if we have several components $Q_1,\ldots,Q_r$ of $G[W]-S$ with the same neighborhood $N_{G[W]}V(Q_i)$ if $|N_{G[W]}V(Q_i)|\leq t$. Now it can happen that there are $i,j\in \{1,\ldots,r\}$ such that $V(Q_i)\subseteq V(R)$ and $N_{G[W]}[V(Q_j)]\cap V(R)=\emptyset$ for  $R$ in a solution of $(G,I,O,\hat{B},\omega,t,T)$. But if $N_{G[W]}[V(Q_j)]\cap V(R)=\emptyset$ , then by the connectivity of $R$ and the fact that $G[W]-S$ does not contain border terminals, we have that $R=Q_i$. Notice that $I=\emptyset$ in this case and, in particular, it means that $R$ is a solution for an instance of \probMAXF obtained by the border complementation with respect to $(\emptyset,\emptyset,T,\emptyset)$. Recall that we output $R$ in this case only if its weight is at least $w$. 

Still, we can modify Reduction Rule~\ref{Ared:three-rec} for the case when there are components $Q$ of $G[W]-S$ such that $V(Q)\subseteq I$. Notice that if there are components 
 $Q_0,\ldots,Q_r$ of $G[W]-S$ with the same neighborhood and $V(Q_0)\subseteq I$, then for any nonempty $R$ in a solution of  $(G,I,O,\hat{B},\omega,t,T)$, either $R=Q_0$ or $\cup_{i=0}^rV(Q_i)\subseteq V(R)$. Applying Lemma~\ref{lem:eq} , we obtain that the following rule is safe.

\begin{reduction}\label{Ared:four-rec}
Let $L=\{x_1,\ldots,x_p\}\subseteq S$, $p\leq t$, and let $x=(x_1,\ldots,x_p)$. Let also $Q_0,\ldots,Q_r$,$r\geq 0$, be the components of $G[W]-S$ with $N_{G[W]}(V(Q_i))=L$ for all $i\in\{0,\ldots,r\}$ such that $V(Q_0)\subseteq I$. Let $Q=G[\cup_{i=1}^rN_{G[W]}[V(Q_i)]]$ and $w'=\sum_{i=1}^r\omega(V(Q_i))$.
Find a $p$-boundaried graph $(H_0,y)\in\mathcal{G}_p'$ 
 that is equivalent to $(Q_0,x)$ with respect to $\mathcal{F}_b$  and denote by $A_0$ the set of nonboundary vertices of $H_0$, and 
find a  $p$-boundaried graph $(H,y)\in\mathcal{G}_p'$ 
 that is equivalent to $(Q,x)$ with respect to $\mathcal{F}_b$ and denote by $A$ the set of nonboundary vertices of $H$.
Then do the following.
\begin{itemize}
\item Delete the vertices of $V(Q_0),\ldots,V(Q_r)$ from $G$ and denote the obtained graph $G'$.
\item Set $G=(((G',x)\oplus_b (H_0,y)),y)\oplus_b (H,y)$ and $W=(W\setminus\cup_{i=0}^rV(Q_i))\cup A_0\cup A$.
\item Select arbitrarily $u\in A_0$ and $v\in A$ and modify $\omega$ as follows:
\begin{itemize}
\item keep the weight same for every $z\in V(G')$ including the boundary vertices $x_1,\ldots,x_p$,
\item set $\omega(z)=0$ for $z\in (A_0\setminus\{u\})\cup(A\setminus\{v\})$,
\item set $\omega(u)=\omega(V(Q_0))$ and $\omega(v)=w'$.
\end{itemize} 
\item If $V(Q_i)\subseteq I$ for some $i\in\{1,\ldots,r\}$, then set $I=I\setminus(\cup_{i=1}^rV(Q_i))\cup A$.
\end{itemize}
\end{reduction}

To see the safeness, notice additionally that $H_0[A_0]$ is connected, because $(H_0,y)\in \mathcal{G}_p'$ and $Q_0$ is connected.
The rule is applied exactly once for each inclusion maximal set of components $\{Q_1,\ldots$, $Q_r\}$ having the same neighborhood of size at most $t$ such that $V(Q_i)\subseteq I$ for some $i\in\{1,\ldots,r\}$.

Assume now that we have an inclusion maximal set of components $\{Q_1,\ldots$ ,$Q_r\}$ of $G[W]-S$ with the same neighborhoods $N_{G[W]}=\{x_1,\ldots,x_p\}$ such that $(G[N_{G[W]}[V(Q_i)]],(x_1,\ldots,x_p))$ and $(G[N_{G[W]}[V(Q_j)]],(x_1,\ldots,x_p))$ are equivalent with respect to $\mathcal{F}_b$ for each $i,j\in\{1,\ldots,p\}$. Suppose also that $V(Q_i)\cap I=\emptyset$ for $i\in\{1,\ldots,r\}$.
 Let $\omega(V(Q_1))\geq  \omega(V(Q_i))$ for every $i\in\{1,\ldots,r\}$. Recall that if $R$ is a nonempty graph in a solution, then either $R=Q_i$ for some $i\in \{1,\ldots,r\}$ or $\cup_{i=1}^r V(Q_i)\subseteq V(R)$. Recall also that   $R$ is a solution for the instance of \probMAXF obtained by a border complementation with respect to $(\emptyset,\emptyset,T,\emptyset)$ and we output it only if $\omega(V(R))\geq w$. Since 
all $(G[N_{G[W]}[V(Q_i)]],(x_1,\ldots,x_p))$ are equivalent, we can assume that if $R=Q_i$, then $i=1$, because $Q_1$ has maximum weight. Then by Lemma~\ref{lem:eq}, our final reduction rule is safe.

\begin{reduction}\label{Ared:five-rec}
Let $L=\{x_1,\ldots,x_p\}\subseteq S$, $p\leq t$, and let $x=(x_1,\ldots,x_p)$. Let also $Q_0,\ldots,Q_r$,$r\geq 0$, be the components of $G[W]-S$ with $N_{G[W]}(V(Q_i))=L$ for all $i\in\{0,\ldots,r\}$ such that $\omega(V(Q_0))\geq \omega(V(Q_i))$ for every $i\in\{1,\ldots,r\}$ and 
the $p$-boundaried graphs $(G[N_{G[W]}[V(Q_i)]],(x_1,\ldots,x_p))$ are pairwise equivalent with respect to $\mathcal{F}_b$ for $i\in\{0,\ldots,r\}$. 
Let $Q=G[\cup_{i=1}^rN_{G[W]}[V(Q_i)]]$ and $w'=\min\{w-1,\sum_{i=1}^r\omega(V(Q_i))\}$.
Find a $p$-boundaried graph $(H_0,y)\in\mathcal{G}_p'$ 
 that is equivalent to $(Q_0,x)$ with respect to $\mathcal{F}_b$  and denote by $A_0$ the set of nonboundary vertices of $H_0$, and 
find a  $p$-boundaried graph $(H,y)\in\mathcal{G}_p'$ 
 that is equivalent to $(Q,x)$ with respect to $\mathcal{F}_b$ and denote by $A$ the set of nonboundary vertices of $H$.
Then do the following.
\begin{itemize}
\item Delete the vertices of $V(Q_0),\ldots,V(Q_r)$ from $G$ and denote the obtained graph $G'$.
\item Set $G=(((G',x)\oplus_b (H_0,y)),y)\oplus_b (H,y)$ and $W=(W\setminus\cup_{i=0}^rV(Q_i))\cup A_0\cup A$.
\item Select arbitrarily $u\in A_0$ and $v\in A$ and modify $\omega$ as follows:
\begin{itemize}
\item keep the weight same for every $z\in V(G')$ including the boundary vertices $x_1,\ldots,x_p$,
\item set $\omega(z)=0$ for $z\in (A_0\setminus\{u\})\cup(A\setminus\{v\})$,
\item set $\omega(u)=\omega(V(Q_0))$ and $\omega(v)=w'$.
\end{itemize} 
\end{itemize}
\end{reduction}

Notice that we upper bound the weight of $A$ by $w-1$ to prevent selecting $R=G[A]$ as a graph in a solution. To see that it is safe, observe that if $\omega(V(Q_0))>0$, then the total weight of $A_0$ and $A$ is at least $w$ and recall that by the definition of \probMAXF, we output nonempty graphs of maximum weight or weight at least $w$. Therefore, if we output $R$ that includes $A_0\cup A$, then we output a graph of weight at least $w$. If $\omega(Q_0)=0$, then $w'=\sum_{i=1}^r\omega(V(Q_i))=0$. 
  
The Reduction Rule~\ref{Ared:five-rec} is applied for each inclusion maximal sets of components $\{Q_1,\ldots$, $Q_r\}$ satisfying the conditions of the rule such that Reduction Rule~\ref{Ared:four-rec} was not applied to these components before.

Denote by $(G^*,I^*,O^*,B^*,\omega^*,t,T)$ the instance of \probBORDF obtained from  $(G,I,O,\hat{B},\omega,t,T)$ by Reduction Rules~\ref{Ared:one-rec}-\ref{Ared:five-rec}. Notice that all modifications were made for $G[W]$.
Denote by $W^*$ the set of vertices of the graph obtained from the initial $G[W]$ by the rules. Observe that there are at most $2^{|S|}$ subsets $L$ of $S$ such that there is a component $Q$ of $G[W]-S$ with $N_{G[W]}(V(Q))=L$. If $|L|>t$, then all $Q$ with $N_{G[W]}(V(Q))=L$ are replaced by one graph  by Reduction Rule~\ref{Ared:three-rec} and the number of vertices of this graph is at most $|L|^c$ by Lemma~\ref{lem:size-eq} and the definition of $c$. If $|L|\leq t$, then we either apply Reduction Rule~\ref{Ared:four-rec} for all $Q$ with $N_{G[W]}(V(Q))=L$ and replace these components by two graph with at most  $|L|^c$ vertices or we apply Reduction Rule~\ref{Ared:five-rec}. For the latter case, observe that there are at most $t^c$ partitions of the components $Q$ with $N_{G[W]}(V(Q))=L$ into  equivalence classes with respect to $\mathcal{F}_b$
by Lemma~\ref{lem:size-eq}. Then we replace each class by two graphs with at most $|L|^c$ vertices.
Taking into account the vertices of $S$, we obtain the following upper bound for the size of $W^*$:
\begin{equation}\label{Aeq:w*}
|W^*|\leq 2^{|S|}2|S|^c t^c+|S|.
\end{equation}
By (\ref{Aeq:q}) and (\ref{Aeq:s}), $|W^*|\leq q$.
Recall that $|W\setminus U|>q$. Therefore, $|V(G^*)|<|V(G)|$. We use it and solve \probBORDF for $(G^*,I^*,O^*,B^*,\omega^*,t,T)$ recursively.

To evaluate the running time, denote by $\tau(G,I,O,B,\omega,t,T)$ the time needed to solve \linebreak \probBORDF for \linebreak $(G,I,O,B,\omega,t,T)$.
Lemmas~\ref{lem:eq-check} and \ref{lem:size-eq} imply that the reduction rules are polynomial. The algorithm from Lemma~\ref{lem:unbreakable} runs in time $2^{2^{\Oh(t\log t)}}\cdot n^{\Oh(1)}$. Notice that the sets $\mathcal{G}_p$ and $\mathcal{G}_p'$ can be constructed separately from the algorithm for \probBORDF. Then we obtain the following recurrence for the running time:
\begin{equation}\label{Aeq:run}
\tau(G,I,O,B,\omega,t,T)\leq \tau(G^*,I^*,O^*,B^*,\omega^*,t,T)+\tau(\tilde{G},\tilde{I},\tilde{O},\tilde{B},\tilde{\omega},t,\tilde{T})+2^{2^{\Oh(t\log t)}}\cdot n^{\Oh(1)}.
\end{equation}
Because $|W^*|\leq q$,
\begin{equation}\label{Aeq:vert}
|V(G^*)|\leq |V(G)|-|V(\tilde{G})|+q.
\end{equation}
Recall that if the algorithm of Lemma~\ref{lem:unbreakable} reports that $G$ is $((2q+1)q\cdot 2^t,t)$-unbreakable or we have that $|V(G)|\leq ((2q+1)q\cdot 2^t$, we do not recurse but solve the problem directly in time $2^{2^{2^{\Oh(t\log t)}}}\cdot n^{\Oh(1)}$. Following the general scheme from~\cite{ChitnisCHPP16}, we obtain that these condition together with (\ref{Aeq:run}) and (\ref{Aeq:vert}) imply that the total running time is  $2^{2^{2^{\Oh(t\log t)}}}\cdot n^{\Oh(1)}$.
\end{proof}

Now have now all the details in place to be able to prove Theorem \ref{thm:forb}, stating that \probLForb is \classFPT when parameterized by $t$.

\medskip
\noindent
{\bf Theorem~\ref{thm:forb}.}{\it
\probLForb can be solved in time  $2^{2^{2^{\Oh(t\log t)}}}\cdot n^{\Oh(1)}$.
}

\begin{proof}
Let $(G,\omega,t,w)$ be an instance of \probLForb. We define $I=\emptyset$, $O=\emptyset$, $B=V(G)$ and $T=\emptyset$. Then we solve \probBORDF for $(G,I,O,B,w,t,T)$ using Lemma~\ref{lem:bordforb}
 in time $2^{2^{2^{\Oh(t\log t)}}}\cdot n^{\Oh(1)}$. It remains to notice that $(G,\omega,t,w)$ is a yes-instance of \probLForb if and only if $(G,I,O,B,\omega,t,T)$ has a nonempty graph in a solution.
\end{proof}

\section{Large Secluded Trees}\label{sec:fpt-tree}
In this section we show that \probLCSS is \classFPT when parameterized by $t$ when $\Pi$ is defined by a infinite set of forbidden induced subgraphs, namely, by the set of cycles.
In other words, a graph $G$ has the property $\Pi$ considered in this section if $G$ is a forest. 
We refer to this problem as \probLT. We again apply the recursive understanding technique introduced by Chitnis et al.~\cite{ChitnisCHPP16} and follow the scheme of the previous section.
In particular, we solve a special variant of \probLT tailored for recursion.

We define the following auxiliary problem for a positive integer $w$.  

 \defproblema{\probMAXT}%
{A graph $G$, sets $I,O,B\subseteq V(G)$ such that $I\cap O=\emptyset$ and $I\cap B=\emptyset$, a weight function $\omega\colon V(G)\rightarrow\mathbb{Z}_{\geq 0}$ and a nonnegative integer $t$.}%
{Find a $t$-secluded  induced connected subtree $H$ of $G$ of maximum weight or weight at least $w$
such that $I\subseteq V(H)$, $O\subseteq V(G)\setminus V(H)$ and $N_G(V(H))\subseteq B$ and output $\emptyset$ if such a subgraph does not exist.}

Let $(G,I,O,B,\omega,t)$ be an instance of \probMAXT and let $T\subseteq V(G)$ be a set of \emph{border terminals}.
We say that a 4-tuple $(X,Y,Z,\mathcal{P})$, where  $(X,Y,Z)$ is a partition of $T$ (some sets could be empty) and $\mathcal{P}=(P_1,\ldots,P_s)$ is a partition of $X$ into nonempty sets if $X\neq\emptyset$ and $\mathcal{P}=\emptyset$ otherwise, is  \emph{feasible} if $G[X]$ is a forest,
$Y=\emptyset$ if $X=\emptyset$, $I\cap T\subseteq X$, $O\cap T\subseteq Y\cup Z$ and $Y\subseteq B$, 
and if $X\neq\emptyset$, then the vertices of each component of $G[X]$ are in the same set of the partition $\mathcal{P}$.
 We say that an instance $(G',I',O',B',\omega',t')$ is obtained by the \emph{border complementation (with respect to feasible 4-tuple $(X,Y,Z,\mathcal{P})$ )} for $X\neq\emptyset$ and  $\mathcal{P}=(P_1,\ldots,P_s)$ if 
\begin{itemize} 
\item[(i)] $G'$ is obtained from $G$ by adding vertices $u_1,\ldots,u_s$ and making $u_i$ adjacent to an arbitrary vertex of each component of $G[X]$ in $G[P_i]$ 
and to the vertices of $Y$ for $i\in \{1,\ldots,s\}$,
\item[(ii)] $I'= I\cup\{u\}$,
\item[(iii)] $O'=O\cup Y\cup Z$,
\item[(iv)] $B'=B\setminus X$,
\item[(v)] $\omega'(v)=\omega(v)$ for $v\in V(G)$ and $\omega'(u)=0$,
\item[(vi)] $t'\leq t$.
\end{itemize}
If $X=\emptyset$, then $(G',I',O',B',w',t')$ is obtained by the \emph{border complementation (with respect to $(X,Y,Z,\mathcal{P})$ )} if 
$G'=G$, $I'=I$, $O'=O\cup T$, $B'=B$, $\omega'(v)=\omega(v)$ for $v\in V(G)$ and $t'\leq t$.

 \defproblema{\probBORDT}%
{A graph $G$, sets $I,O,B\subseteq V(G)$ such that $I\cap O=\emptyset$ and $I\cap B=\emptyset$, a weight function $\omega\colon V(G)\rightarrow\mathbb{Z}_{\geq0}$, a nonnegative integer $t$, and a set $T\subseteq V(G)$ of border terminals of size at most $2(t+1)$.}{Output a solution for each instance $(G',I',O',B',\omega',t')$ of \probMAXT that can be obtained from $(G,I,O,B,w,t)$ by a border complementation distinct from the border complementation with respect to $(\emptyset,\emptyset,T,\emptyset)$, and for the border complementation with respect to $(\emptyset,\emptyset,T,\emptyset)$ output a nonempty solution if it has weight at least $w$ and output $\emptyset$ otherwise.}

Two instances $(G_1,I_1,O_1,B_1,\omega_1,t,T)$ and $(G_2,I_2,O_2,B_2,\omega_2,t,T)$ of \probBORDT (note that $t$ and $T$ are the same) are said to be \emph{equivalent} if 
\begin{itemize}
\item[(i)] $T\cap I_1=T\cap I_2$, $T\cap O_1=T\cap O_2$ and $T\cap B_1=T\cap B_2$,
\item[(ii)] for the border complementations $(G_1',I_1',O_1',B_1',\omega_1',t')$ and $(G_2',I_2',O_2',B_2',\omega_2',t')$ of the instances $(G_1,I_1,O_1,B_1,\omega_1,t')$ and $(G_2,I_2,O_2,B_2,\omega_2,t')$ respectively of \probMAXF with respect to every feasible $(X,Y,Z,\mathcal{P})$ and $t'\leq t$, it holds that if $(G_1',I_1',O_1',B_1',\omega_1',t')$ has a nonempty solution $R_1$, then $(G_2',I_2',O_2',B_2',\omega_2',t')$ has a nonempty solution $R_2$ with $\omega_2'(V(R_2))\geq \min\{\omega_1'(V(R_1)),w\}$ and, vice versa, if $(G_2',I_2',O_2',B_2',\omega_2',t')$ has a nonempty solution $R_2$, then $(G_1',I_1',O_1',B_1',\omega_1',t')$ has a nonempty solution $R_1$ with $\omega_1'(V(R_1))\geq \min\{\omega_2'(V(R_2))$, $w\}$.
\end{itemize}
As in the previous section we not distinguish equivalent instances of \probBORDT and their solutions.

\subsection{High connectivity phase}
In this section we solve \probBORDT for $(q,t+1)$-unbreakable graphs. We need the following folklore lemma.

\begin{lemma}\label{lem:balance}
Every tree $T$ has a separation $(A,B)$ of order 1 such that $|A\setminus B|\leq \frac{2}{3}|V(T)|$ and $|B\setminus A|\leq \frac{2}{3}|V(T)|$.
\end{lemma}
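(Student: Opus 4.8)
The plan is to prove this folklore fact about trees by a standard centroid-type argument. We want to find a single vertex whose removal (or rather, whose placement in both sides of the separation) splits the tree into pieces each of size at most $\frac{2}{3}|V(T)|$.

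First I would set $n=|V(T)|$ and root the tree at an arbitrary vertex $r$. For each vertex $v$, let $s(v)$ denote the number of vertices in the subtree rooted at $v$ (so $s(r)=n$). I would then walk down from the root: starting at $v=r$, repeatedly move to a child $c$ of $v$ maximizing $s(c)$, stopping at the first vertex $v$ such that $s(v) \le \frac{2}{3}n$ but $s(c) \le \frac{2}{3}n$ for every child $c$ of $v$ — more precisely, stop at the last vertex on this descent whose subtree still has more than $\frac{2}{3}n$ vertices, and take $v$ to be its largest child; or, if the chosen child already drops below the threshold, stop there. The cleanest phrasing: let $v$ be a vertex minimizing $s(v)$ subject to $s(v) > \frac{2}{3}n$, if such exists; otherwise the root already works. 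Actually the crispest version: choose $v$ farthest from the root with $s(v) > \frac{2}{3}n$. Then every child $c$ of $v$ has $s(c) \le \frac{2}{3}n$, and the "rest of the tree" (vertices outside $v$'s subtree) has $n - s(v) < n - \frac{2}{3}n = \frac{1}{3}n < \frac{2}{3}n$.

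Next I would define the separation. Set $A = V(T_v)$, the vertex set of the subtree rooted at $v$, and $B = (V(T) \setminus V(T_v)) \cup \{v\}$. Then $A \cap B = \{v\}$, so the separation has order $1$; there is no edge between $A \setminus B$ and $B \setminus A$ because any such edge would be an edge of $T$ with one endpoint strictly inside $T_v$ and one outside, and the only such edge is the one from $v$ to its parent, but $v \in B$. We have $|A \setminus B| = s(v) - 1 < \frac{2}{3}n$ and $|B \setminus A| = n - s(v) < \frac{1}{3}n \le \frac{2}{3}n$, as required. If no vertex has $s(v) > \frac{2}{3}n$ other than the root, then in fact $s(r)=n$ forces a degenerate case; but then a leaf or the root's largest child already gives the bound, and one checks directly that taking $v$ to be a child of $r$ of maximum subtree size works since all children have subtree size $\le \frac{2}{3}n$ and the complement has size $< \frac{1}{3}n$. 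So the descent always terminates at a valid $v$.

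**The main subtlety** — hardly an obstacle — is just making the case analysis airtight so that the chosen $v$ always exists and both sides genuinely come in under $\frac{2}{3}n$; this is handled by the observation that as we descend, $s$ is strictly decreasing and starts at $n > \frac{2}{3}n$, so there is a well-defined last vertex with $s(v) > \frac{2}{3}n$, and its children (the next step of any descent) all have $s \le \frac{2}{3}n$. One should also note the edge case $n \le 1$, which is trivial. I expect the whole proof to be a short paragraph.
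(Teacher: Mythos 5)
Your construction does not work, and the defect is visible on the very line where you verify the size bounds: you choose $v$ to be the vertex farthest from the root with $s(v)>\frac{2}{3}n$, put the \emph{entire} subtree $T_v$ into $A$, and then assert $|A\setminus B|=s(v)-1<\frac{2}{3}n$ --- but $s(v)>\frac{2}{3}n$ is exactly the defining property of $v$, so this inequality is (essentially) the negation of what you know. A concrete counterexample to the construction is the star $K_{1,n-1}$ rooted at its center $r$: every leaf has $s=1\le\frac{2}{3}n$, so your $v$ is $r$ itself, $A=V(T)$, $B=\{r\}$, and $|A\setminus B|=n-1>\frac{2}{3}n$ for $n\ge 4$. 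The alternative phrasings you float (stop at the largest child of the last vertex with $s>\frac{2}{3}n$, or handle the root separately by taking its largest child) fail on the same example: the largest child of $r$ is a single leaf, so then $|B\setminus A|=n-1$. The star also shows why no choice of the form ``one subtree versus the rest'' can succeed: for that tree the only admissible separation of order $1$ uses the center as the shared vertex and \emph{splits its children between the two sides}.

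That splitting step is the missing idea. Keep your vertex $v$ (farthest from the root with $s(v)>\frac{2}{3}n$; it exists since $s(r)=n$), and note that the components of $T-v$ are the subtrees of the children of $v$, each of size $s(c)\le\frac{2}{3}n$ by the choice of $v$, together with the outside part of size $n-s(v)<\frac{1}{3}n$. Now distribute these components into two groups, each of total size at most $\frac{2}{3}n$, and set $A$ and $B$ to be the two groups with $v$ added to both; this is a separation of order $1$ because every edge of $T$ not inside a component is incident to $v$. The grouping is a short case analysis: if some component has at least $\frac{n}{3}$ vertices, put it alone in one group (it has at most $\frac{2}{3}n$ vertices, and the rest total at most $n-1-\frac{n}{3}<\frac{2}{3}n$); otherwise every component has fewer than $\frac{n}{3}$ vertices, so add components to the first group until its size first reaches $\frac{n}{3}$ (then it is below $\frac{n}{3}+\frac{n}{3}=\frac{2}{3}n$) and put the remainder, of size at most $n-1-\frac{n}{3}<\frac{2}{3}n$, in the second group. (The paper states this lemma as folklore without proof, so there is no in-paper argument to compare against; the above is the standard one.)
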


\begin{lemma}\label{lem:unbreak-tree}
\probBORDT for $(q,t+1)$-unbreakable graphs can be solved in time  $2^{\Oh((t+\min\{q,t\})\log(q+t))}\cdot n^{\Oh(1)}$.
\end{lemma}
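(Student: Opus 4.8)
The plan is to mirror the high-connectivity phase for \probBORDF (Lemma~\ref{lem:unbreak}), adapting it to the acyclicity constraint. First I would enumerate all instances $(G',I',O',B',\omega',t')$ of \probMAXT that arise by a border complementation: there are at most $3^{2(t+1)}$ partitions $(X,Y,Z)$ of $T$, at most $2^{\Oh(t)}$ partitions $\mathcal{P}$ of $X$ into parts respecting the components of $G[X]$ (only relevant when $G[X]$ is a forest, which we check), at most $t+2$ choices of $t'\le t$, and the added apex vertices $u_1,\dots,u_s$ with $s\le |X|\le 2(t+1)$. For each such instance we must find a maximum-weight (or weight $\ge w$) $t$-secluded induced subtree $R$ with $I'\subseteq V(R)$, $O'\cap V(R)=\emptyset$ and $N_{G'}(V(R))\subseteq B'$. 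Using $(q,t+1)$-unbreakability exactly as in Lemma~\ref{lem:sol-size} — take the separation $(N_{G'}[V(R)\cap V(G)],V(G)\setminus V(R))$, which has order at most $t+1$ once we observe a subtree $R$ together with its $\le t$ outside-neighbors and the separating interface is bounded by $t+1$ — we split into the case $|V(R)\cap V(G)|\le q$ and the case $|V(G)\setminus V(R)|\le q+t$.

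In the small-solution case $|V(R)\cap V(G)|\le q$ we have $|V(R)|\le q+s\le q+2(t+1)$, so for each $k\le q+2(t+1)$ we invoke \probCSWCS (Theorem~\ref{thm:CSWCS}, with $\Pi$ the polynomially-recognizable property ``is a tree'' lifted to a colored property encoding the roles $I'/O'/B'$ and the apex vertices) to find a connected $t$-secluded induced subgraph on exactly $k$ vertices that is acyclic; this costs $2^{\Oh(\min\{k,t\}\log(k+t))}\cdot n^{\Oh(1)}=2^{\Oh(\min\{q,t\}\log(q+t))}\cdot n^{\Oh(1)}$. In the large-solution case, $|V(G)\setminus V(R)|\le q+t$, so there is a set $O'\subseteq S\subseteq V(G)\setminus V(R)$ with $|S|\le q+t$ whose removal makes $G'$ acyclic, i.e.\ $S$ is a feedback vertex set of $G'$ disjoint from $I'$. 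Here, instead of the constant-size branching used for a finite $\mathcal{F}$, I would enumerate all such ``small'' feedback vertex sets $S$ avoiding $I'$ — the number of minimal feedback vertex sets of size at most $q+t$ is $2^{\Oh(q+t)}\cdot n^{\Oh(1)}$ and they are enumerable in that time by the standard iterative-compression / branching machinery for \textsc{Feedback Vertex Set}. For each $S$ we set $O'\leftarrow O'\cup S$ and then proceed exactly as in Lemma~\ref{lem:unbreak}: apply the analogues of Reduction Rules on connectivity and on components of $G'-B'$, guess a vertex $u$ into the solution if $I'=\emptyset$, build the auxiliary graph $G''$ by replacing each non-$(I'\cup O'\cup B')$ vertex with $t+1$ true twins to force $N(V(R))\subseteq B'$, and list the candidate maximal subgraphs via the important $(N_{G''}[I'],O')$-separators of size $\le t'$ (Lemma~\ref{lem:imp-sep}), then among the preimages keep the connected one of maximum weight — which, being $F$-free here means literally acyclic, so we additionally discard any candidate $R$ with a cycle (a polynomial check). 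Finally compare the two cases and output the heavier subgraph or $\emptyset$.

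The main obstacle, and the one point where the forest case genuinely differs from the finite-$\mathcal{F}$ case, is that acyclicity is not a property of bounded-size induced subgraphs, so the large-solution branching cannot hit a fixed finite obstruction; this is why we must substitute the \textsc{Feedback Vertex Set} enumeration, and why Lemma~\ref{lem:balance} is in the toolbox — it is needed (in the recursion phase, but also implicitly to argue that the equivalence classes used to shrink $G[W]$ stay bounded) because a subtree can be split by a single cut vertex into two balanced halves, which keeps the boundary-size parameter at $t+1$ rather than growing. A secondary subtlety is bookkeeping the apex vertices $u_1,\dots,u_s$: they have weight $0$ and are forced into $R$ (they lie in $I'$), they must not themselves create a cycle among the $P_i$-components, and when we feed an instance to \probCSWCS we must color them so that the recognizer for ``tree'' treats $G'[V(R)]$ correctly and so that $|V(R)|$ is the exact count including the apexes. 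Putting the pieces together, the cost is dominated by $2^{\Oh(t+\min\{q,t\})\log(q+t)}\cdot n^{\Oh(1)}$ from the small-solution \probCSWCS calls combined with the $2^{\Oh(q+t)}$ feedback-vertex-set enumeration and the $4^{t}$ important separators, giving the claimed bound $2^{\Oh((t+\min\{q,t\})\log(q+t))}\cdot n^{\Oh(1)}$.
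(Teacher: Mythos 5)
The small-solution half of your plan matches the paper, but your large-solution branch contains a genuine gap --- and it is precisely the branch the paper shows never arises. The claim that all minimal feedback vertex sets of size at most $q+t$ can be enumerated in time $2^{\Oh(q+t)}\cdot n^{\Oh(1)}$ is false: a graph consisting of $q+t$ vertex-disjoint cycles of length $n/(q+t)$ has $(n/(q+t))^{q+t}=n^{\Theta(q+t)}$ minimal feedback vertex sets of size $q+t$, so no such FPT enumeration exists. The constant-factor branching in Lemma~\ref{lem:unbreak} works only because every obstruction in $\mathcal{F}$ has constant size; a cycle does not, and the standard \textsc{Feedback Vertex Set} machinery returns one solution, not a family of candidate deletion sets covering every possible complement $V(G')\setminus V(R)$. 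So the case $q<|V(R)\cap V(G)|$ and $|V(G)\setminus V(R)|\le q+t$ is not actually handled by your argument.

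The paper sidesteps this entirely via the observation you half-noticed: Lemma~\ref{lem:balance} is used \emph{here}, in the high-connectivity phase, not in the recursion. If a nonempty solution $H$ (a tree) had more than $3q+8$ vertices, its balanced separation $(U,W)$ of order $1$ would lift to a separation $(U'',W'')$ of $G$ of order at most $t+1$ --- the single cut vertex together with the at most $t$ outside neighbors of $H$ --- with both sides of size greater than $q$, contradicting $(q,t+1)$-unbreakability. Hence every solution has at most $3q+8$ vertices, the large-solution case is vacuous, and the lemma reduces to the $2^{\Oh(t\log t)}$ feasible border complementations times $\Oh(q)$ calls to Corollary~\ref{cor:CSWCS}, which yields the stated bound. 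You should replace your second case by this size bound on the solution.
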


\begin{proof}
Consider an instance $(G',I',O',B',\omega',t')$ of \probMAXT be obtained from  $(G,I,O,B,\omega,t)$  by the border complementation with respect to some feasible $(X,Y,Z,\mathcal{P})$. 
Assume that $H$ is a nonempty solution of $(G',I',O',B',\omega',t')$. We claim that $|V(H)|\leq 3q+8$.

To obtain a contradiction, assume that $|V(H)|\geq 3q+9$. By Lemma~\ref{lem:balance}, there is a separation $(U,W)$ of $H$ of order 1 such that $|U\setminus W|\leq \frac{2}{3}|V(H)|$ and $|W\setminus U|\leq \frac{2}{3}|V(H)|$, that is, $|U\setminus W|\geq \frac{1}{3}|V(H)|-1\geq q+2$ and $|W\setminus U|\geq \frac{1}{3}|V(H)|-1\geq q+2$. 
Let $U'=U\cap V(G)$ and $W'=W\cap V(G)$. We have that $|U'\setminus W'|\geq q+1$ and $|W'\setminus U'|\geq q+1$.
Let $U''=U'\cup N_{G'}(V(H))\subseteq V(G)$ and $W''=V(G)\setminus (U'\setminus W')\subseteq V(G)$.
We have that $(U'',W'')$ is a separation of $G$.
Since $U''\cap W''=N_{G'}(V(H))\cup (U'\cap W')$, we obtain that the order of the separation is at most $t+1$.
Observe that $U''\setminus W''=U'\setminus W''$ and 
$W'\setminus U'\subseteq W''\setminus U''$ and, therefore, $|U''\setminus W''|\geq q+1$ and $|W''\setminus U''|\geq q+1$ contradicting the $(q,t+1)$-unbreakability of $G$.

The claim implies that to solve \probMAXT for \linebreak $(G',I',O',B',\omega',t')$, it is sufficient to consider $k\leq 3q+8$ and for each $k$,  find $t$-secluded induced subtree $H$ in $G'$ of maximum weight 
such that 
$I'\subseteq V(H')$, $O'\subseteq V(G')\setminus V(H)$, $N_{G'}(V(H))\subseteq B'$ and $|V(H)|=k$. By Corollary~\ref{cor:CSWCS},
it can be done in time $2^{\Oh(\min\{q,t\}\log (q+t))}\cdot n^{\Oh(1)}$. 

To solve \probBORDF for $(G,I,O,B,\omega,t,T)$, we consider all possible partitions $(X,Y,Z)$ of $T$ and partitions $\mathcal{P}$ of $X$ such that $(X,Y,Z,\mathcal{P})$ is feasible. 
Since $|T|\leq 2(t+1)$, and the set of $\mathcal{P}$ together with $X$ and $Y$ form a partition of $T$, we conclude that 
there are $2^{\Oh(t\log t)}$ feasible $(X,Y,Z,\mathcal{P})$. Hence, the total running time is $2^{\Oh((t+\min\{q,t\})\log (q+t))}\cdot n^{\Oh(1)}$.
\end{proof}

\subsection{The \classFPT algorithm for \probLT}\label{Asec:tree}
In this section we construct an \classFPT algorithm for \probLT parameterized by $t$. To do it, we solve \probBORDF in \classFPT-time for general case.

\begin{lemma}\label{lem:bordtree}
\probBORDT  can be solved in time $2^{2^{\Oh(t\log t)}}\cdot n^{\Oh(1)}$.
\end{lemma}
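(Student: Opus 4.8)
The plan is to follow the same recursive-understanding scheme used in the proof of Lemma~\ref{lem:bordforb}, replacing the combinatorial bookkeeping for $\mathcal{F}$-freeness by the (much simpler) bookkeeping needed for acyclicity. First I would observe that the analogue of the equivalence relation and of the sets $\mathcal{G}_p,\mathcal{G}_p'$ is easy to describe here: when we combine a partial subtree living in one side of a separation with a partial subtree living in the other side, identified along a boundary $X$ of size at most $2(t+1)$, the only thing that matters for the glued graph to be a tree is (a) the adjacencies inside $X$, (b) which boundary vertices lie in the same component of the partial solution, and (c) for each component, whether that component together with the relevant part of $X$ forms a tree or already contains a cycle. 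So a component of $G[W]-S$ interacting with a boundary $L\subseteq S$ can be summarized, up to the equivalence needed to preserve "being a forest after a border complementation,'' by a bounded-size gadget: a single vertex (or a small tree/path on at most $|L|$ vertices) attached to $L$ in one of the finitely many ways, carrying an aggregated weight, plus a flag recording whether the component is a tree or not. This gives sets $\mathcal{G}_p$ and $\mathcal{G}_p'$ of bounded size exactly as in Section~\ref{sec:fpt-forb}, and a replacement lemma in the spirit of Lemma~\ref{lem:eq}: gluing an equivalent gadget preserves acyclicity and preserves the optimal (or $\geq w$) weight of a secluded subtree satisfying all the $I$/$O$/$B$ constraints.

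Next I would run the recursion. Given an instance $(G,I,O,B,\omega,t,T)$ with $|T|\le 2(t+1)$, apply Lemma~\ref{lem:unbreakable} with $k=t+1$ and an appropriately chosen $q=2^{2^{\Oh(t\log t)}}$ (the precise value, as in~(\ref{Aeq:q}), is dictated by the size bound for the reduced graph $G[W^*]$). If $G$ is reported $((2q+1)q\cdot 2^{t+1},t+1)$-unbreakable — or if $G$ is already small — solve it directly using Lemma~\ref{lem:unbreak-tree} in time $2^{2^{2^{\Oh(t\log t)}}}\cdot n^{\Oh(1)}$ (note that here the unbreakable-case algorithm is in fact cheaper than in the $\mathcal{F}$-free case, but it does not improve the overall bound). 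Otherwise we get a separation $(U,W)$ of order at most $t$ with both sides of size $>q$; since $|T|\le 2(t+1)$, at least one side, say $W$, contains at most $t+1$ border terminals. Set $\tilde T=(T\cap W)\cup(U\cap W)$, so $|\tilde T|\le 2(t+1)$, and solve \probBORDT for $(\tilde G,\tilde I,\tilde O,\tilde B,\tilde\omega,t,\tilde T)$ recursively (directly by brute force if $|W|$ is bounded). Collect the set $\mathcal{R}$ of partial subtrees returned, let $S=\tilde T\cup\bigcup_{R\in\mathcal{R}}N(R)$; as in~(\ref{Aeq:s}) we get $|S|=2^{\Oh(t\log t)}$ (each $R$ has at most $t$ neighbours and the number of relevant feasible $4$-tuples $(X,Y,Z,\mathcal{P})$ and gadgets is $2^{\Oh(t\log t)}$). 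Then restrict $B$ to $\hat B=(B\cap U)\cup(B\cap S)$; the equivalence of $(G,I,O,B,\omega,t,T)$ and $(G,I,O,\hat B,\omega,t,T)$ is proved exactly as in Lemma~\ref{lem:bordforb}, using the replacement lemma to transplant the $W$-part of a solution $R_1$ by a partial subtree $\tilde R\in\mathcal{R}$ of at least the same capped weight.

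Then I would apply a battery of reduction rules on $G[W]-S$, parallel to Reduction Rules~\ref{Ared:one-rec}--\ref{Ared:five-rec}: propagate $I$ and $O$ across components, delete components whose neighbourhood lies entirely in $O$, collapse all components with a common neighbourhood $L\subseteq S$ of size $>t$ into a single gadget from $\mathcal{G}_{|L|}'$ (the secluded-ness bound forces at least one vertex of $L$ into any solution containing such a component, hence connectivity among them is automatic), and handle common neighbourhoods $L$ of size $\le t$ by keeping one representative gadget per equivalence class with aggregated weight (capped at $w-1$ to avoid creating a spurious large solution), distinguishing whether some component is forced into $I$. After these rules the reduced $W$-part has size at most $2^{|S|}\cdot 2|S|^{\Oh(1)}\cdot t^{\Oh(1)}+|S|\le q$, exactly as in~(\ref{Aeq:w*}), so $|V(G^*)|<|V(G)|$ and the recursion makes progress; the running-time recurrence~(\ref{Aeq:run})--(\ref{Aeq:vert}) then yields the claimed $2^{2^{\Oh(t\log t)}}\cdot n^{\Oh(1)}$ bound — which is one exponential better than the $\mathcal{F}$-free case precisely because the unbreakable solver of Lemma~\ref{lem:unbreak-tree} runs in single-exponential time in $q+t$ rather than needing the extra guessing of a deletion set. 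The main obstacle I anticipate is the careful design of the gadget/equivalence machinery so that gluing genuinely preserves acyclicity: unlike a hereditary forbidden-subgraph condition, a cycle can be "completed'' through the boundary using several components at once, so the equivalence must record enough about how each component touches $X$ (in particular the partition $\mathcal{P}$ of $X$ into the parts that get merged) — this is exactly why \probBORDT carries the partition $\mathcal{P}$ in its feasible tuples, and getting the safeness proofs of Reduction Rules~\ref{Ared:three-rec}-style steps right in this setting is the delicate part.
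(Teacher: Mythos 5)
Your proposal follows essentially the same route as the paper: recursive understanding with Lemma~\ref{lem:unbreakable}, the unbreakable case handled by Lemma~\ref{lem:unbreak-tree} (which, as you correctly note, is what drops one exponential relative to the $\mathcal{F}$-free case), the restriction of $B$ to $\hat B$ certified by the recursively computed partial solutions, and reduction rules shrinking $G[W]$ to size at most $q=2^{2^{\Oh(t\log t)}}$. The only difference is cosmetic: where you invoke acyclicity-adapted analogues of $\mathcal{G}_p,\mathcal{G}_p'$, the paper spells out the degenerate form of that machinery directly — discard components containing a cycle, move any neighbour with two edges into one component to $O$, contract each remaining component to a single weighted vertex, and keep two representatives per class of false twins — yielding the same bound $|W^*|\le 2\cdot 2^{|S|}+|S|$.
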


\begin{proof}
There is a constant $c$ such that the number of partitions of a $k$-element set into subsets such that at most two of them could be empty is at most $2^{ck\log k}$.
We define 
\begin{equation}\label{eq:q-t}
q=2\cdot 2^{(t+1)t2^{c2(t+1)\log(2(t+1))}+2(t+1)}+(t+1)t2^{c2(t+1)\log(2(t+1))}+2(t+1).
\end{equation}
Notice that $q=2^{2^{\Oh(t\log t)}}$.

Consider an instance $(G,I,O,B,\omega,t,T)$ of \probBORDT.

We use the algorithm from Lemma~\ref{lem:unbreakable} for $G$. This algorithm in time $2^{2^{\Oh(t\log t)}}\cdot n^{\Oh(1)}$ either  finds a separation $(U,W)$ of $G$ of order at most $t+1$ such that $|U\setminus W|>q$ and $|W\setminus U|>q$ or correctly reports that $G$ is $((2q+1)q\cdot 2^{t+1},t+1)$-unbreakable. In the latter case we solve the problem using Lemma~\ref{lem:unbreak} in time $2^{2^{\Oh(t\log t)}}\cdot n^{\Oh(1)}$. 
Assume from now that there is a separation  $(U,W)$ of order at most $t+1$ such that $|U\setminus W|>q$ and $|U\setminus W|>q$.

Recall that $|T|\leq 2(t+1)$. Then $|T\cap (U\setminus W)|\leq t+1$ or $|T\cap (W\setminus U)|\leq t+1$. Assume without loss of generality that $|T\cap (W\setminus U)|\leq t+1$. Let $\tilde{G}=G[W]$, $\tilde{I}=I\cap W$, $\tilde{O}=O\cap W$, $\tilde{\omega}$ is the restriction of $w$ to $W$, and define $\tilde{T}=(T\cap W)\cup (U\cap W)$. Since $|U\cap W|\leq t+1$, $|\tilde{T}|\leq 2(t+1)$.

If $|W|\leq (2q+1)q\cdot 2^{t+1}$, then we solve \probBORDF for the instance $(\tilde{G},\tilde{I},\tilde{O},\tilde{B},\tilde{\omega},t,\tilde{T})$ by brute force in time $2^{2^{\Oh(t\log t)}}$ trying all possible subset of $W$ at most $t+1$ values of $0\leq t'\leq t$. Otherwise, we solve  $(\tilde{G},\tilde{I},\tilde{O},\tilde{B},\tilde{\omega},t,\tilde{T})$ recursively. Let $\mathcal{R}$ be the set of nonempty induced subgraphs $R$ that are included in the obtained solution for $(\tilde{G},\tilde{I},\tilde{O},\tilde{B},\tilde{\omega},t,\tilde{T})$. 

For $R\in \mathcal{R}$, define $S_R$ to be the set of vertices of $W\setminus V(R)$ that are adjacent to the vertices of $R$ in the graph obtained by the border complementation for which $R$ is a solution of the corresponding instance of \probMAXT. Note that $|S_R|\leq t$.
If $\mathcal{R}\neq\emptyset$, then let $S=\tilde{T}\cup_{R\in\mathcal{R}}S_R$, and $S=\tilde{T}$ if $\mathcal{R}=\emptyset$.
Since \probMAXT is solved for at most $t+1$ of values of $t'\leq t$, at most $2^{c(2(t+1))\log (2(t+1)) }$ feasible 4-tuples  $(X,Y,Z,\mathcal{P})$, we have that $|\mathcal{R}|\leq (t+1)2^{c2(t+1)\log (2(t+1))}$. Taking into account that $|T'|\leq 2(t+1)$, 
\begin{equation}\label{eq:s-t}
|S|\leq (t+1)t2^{c2(t+1)\log (2(t+1))}+2(t+1).
\end{equation}

Let $\hat{B}=(B\cap U)\cup (B\cap C)$. We claim that the instances $(G,I,O,B,\omega,t,T)$ and \linebreak $(G,I,O,\hat{B},\omega,t,T)$ of \probBORDT are equivalent. 

Recall that we have to show that  
\begin{itemize}
\item[(i)] $T\cap B=T\cap \hat{B}$,
\item[(ii)] for the border complementations $(G',I',O',B',\omega',t')$ and $(G',I',O',\hat{B}',\omega',t')$ of the instances $(G,I,O,B,\omega,t')$ and $(G,I,O,\hat{B},\omega,t')$ respectively of \probMAXF with respect to every feasible $(X,Y,Z,\mathcal{P})$ and $t'\leq t$, it holds that if $(G',I',O',B',\omega',t')$ has a nonempty solution $R_1$, then \linebreak $(G',I',O',\hat{B}',\omega',t')$ has a nonempty solution $R_2$ with $\omega'(V(R_2))\geq \min\{\omega'(V(R_1)),w\}$ and, vice versa, if $(G',I',O',\hat{B}',w',t')$ has a nonempty solution $R_2$, then $(G',I',O',B',w',t')$ has a nonempty solution $R_1$ with $\omega'(V(R_1))\geq \min\{\omega'(V(R_2)),w\}$.
\end{itemize}

The condition (i) holds by the definition of $\hat{B}$. Because $\hat{B}\subseteq B$,  we immediately obtain if $(G',I',O',\hat{B}',\omega',t')$ has a nonempty solution $R_2$, then $(G',I',O',B',\omega',t')$ has a nonempty solution $R_1$ with $\omega'(V(R_1))\geq \min\{\omega'(V(R_2)),w\}$. 
It remains to prove that  for a border complementation $(G',I',O',B',\omega',t')$ and $(G',I',O',\hat{B}',\omega',t')$ of $(G,I,O,B,\omega,t')$ and $(G,I,O,\hat{B},\omega,t')$ respectively of \probMAXF with respect to a feasible $(X,Y,Z,\mathcal{P})$ and $t'\leq t$, it holds that if $(G',I',O',B',\omega',t')$ has a nonempty solution $R_1$, then $(G',I',O',\hat{B}',\omega',t')$ has a nonempty solution $R_2$ with $\omega'(V(R_2))\geq \min\{\omega'(V(R_1)),w\}$.

If $V(R_1)\cap V(G)\subseteq U\setminus W$, then $N_{G'}V(R_1)\subseteq \hat{B}'$. Therefore, for a solution $R_2$ of \linebreak $(G',I',O',\hat{B}',\omega',t')$, $\omega'(V(R_2))\geq \min\{\omega'(V(R_1),w)\}$. 
Assume that $V(R_1)\cap W\neq \emptyset$. 
Let $\tilde{X}=\tilde{T}\cap(V(R_1)\cap W)$, let $\tilde{Y}$ be the set of vertices of $\tilde{T}\setminus V(R_1)$ that are adjacent to vertices of $R_1$ laying outside  $W\setminus U$ and $\tilde{Z}=\tilde{T}\setminus (\tilde{X}\cup \tilde{Y})$. 
Notice that $R_1-(W\setminus U)$ is  a forest and denote it by $F$. Then there is a partition $\tilde{\mathcal{P}}=(P_1,\ldots,P_s)$ of $\tilde{X}$ into nonempty  sets such that two vertices of $\tilde{X}$ are in the same set $P_i$ if and only if they are in the same component of $F$.
Consider the border complementation of $(\tilde{G},\tilde{I},\tilde{O},\tilde{B},\tilde{\omega},\tilde{t})$ with respect to $(\tilde{X},\tilde{Y},\tilde{Z},\tilde{\mathcal{P}})$, where $\tilde{t}$ is the number of neighbors of $R_1$ in $W$. Recall that in the border complementation we have new vertices $u_1,\ldots,u_s\in \tilde{I}$ such that each $u_i$ is adjacent to one vertex of $P_i$ and the vertices of $\tilde{Y}$ for $i\in\{1,\ldots,s\}$. Consider the subgraph $F'$ of $\tilde{G}$ induced by $(V(R_1)\cap W)\cup\{u_1,\ldots,u_s\}$. It is straightforward to see  that $F'$ is a tree that has $\tilde{t}$ neighbors in $\tilde{G}$. 
It implies that there is $\tilde{R}\in \mathcal{R}$ for the instance 
$(\tilde{G},\tilde{I},\tilde{O},\tilde{B},\tilde{\omega},\tilde{t})$ of \probMAXT obtained by the border complementation with respect to $(\tilde{X},\tilde{Y},\tilde{Z},\tilde{\mathcal{P}'})$ and $\tilde{\omega}(V(\tilde{R}))\geq\min\{\tilde{\omega}(V(R_1)\cap W)\}$. 
Recall also that the neighbors of the vertices of $\tilde{R}$ are in $S$.
Now let $R_2=G'[(V(R_1)\cap U)\cup (V(\tilde{R})\cap W)]$. We have that $R_2$ is a tree of weight  at least $\min\{\omega'(R_1),w\}$ that has at most $t'$ neighbors in $G'$.

Since, $(G,I,O,B,\omega,t,T)$ and $(G,I,O,\hat{B},\omega,t,T)$ of \probBORDF are equivalent, we can consider $(G,I,O,\hat{B},\omega,t,T)$. Now we apply some reduction rules that produce equivalent instances of \probBORDF or report that we have no solution. The aim of these rules is to reduce the size of $G$.

Let $Q$ be a component of $G[W]-S$. Notice that for any nonempty graph  $R$ in a solution of $(G,I,O,\hat{B},\omega,t,T)$, either $V(Q)\subseteq V(R)$ or $V(Q)\cap V(R)=\emptyset$, because $N_{G[W]}(V(R))\subseteq S$. Moreover, if
$V(Q)\cap V(R)=\emptyset$, then $N_{G[W]}[V(Q)]\cap V(R)=\emptyset$. 
Notice also that if $v\in N_{G[W]}(V(Q))$ is a vertex of $R$, then  $V(Q)\subseteq V(R)$.
These observation are crucial for the following reduction rules.

\begin{reduction}\label{red:one-rec-t}
For a component $Q$ of $G[W]-S$ do the following in the given order:
\begin{itemize}
\item if $N_{G[W]}[V(Q)]\cap I\neq\emptyset$ and $V(Q)\cap O\neq\emptyset$, then return $\emptyset$ and stop,
\item if $N_{G[W]}[V(Q)]\cap I\neq\emptyset$, then set $I=I\cup V(Q)$,
\item if $V(Q)\cap O\neq\emptyset$, then set $O=O\cup N_{G[W]}[V(Q)]$.
\end{itemize}
\end{reduction}

The rule is applied to each component $Q$ exactly once. Notice that after application of the rule, for every component $Q$ of $G[W]-S$, we have that either $V(Q)\subseteq I$ or $V(Q)\subseteq O$ or $V(Q)\cap (I\cup O\cup\hat{B})=\emptyset$.

Observe that if a component $Q$ of $G[W]$ contains a cycle, then $Q$ cannot be a part of any solution. It leads us to the following rule that is applied to each component.

\begin{reduction}\label{Ared:two-rec-t}
If for a component $Q$ of $G[W]-S$, $Q$ contains a cycle, then 
\begin{itemize}
\item if $V(Q)\subseteq I$, then return $\emptyset$ and stop, otherwise,
\item set $O=O\cup N_{G[W]}[V(Q)]$.
\end{itemize}
\end{reduction}

Suppose that there is a component $Q$ containing two distinct verices $u$ and $v$ that are adjacent to the same vertex $x\in N_{G[W]}(V(Q))$. If there is a graph $R$ in a solution of $(G,I,O,\hat{B},\omega,t,T)$ with $u,v\in V(R)$, then $x\notin V(R)$, because $R$ is a tree. If $u,v\notin V(R)$, then $x\notin V(R)$, because otherwise $u,v\notin \hat{B}$ would be adjacent to a vertex of $R$. Hence, the next rule is safe. 

\begin{reduction}\label{red:three-rec-t}
If for a component $Q$ of $G[W]-S$, there is $x\in N_{G[W]}(Q)\setminus O$ adjacent to two distinct vertices of $Q$, then
set $O=O\cup\{x\}$.
\end{reduction}

We apply the rule exhaustively while it is possible.

After applying Reduction Rules~\ref{red:one-rec-t}-\ref{red:three-rec-t}, we can safely replace each $Q$ by a single vertex.

\begin{reduction}\label{red:four-rec-t}
If for a component $Q$ of $G[W]-S$, $|V(Q)|>1$, then 
\begin{itemize}
\item modify $G$ by deleting the vertices of $V(Q)$ and constructing a new vertex $u$ adjacent to $N_{G[W]}(V(Q))$,
\item set $W=(W\setminus V(Q))\cup \{u\}$,
\item set $\omega(u)=\omega(V(Q))$,
\item if $V(Q)\subseteq I$, then set $I=(I\setminus V(Q))\cup\{u\}$,
\item if $V(Q)\subseteq O$, then set $O=(O\setminus V(Q))\cup\{u\}$.
\end{itemize}
\end{reduction}

The rule is applied for each $Q$ with at least two vertices. Notice that now $W\setminus S$ is an independent set. 

Note that if there is  $u\in W\setminus S$ such that $u\in O$, then it is safe to delete $u$, because $N_{G[W]}(u)\subseteq O$ by Reduction Rule~\ref{red:one-rec-t}.

\begin{reduction}\label{red:five-rec-t}
If there is $u\in W\setminus S$ such that $u\in S$, then delete $u$ from $G$ and the sets $W$ and $O$.
\end{reduction}

From now we have that $(W\setminus S)\cap O=\emptyset$.

Suppose that $L$ is a set of verices of  $W\setminus S$ that have the same neighborhoods, i.e, they are false twins of $G[W]$. If there are distinct $u,v\in L$ such that $u,v\in V(R)$ for 
a graph $R$ in a solution of $(G,I,O,\hat{B},\omega,t,T)$, then $L\subseteq V(R)$, because $R$ should contain exactly one vertex in the neighborhoods of $u$ and $v$.
Hence, either $L\cap V(R)=\emptyset$ or exactly one vertex of $L$ is in $R$ or $L\subseteq V(R)$. In particular, if $|L\cap I|\geq 2$, then $L\subseteq V(R)$. 
Suppose $L\cap I=\emptyset$ and 
$u\in L$ is the unique vertex of $L$ in $R$, then we can safely assume that $u$ is a vertex of maximum weight in $L$.  
Notice also that in this case $u$ is the unique vertex of $R$. It means that $R$ is obtained for a border complementation of  $(G,I,O,\hat{B},\omega,t)$ with respect to $(\emptyset,\emptyset,T,\emptyset)$ and we output $R$ only if $\omega(u)\geq w$.
These observations give us the two following rules that are applied for all inclusion maximal sets 
 $L\subseteq W\setminus S$ of size at least 3.

\begin{reduction}\label{red:six-rec-t}
If for an inclusion maximal set of false twin vertices $L\subseteq W\setminus S$ such that $|L|\geq3$, $L\cap I\neq\emptyset$, then let $u\in L\cap I$, $v\in L\setminus \{u\}$, and 
\begin{itemize}
\item delete the verices of $L\setminus\{u,v\}$ from $G$ and the sets $W,I$,
\item if $(L\setminus \{u\})\cap I\neq \emptyset$, then set $I=I\cup \{v\}$,
\item set $\omega(v)=\sum_{x\in L\setminus\{u\}}\omega(x)$.
\end{itemize}
\end{reduction}

\begin{reduction}\label{red:seven-rec-t}
If for an inclusion maximal set of false twin vertices $L\subseteq W\setminus S$ such that $|L|\geq3$, $L\cap I=\emptyset$, then let $u$ be a vertex of maximum weight in $L$ and let $v\in L\setminus \{u\}$, and then 
\begin{itemize}
\item delete the verices of $L\setminus\{u,v\}$ from $G$ and the set $W$,
\item set $\omega(v)=\min\{\sum_{x\in L\setminus\{u\}}\omega(x),w-1\}$.
\end{itemize}
\end{reduction}

Notice that $\omega(v)\leq w-1$. It implies that $v$ cannot be selected as a unique vertex of a solution.
To see that it is safe, observe that if $\omega(u)>0$, then the total weight of $u$ and $v$ is at least $w$ and recall that by the definition of \probMAXT, we output nonempty graphs of maximum weight or weight at least $w$. Therefore, if we output $R'$ that includes $u$ and $v$ instead of $R$ containing all the vertices of $L$, then we output a graph of weight at least $\min\{\omega(R),w\}$. 

Denote by $(G^*,I^*,O^*,B^*,\omega^*,t,T)$ the instance of \probBORDF obtained from  $(G,I,O,\hat{B},\omega,t,T)$ by Reduction Rules~\ref{red:one-rec-t}-\ref{red:seven-rec-t}. Notice that all modifications were made for $G[W]$. Denote by $W^*$ the set of vertices of the graph obtained from $G[W]$ by the rules. Observe that there are at most $2^{|S|}$ subsets $L$ of $S$ such that there is a components $Q$ of $G[W]-S$ with $N_{G[W]}(V(Q))=L$. Notice that for every $L$ all such components $Q$ are replaced by at most 2 vertices by the reduction rules.
Taking into account the vertices of $S$, we obtain the following upper bound for the size of $W^*$:
\begin{equation}\label{eq:w*-t}
|W^*|\leq 2\cdot 2^{|S|}+|S|.
\end{equation}
By (\ref{eq:q-t}) and (\ref{eq:s-t}), $|W^*|\leq q$.
Recall that $|W\setminus U|>q$. Therefore, $|V(G*)|<|V(G)|$. We use it and solve \probBORDT for $(G^*,I^*,O^*,B^*,\omega^*,t,T)$ recursively.

To evaluate the running time, denote by $\tau(G,I,O,B,\omega,t,T)$ the time needed to solve \probBORDT for $(G,I,O,B,\omega,t,T)$.
Clearly, all reduction rules are polynomial. The algorithm from Lemma~\ref{lem:unbreakable} runs in time $2^{2^{\Oh(t\log t)}}\cdot n^{\Oh(1)}$. Then we obtain the following recurrence for the running time:
\begin{equation}\label{eq:run-t}
\tau(G,I,O,B,\omega,t,T)\leq \tau(G^*,I^*,O^*,B^*,\omega^*,t,T)+\tau(\tilde{G},\tilde{I},\tilde{O},\tilde{B},\tilde{\omega},t,\tilde{T})+2^{2^{\Oh(t\log t)}}\cdot n^{\Oh(1)}.
\end{equation}
Because $|W^*|\leq q$,
\begin{equation}\label{eq:vert-t}
|V(G^*)|\leq |V(G)|-|V(\tilde{G})|+q.
\end{equation}
Recall that if the algorithm of Lemma~\ref{lem:unbreakable} reports that $G$ is $((2q+1)q\cdot 2^{t+1},t+1)$-unbreakable or we have that $|V(G)|\leq ((2q+1)q\cdot 2^{t+1}$, we do not recurse but solve the problem directly in time $2^{2^{\Oh(t\log t)}}\cdot n^{\Oh(1)}$. Following the general scheme from~\cite{ChitnisCHPP16}, we obtain that these condition together with (\ref{eq:run-t}) and (\ref{eq:vert-t}) imply that the total running time is  $2^{2^{\Oh(t\log t)}}\cdot n^{\Oh(1)}$.
\end{proof}

We have now all the details that are necessary to prove the main theorem of this section.

\begin{theorem}\label{thm:tree}
\probLT can be solved in time  $2^{2^{\Oh(t\log t)}}\cdot n^{\Oh(1)}$.
\end{theorem}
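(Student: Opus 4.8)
The plan is to derive Theorem~\ref{thm:tree} from Lemma~\ref{lem:bordtree} in exactly the way Theorem~\ref{thm:forb} is derived from Lemma~\ref{lem:bordforb}. Given an instance $(G,\omega,t,w)$ of \probLT, I would set $I=\emptyset$, $O=\emptyset$, $B=V(G)$ and $T=\emptyset$, and invoke the algorithm of Lemma~\ref{lem:bordtree} on $(G,I,O,B,w,t,T)$. Since $T=\emptyset$, its running time is immediately $2^{2^{\Oh(t\log t)}}\cdot n^{\Oh(1)}$, which is the bound claimed in the theorem, so no further time analysis is needed.

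The only thing left to argue is that this single call answers the original question. With $T=\emptyset$, the one feasible $4$-tuple is $(\emptyset,\emptyset,\emptyset,\emptyset)$, which is precisely the border complementation with respect to $(\emptyset,\emptyset,T,\emptyset)$ that receives the special treatment in the definition of \probBORDT. For $X=\emptyset$ the border complementation produces the instance $(G,\emptyset,\emptyset,V(G),\omega,t')$ of \probMAXT (for $t'\leq t$), which asks for a $t$-secluded induced subtree $H$ of $G$ with $\emptyset\subseteq V(H)$, $\emptyset\subseteq V(G)\setminus V(H)$ and $N_G(V(H))\subseteq V(G)$ of maximum weight or weight at least $w$ — that is, for an arbitrary $t$-secluded induced subtree of $G$. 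By the output convention for the $(\emptyset,\emptyset,T,\emptyset)$ branch, Lemma~\ref{lem:bordtree} returns a nonempty subtree exactly when $G$ has a $t$-secluded induced subtree of total weight at least $w$, which is precisely the condition that $(G,\omega,t,w)$ is a yes-instance of \probLT. Hence $(G,\omega,t,w)$ is a yes-instance if and only if the returned solution contains a nonempty graph.

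There is no real obstacle in this last step: the entire substance of the theorem is already contained in Lemma~\ref{lem:bordtree} (and behind it the recursive-understanding scheme together with Lemmas~\ref{lem:unbreak-tree} and \ref{lem:unbreakable}). The only point that warrants a moment's care is checking that taking $T=\emptyset$ does not spuriously force the algorithm to report the empty subgraph; this is ruled out exactly because the $(\emptyset,\emptyset,T,\emptyset)$ branch of \probBORDT is the one permitted to output a nonempty solution whenever one of weight at least $w$ exists.
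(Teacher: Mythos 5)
Your proposal is correct and follows essentially the same route as the paper: the paper likewise defines $I=\emptyset$, $O=\emptyset$, $B=V(G)$, $T=\emptyset$, calls the algorithm of Lemma~\ref{lem:bordtree} on $(G,I,O,B,\omega,t,T)$, and observes that $(G,\omega,t,w)$ is a yes-instance if and only if the returned solution contains a nonempty graph. Your additional remarks spelling out why the $(\emptyset,\emptyset,T,\emptyset)$ border complementation captures exactly the \probLT question are consistent with the definitions and only make explicit what the paper leaves implicit.
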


\begin{proof}
Let $(G,\omega,t,w)$ be an instance of \probLT. We define $I=\emptyset$, $O=\emptyset$, $B=V(G)$ and $T=\emptyset$. Then we solve \probBORDT for $(G,I,O,B,\omega,t,T)$ using Lemma~{lem:bordtree}
 in time $2^{2^{\Oh(t\log t)}}\cdot n^{\Oh(1)}$. It remains to notice that $(G,\omega,t,w)$ is a yes-instance of \probLT if and only if $(G,I,O,B,\omega,t,T)$ has a nonempty graph in a solution.
\end{proof}

\section{Better algorithms for \probLCSS}
\label{sec:better}
We applied the recursive understanding technique introduced by Chitnis et al.~\cite{ChitnisCHPP16} for \probLCSS when $\Pi$ is defined by a finite set of forbidden subgraphs and $\Pi$  is the property to be a forest in Sections~\ref{sec:fpt-forb} and \ref{sec:fpt-tree} respectively. We believe that the same approach can be used in some other cases. In particular,  
the results of Section~\ref{sec:fpt-tree} could be generalized if $\Pi$ is the property defined by a finite list of \emph{forbidden minors} that include a planar graph.
Recall that a graph $F$ is a \emph{minor} of a graph $G$ if a graph isomorphic to $F$ can be obtained from $G$ by a sequence of vertex and edge deletion and edge contractions.
Respectively, a graph $G$ is $F$-minor free if $F$ is not a minor of $G$, and for a family of graphs $\mathcal{F}$,  $G$ is $\mathcal{F}$-minor free if $G$ is $F$-minor free for every $F\in\mathcal{F}$. If $F$ is a planar graph, then by the fundamental results of Robertson and Seymour~\cite{RobertsonS86} (see also~\cite{RobertsonST94}), an $F$-minor free graph $G$ has bounded \emph{treewidth}. This makes it possible to show that if $\Pi$ is the property to be $\mathcal{F}$-minor free for a finite family of graphs $\mathcal{F}$ that includes at lease one planar graph, then \probLCSS is \classFPT when parameterized by $t+k$.
Nevertheless, the drawback of the applying the recursive understanding technique for these problems is that we get double or even triple-exponential dependence on the parameter in our \classFPT algorithms. It is natural to ask whether we can do better for some properties $\Pi$. In this section we show that it can be done if $\Pi$ is the property to be a complete graph,  a star, to be $d$-regular or to  be a path.

\subsection{Secluded Clique}

We begin with problem \probSC, defined as follows.

 \defproblema{\probSC}%
{A graph $G$, a weight function $\omega\colon V(G)\rightarrow \mathbb{Z}_{>0}$, an nonnegative integer $t$ and a positive integer $w$.}%
{Decide whether $G$ contains a $t$-secluded clique $H$ with $\omega(V(H))\geq w$.}

We prove that this problem can be solved in time $2^{\cO(t \log t)} \cdot n^{\cO(1)}$. The result uses the algorithm of Lemma \ref{lem:derand} and the following simple observations.  

\begin{lemma}\label{lem:SCtruetwins}
Let $(G,\omega, t, w)$ be an input of \probSC and let $H$ be a solution such that $V(H)$ is maximal by inclusion.  Let $L$ be an inclusion maximal set of true-twins of $G$. Then $L\cap V(H) \neq \emptyset$ implies that $L \subseteq V(H)$.
\end{lemma}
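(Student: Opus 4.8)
The plan is to argue by contradiction using the maximality of $V(H)$. Suppose $L$ is an inclusion-maximal set of true twins of $G$, that $v\in L\cap V(H)$, and that there is some $u\in L\setminus V(H)$. First I would recall that $u$ and $v$ being true twins means $N_G[u]=N_G[v]$; in particular $u$ and $v$ are adjacent, and $u$ is adjacent to exactly the same vertices of $V(H)$ as $v$ is (aside from each other, but they are mutually adjacent too). Since $H$ is a clique and $v\in V(H)$, the vertex $v$ is adjacent to every other vertex of $V(H)$; hence $u$ is adjacent to every vertex of $V(H)\setminus\{u\}=V(H)$, so $G[V(H)\cup\{u\}]$ is again a clique.

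Next I would check that adding $u$ does not increase the boundary beyond $t$ and does not decrease the weight. For the neighbourhood: $N_G(V(H)\cup\{u\}) = (N_G(V(H))\cup N_G(u))\setminus(V(H)\cup\{u\})$. Because $N_G[u]=N_G[v]$ and $v\in V(H)$, we have $N_G(u)\setminus\{v\}\subseteq N_G(v)\subseteq N_G[V(H)]$, so every neighbour of $u$ is either in $V(H)$ or already counted in $N_G(V(H))$; moreover $u$ itself leaves the boundary since it joins $V(H)$. Therefore $N_G(V(H)\cup\{u\})\subseteq N_G(V(H))\setminus\{u\}$, which has size at most $t$. Hence $H':=G[V(H)\cup\{u\}]$ is a $t$-secluded clique. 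Since weights are positive, $\omega(V(H'))=\omega(V(H))+\omega(u)>\omega(V(H))\geq w$, so $H'$ is also a solution. This contradicts the assumption that $V(H)$ is maximal by inclusion, and the contradiction shows $L\subseteq V(H)$.

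I do not anticipate a serious obstacle here; the only point requiring a little care is the bookkeeping for the boundary set, namely verifying that no new vertex enters $N_G(V(H))$ when $u$ is absorbed — this is exactly where the true-twin condition $N_G[u]=N_G[v]$ (as opposed to the weaker false-twin condition $N_G(u)=N_G(v)$) is used, since it guarantees $u\sim v$ and therefore that $G[V(H)\cup\{u\}]$ stays a clique. Everything else is a direct consequence of maximality and of weights being strictly positive.
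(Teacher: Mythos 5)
Your proof is correct and follows essentially the same route as the paper's: pick a twin $u\in L\setminus V(H)$ of a vertex $v\in L\cap V(H)$, observe that $G[V(H)\cup\{u\}]$ remains a $t$-secluded clique of strictly larger (hence still $\geq w$) weight, and contradict the inclusion-maximality of $V(H)$. The only difference is that you spell out the boundary bookkeeping, which the paper leaves implicit.
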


\begin{proof}
Let $L$ be an inclusion maximal set of twins, and let $u,v \in L$ be such that $u\in V(H)$ and $v \notin V(H)$. Consider the graph $H' = G[V(H) \cup \{v\}]$. Since $H$ is a $t$-secluded clique, and $u, v$ are true twins, we have that $H'$ is also a $t$-secluded clique, and $\omega(V(H')) = \omega(V(H)) + \omega(v) \geq w$. Therefore  $H'$ is also a solution of \probSC, contradicting the maximality of $H$. 
\end{proof}

Let $\mathcal{L}$ be the family all of maximal sets of true twins in a graph $G$. Note that a vertex can not belong to two different maximal sets of true twins, so $\mathcal{L}$ induces a partition of $G$. Consider $\tilde{G}$ the graph obtained from $G$ contracting each maximal set of true-twins $L$ into a single node $x_L$, and removing multiple edges. In other words, $\tilde{G}$ contains one node for each element of $\mathcal{L}$. Two nodes $x_1$ and $x_2$ in $\tilde{G}$ are adjacent if there is an edge in $G$ with one endpoint in $L_1$ and the other one in $L_2$, where $L_1$ and $L_2$ are elements of $\mathcal{L}$ corresponding to $x_1$ and $x_2$, respectively.  

 We say that a node $x \in \tilde{G}$ is a \emph{contraction} of $L \in \mathcal{L}$ if $x$ is the node of $\tilde{G}$ corresponding to $L$.  We say that a set $U \subseteq V(G)$ is the \emph{expansion} of~$\tilde{U} \subseteq V(\tilde{G})$ if the all the nodes of $\tilde{U}$ represent maximal sets of true-twins contained in $U$. We also say in that case that $\tilde{U}$ is the \emph{contraction} of $U$. 

\begin{lemma}\label{SCsize}
Let $G$ be a graph, $t$ be a positive integer and let $U$ be a subset of vertices of $G$ inducing a inclusion maximal $t$-secluded clique in $G$. There exists a set $\tilde{U}$ of vertices of $V(\tilde{G})$ such that:
\begin{itemize}
\item[1)] $U$ is the expansion of $\tilde{U}$, 
\item[2)] $\tilde{U}$ induces a $t$-secluded clique on $\tilde{G}$, and
\item[3)] $|\tilde{U}| \leq 2^t$.
\end{itemize} 

\end{lemma}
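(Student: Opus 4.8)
The plan is to take $\tilde{U}$ to be the contraction of $U$, that is, the set of all nodes $x_L \in V(\tilde{G})$ with $L \in \mathcal{L}$ and $L \subseteq U$. The first thing to verify is that this choice satisfies (1): since $U$ induces an inclusion maximal $t$-secluded clique, the argument of Lemma~\ref{lem:SCtruetwins} shows that every class $L \in \mathcal{L}$ satisfies $L \subseteq U$ or $L \cap U = \emptyset$ (adding to $U$ a true twin of one of its vertices keeps it a $t$-secluded clique, contradicting maximality). Hence $U = \bigcup_{x_L \in \tilde{U}} L$ is exactly the expansion of $\tilde{U}$, which gives (1).

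For (2) the clique property is immediate from the construction of $\tilde{G}$: if $x_{L_1}, x_{L_2} \in \tilde{U}$ are distinct then $L_1, L_2 \subseteq U$ and $U$ is a clique, so there is an edge of $G$ between $L_1$ and $L_2$, whence $x_{L_1} x_{L_2} \in E(\tilde{G})$. For the seclusion bound I would use the elementary property of true twins that a vertex $w \notin L$ adjacent to one vertex of a true-twin class $L$ is adjacent to all of $L$. Thus if $x_L \in N_{\tilde{G}}(\tilde{U})$, then some, hence every, vertex of $L$ is adjacent to $U$, so $L \subseteq N_G(U)$; as the classes of $\mathcal{L}$ are pairwise disjoint and nonempty, $|N_{\tilde{G}}(\tilde{U})| \le |N_G(U)| \le t$, so $\tilde{U}$ induces a $t$-secluded clique in $\tilde{G}$.

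It remains to prove $|\tilde{U}| \le 2^t$. I would first record that $\tilde{G}$ has no two distinct true twins: if $x_{L_1}$ and $x_{L_2}$ were true twins in $\tilde{G}$, then $L_1 \cup L_2$ would induce a complete graph and every vertex outside $L_1 \cup L_2$ would be adjacent to all of $L_1$ iff to all of $L_2$, so $L_1 \cup L_2$ would be a set of true twins of $G$, contradicting the maximality of $L_1$ and $L_2$. Now, since $\tilde{U}$ is a clique and every neighbour of a node $x \in \tilde{U}$ outside $\tilde{U}$ lies in $N_{\tilde{G}}(\tilde{U})$, we have $N_{\tilde{G}}[x] = \tilde{U} \cup \bigl( N_{\tilde{G}}(x) \cap N_{\tilde{G}}(\tilde{U}) \bigr)$ for every $x \in \tilde{U}$. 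Hence the map $x \mapsto N_{\tilde{G}}(x) \cap N_{\tilde{G}}(\tilde{U})$ is injective on $\tilde{U}$ — two nodes with the same image would have equal closed neighbourhoods, i.e.\ be true twins, hence equal — and its range is contained in the power set of $N_{\tilde{G}}(\tilde{U})$, which has at most $2^{|N_{\tilde{G}}(\tilde{U})|} \le 2^t$ elements; therefore $|\tilde{U}| \le 2^t$. The only genuinely delicate point is the twin-freeness of $\tilde{G}$, equivalently the injectivity of this last map: it hinges on the bookkeeping showing that two contracted classes indistinguishable in $\tilde{G}$ could be merged, violating maximality. I would establish that carefully and treat everything else as routine.
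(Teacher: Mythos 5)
Your proposal is correct and follows essentially the same route as the paper's proof: take $\tilde{U}$ to be the contraction of $U$, use the true-twin argument of Lemma~\ref{lem:SCtruetwins} for (1), count twin classes meeting $N_G(U)$ for (2), and bound $|\tilde{U}|$ by injectivity of $x\mapsto N_{\tilde{G}}(x)\cap N_{\tilde{G}}(\tilde{U})$ for (3). Your justification of that injectivity (twin-freeness of $\tilde{G}$ via maximality of the classes) is in fact spelled out more carefully than in the paper, which asserts it directly from the definition of maximal true-twin sets.
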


\begin{proof}
 Let $\tilde{U}$ be the set of nodes consisting in the contraction of the maximal sets of true-twins in $G$ intersecting $U$. We claim that $\tilde{U}$ satisfies the desired properties. 
\begin{itemize}
\item[1)] From Lemma \ref{lem:SCtruetwins}, we know that if a maximal set of true-twins $L$ intersects $U$, then it $L$ is contained in $U$. Therefore $\tilde{U}$ is a contraction of $U$ (so $U$ is an expansion of $\tilde{U}$). 

\item[2)] Let $x_1$ and $x_2$ be two nodes in $\tilde{U}$ that are contractions of $L_1$ and $L_2$, respectively. Since $U$ induces a clique in $G$, $L_1$ and $L_2$ must contain adjacent vertices, so $x_{1}$ and $x_{2}$ are adjacent in $\tilde{G}[\tilde{U}]$.  On the other hand, $|N_{\tilde{G}}(\tilde{U})|$ equals the number of maximal sets of true-twins intersecting $N_G(U)$, so $|N_{\tilde{G}}(\tilde{U})|\leq |N_G(U)| \leq t$. We conclude that $\tilde{U}$ induces a $t$-secluded clique in $\tilde{G}$. 

\item[3)] Let $x_1$ and $x_2$ be two different nodes in $\tilde{U}$ that are contractions of $L_1$ and $L_2$, respectively. From definition of maximal sets of true-twins, $N_G(L_1) \neq N_G(L_2)$, so $N_{\tilde{G}}(x_1) \neq N_{\tilde{G}}(x_2)$. Since $\tilde{U}$ is a clique, necessarily $N_{\tilde{G}}(x_1) \cap \tilde{U} = N_{\tilde{G}}(x_2) \cap \tilde{U}$.  Therefore, every vertex on $\tilde{U}$ has a different neighborhood outside $\tilde{U}$. Since $|N_{\tilde{G}}(\tilde{U})| \leq t$, we obtain that $|\tilde{U}| \leq 2^{|N_{\tilde{G}}(\tilde{U})|} \leq 2^t$. 
\end{itemize}
\end{proof}

\begin{theorem}
\probSC can be solved in time $2^{\cO(t \log t)} \cdot n^{\cO(1)}$.
\end{theorem}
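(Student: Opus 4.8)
The plan is to pass to the quotient graph $\tilde G$ obtained by contracting each maximal class of true twins — assigning to the node that represents a class $L$ both a \emph{size} $|L|$ and a \emph{weight} $\sum_{v\in L}\omega(v)$ — and then to locate the sought clique in $\tilde G$ using the derandomized set family of Lemma~\ref{lem:derand}.

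First I would reduce to searching for an \emph{inclusion-maximal} $t$-secluded clique of weight at least $w$: any $t$-secluded clique of weight $\geq w$ is contained in a vertex set that is inclusion-maximal among vertex sets inducing $t$-secluded cliques of $G$, and since the weights are positive this extension does not decrease the weight. Fix such a maximal set $U$. By Lemma~\ref{lem:SCtruetwins}, $U$ is a union of maximal true-twin classes, so by Lemma~\ref{SCsize} its contraction $\tilde U\subseteq V(\tilde G)$ is a clique of $\tilde G$ with $|N_{\tilde G}(\tilde U)|\leq t$, with $|\tilde U|\leq 2^{t}$, with weight equal to $\omega(U)\geq w$, and with the classes in $N_{\tilde G}(\tilde U)$ having total size equal to $|N_G(U)|\leq t$. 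Conversely, if $\tilde U$ is a clique of $\tilde G$ whose expanded neighbourhood has total size at most $t$ and whose total weight is at least $w$, then the union of the classes of $\tilde U$ is a $t$-secluded clique of $G$ of weight $\geq w$; since every class has size at least $1$, such a $\tilde U$ automatically satisfies $|N_{\tilde G}(\tilde U)|\leq t$ and $|\tilde U|\leq 2^t$. So the task reduces to finding such a $\tilde U$ in $\tilde G$.

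The structural fact that makes this tractable is that $\tilde U$ is \emph{exactly one connected component} of $\tilde G-N_{\tilde G}(\tilde U)$: the set $N_{\tilde G}(\tilde U)$ is the entire boundary of $\tilde U$, and $\tilde U$ is connected because it is a clique. It therefore suffices to ``isolate'' $\tilde U$, for which I would use Lemma~\ref{lem:derand} to build a family $\mathcal S$ of subsets of $V(\tilde G)$ such that, for the true solution, some $S\in\mathcal S$ has $N_{\tilde G}(\tilde U)\subseteq S$ and $\tilde U\cap S=\emptyset$; then $\tilde U$ is a connected component of $\tilde G-S$. The algorithm then iterates over all $S\in\mathcal S$ and all connected components $C$ of $\tilde G-S$, tests in polynomial time whether $C$ induces a clique, has expanded neighbourhood of total size at most $t$, and has weight at least $w$, and reports a positive answer iff some pair $(S,C)$ passes all three tests; correctness is immediate from the reduction above. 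To obtain the stated running time one must invoke Lemma~\ref{lem:derand} only on $\Oh(t)$-sized sets — the boundary $Z=N_{\tilde G}(\tilde U)$ (at most $t$ vertices) and a minimal subset $W_0\subseteq\tilde U$ dominating $Z$ (so $|W_0|\leq|Z|\leq t$ and $N_{\tilde G}(W_0)\supseteq Z$), together with the guess of one vertex of $W_0$ — so that $\min\{|Z|,|W_0|\}=\Oh(t)$, $|\mathcal S|=2^{\Oh(t\log t)}\log n$, and $\mathcal S$ is constructed in time $2^{\Oh(t\log t)}\cdot n^{\Oh(1)}$; multiplying by the polynomial per-pair cost yields the bound $2^{\Oh(t\log t)}\cdot n^{\Oh(1)}$.

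The hard part will be making that last refinement rigorous. Separating $\tilde U$ from its boundary directly would force a ``guess'' of a set as large as $\tilde U$ — which is a priori of size $2^t$ — and would push the exponent to $\Oh(t^2)$. Keeping it at $\Oh(t\log t)$ means working only with $\Oh(t)$-sized sets, and then one has to show that from an $S$ with $Z\subseteq S$ and $W_0\cap S=\emptyset$ the component $D$ of $W_0$ in $\tilde G-S$ — which lies inside $\tilde U$ and satisfies $N_{\tilde G}(D)=(\tilde U\setminus D)\cup Z$, hence $N_{\tilde G}[D]=\tilde U\cup Z$ — can be completed back to $\tilde U$, and the best attainable weight recovered, without a second exponential blow-up. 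Verifying that this completion, and all the ``extend to inclusion-maximal'' and ``dominate the boundary'' choices, fit together consistently is the main obstacle.
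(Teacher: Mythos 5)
Your first two paragraphs reproduce the paper's argument almost exactly: contract the maximal true-twin classes to form $\tilde G$, use Lemma~\ref{lem:SCtruetwins} and Lemma~\ref{SCsize} to argue that an inclusion-maximal solution contracts to a clique $\tilde U$ of $\tilde G$ with $|\tilde U|\le 2^t$ and $|N_{\tilde G}(\tilde U)|\le t$, build the separating family $\mathcal S$ of Lemma~\ref{lem:derand}, and test every component of $\tilde G[S]$ for every $S\in\mathcal S$. The paper applies Lemma~\ref{lem:derand} directly with $a=2^t$ and $b=t$ (putting $A=\tilde U$ inside $S$ and $B=N_{\tilde G}(\tilde U)$ outside, so that $\tilde U$ is a component of $\tilde G[S]$) and performs none of the refinement sketched in your third paragraph.

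The gap is precisely that refinement, which you need for the stated bound but do not carry out. You are right that the direct application with $|A|\le 2^t$ gives $\min\{a,b\}\log(a+b)=t\cdot\Theta(t)=\Theta(t^2)$, hence a family of size $2^{\Oh(t^2)}\log n$ and an overall bound of $2^{\Oh(t^2)}\cdot n^{\Oh(1)}$ rather than $2^{\Oh(t\log t)}\cdot n^{\Oh(1)}$ (this is, in fact, all that the paper's own literal invocation of Lemma~\ref{lem:derand} yields). Your proposed fix --- separate $Z=N_{\tilde G}(\tilde U)$ from a minimal dominating set $W_0\subseteq\tilde U$ of $Z$ with $|W_0|\le t$, so both guessed sets have size $\Oh(t)$ --- is sound up to the point where you obtain the component $D$ of $W_0$ in $\tilde G-S$ with $N_{\tilde G}[D]=\tilde U\cup Z$. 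But that set mixes up to $2^t$ clique vertices with up to $t$ boundary vertices, $S$ only satisfies $Z\subseteq S$ and so does not identify $Z$, and you give no procedure for extracting $\tilde U$ (equivalently, a maximum-weight $t$-secluded clique containing $D$) from $N_{\tilde G}[D]$ without a second exponential search; you flag this yourself as ``the main obstacle.'' Until that completion step is supplied, the proposal proves only the $2^{\Oh(t^2)}\cdot n^{\Oh(1)}$ bound of the direct approach.
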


\begin{proof}

The algorithm for \probSC on input $(G,\omega, t, w)$ first computes the family $\mathcal{L}$ of all inclusion maximal set of true-twins of $G$, and then computes $\tilde{G}$ in time $\mathcal{O}(n^2)$. Then, the algorithm uses Lemma \ref{lem:derand} to compute in time $2^{\cO(t \log t)}n \log n$ a family $\mathcal{S}$ of at most $2^{\cO(t \log t)}\log n$ subsets of $V(\tilde{G})$ such that: for any sets $A, B \subseteq V(\tilde{G})$, $A \cap B = \emptyset$, $|A| \leq 2^t$, $|B| \leq t$, there exists a set $S \in \mathcal{S}$ with $A\subseteq S$ and $B\cap S = \emptyset$. 

Let $U$ be a set of vertices of $G$ inducing an inclusion maximal solution of \probSC on instance  $(G,\omega, t, w)$, and let $\tilde{U}$ be the contraction of $U$. From Lemma \ref{SCsize}, we know that $N_{\tilde{G}}(\tilde{U}) \leq t$ and $|\tilde{U}| \leq 2^t$.  Then, there exists $S\in \mathcal{S}$ such that $\tilde{U} \subseteq S$ and $N_{\tilde{G}}(U) \cap S = \emptyset$.  In other words $\tilde{U}$ is a component of $\tilde{G}[S]$. Therefore, the algorithm checks for every $S \in \mathcal{S}$ and every component $C$ of $\tilde{G}[S]$ if the expansion of $C$ is a solution of the problem. 
\end{proof}

\subsection{Secluded Star}

Another example of a particular problem where we have better running times is \probSSt, defined as follows.

\defproblema{\probSSt}%
{A graph $G$, a weight function $\omega\colon V(G)\rightarrow \mathbb{Z}_{>0}$,  an nonnegative integer $t$,  and a positive integer $w$.}%
{Decide whether $G$ contains a  $t$-secluded induced star $S$ with $\omega(V(S))\geq w$.}

In this case, a faster FPT algorithm can be deduced via a reduction to the problem \textsc{Vertex Cover} parameterized by the size of the solution.

For a graph $G$ and $x$ in $V(G)$, we call $N^2(x)$ the set of vertices at distance $2$ from $x$, i.e., $N^2(x)$ is the set of vertices $u \in V(G) $ such that $ u \notin N[x]$ and there exists $ v \in N(x)$ such that  $ u \in N(v)$. We also call $N^2[x]$ the set $N^2(x) \cup N[x]$.  Let now $F_x = (N^2(x)  \cup N(x), E')$ be the subgraph of $G[N^2(x) \cup N(x)]$ such that $E' = E(G[N^2(x)\cup N(x)]) - E(G[N^2(x)])$, i.e., $F_x$ is the graph induced by the vertices in $N^2(x) \cup N(x)$ after the deletion of all edges between nodes in $N^2(x)$. Note that $x$ is not a vertex of $F_x$.

 A vertex $x$ is the center of a star $S$ if $x$ is the vertex of maximum degree in $S$.
 The following lemma relates the center $x$ of a $t$-secluded star $S$ of a graph $G$, with a vertex cover of size at most $t$ of $F_x$.

\begin{lemma}\label{lem:SStVC}
Let $S$ be a $t$-secluded star on a graph $G$ and let $x$ be the center of $S$. Then $N_G(V(S))$ is a vertex cover of $F_x$. Moreover, if $S$ is an inclusion maximal $t$-secluded star with the center $x$, then $N_G(V(S))$ is an inclusion minimal vertex cover of $F_x$.
\end{lemma}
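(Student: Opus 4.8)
The plan is to unpack the definitions of $F_x$ and of $t$-secludedness and show that the two combinatorial conditions match up edge by edge. First I would set $C = N_G(V(S))$ and observe that, because $x$ is the center of the star $S$, every vertex of $V(S)\setminus\{x\}$ is a leaf, hence $V(S)\setminus\{x\}\subseteq N_G(x)$ and $V(S)\subseteq N_G[x]$. Consequently every vertex of $C = N_G(V(S))$ lies in $N_G(x)\cup N^2_G(x)$: a neighbour of a leaf that is not in $V(S)$ is either another neighbour of $x$ or a vertex at distance $2$ from $x$. So $C\subseteq V(F_x)$, and it makes sense to ask whether $C$ is a vertex cover of $F_x$.

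Next I would verify the covering property. Take any edge $uv$ of $F_x$; by construction of $F_x$ at least one endpoint, say $u$, lies in $N_G(x)$ (edges inside $N^2_G(x)$ were deleted). If $u\in V(S)$, I claim the other endpoint $v$ must lie in $C$: indeed $v\notin V(S)$ would force $v\in N_G(V(S))=C$ since $v$ is adjacent to $u\in V(S)$, while if $v\in V(S)$ then $v$ is a leaf adjacent to the leaf $u$, which is impossible in a star on at least two leaves — and the degenerate cases (star with one edge, or a single vertex) have to be checked directly, but there $F_x$ has no such edge or the statement is vacuous. If instead $u\notin V(S)$, then since $u\in N_G(x)$ and $x\in V(S)$ we get $u\in N_G(V(S))=C$. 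Either way the edge $uv$ is covered by $C$, so $C$ is a vertex cover of $F_x$, proving the first assertion. Crucially $|C|\le t$ because $S$ is $t$-secluded.

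For the second assertion, assume $S$ is inclusion maximal among $t$-secluded stars centered at $x$, and suppose for contradiction that $C = N_G(V(S))$ is not a minimal vertex cover of $F_x$; then some $c\in C$ can be removed, i.e.\ $C' = C\setminus\{c\}$ still covers $F_x$. The idea is that $c$ being a non-essential cover vertex means $c$ has no private edge forcing it into the cover, and I would use this to enlarge the star: every $F_x$-neighbour of $c$ lies in $C$, so in particular every vertex of $N_G(c)\cap N_G(x)$ that is a leaf-candidate is already "blocked" — more precisely, I would show that adding $c$ (if $c\in N_G(x)$) as a new leaf, or pulling into $V(S)$ an appropriate vertex, yields a $t$-secluded star strictly containing $S$, contradicting maximality. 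The delicate bookkeeping here is to check that the open neighbourhood of the enlarged star is contained in $C' \cup (\text{vertices already accounted for})$ and still has size at most $t$; this relies on the fact that the only new external neighbours created are $F_x$-neighbours of $c$, all of which lie in $C'$, together with the observation that $c$ itself leaves the boundary. I expect this last step — correctly identifying which vertex to add and verifying the boundary does not grow — to be the main obstacle, since it is where the precise definition of $F_x$ (deleting edges within $N^2_G(x)$) is used in an essential way; the rest is a routine unwinding of definitions.
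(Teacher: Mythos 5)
Your first part is correct and is essentially the paper's argument: every edge of $F_x$ has an endpoint in $N_G(x)$, and a short case analysis on whether that endpoint lies in $V(S)$ shows the edge is covered by $N_G(V(S))$ (the induced-star hypothesis guarantees the leaves are pairwise non-adjacent, so no edge of $F_x$ has both endpoints in $V(S)$; your handling of the degenerate cases is fine).

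The second part, however, is left as a plan rather than a proof, and the step you flag as ``the main obstacle'' is exactly the content of the lemma, so as written there is a gap. Concretely, you never establish that the redundant cover vertex $c$ lies in $N_G(x)$ --- you hedge with ``if $c\in N_G(x)$'' and otherwise propose pulling some other vertex into $V(S)$, which is not a workable alternative. The two missing observations are these. First, $c\in N_G(x)$: otherwise $c\in N_G^2(x)$, and since $c\in N_G(V(S))$ it is adjacent to some leaf $\ell\in V(S)\cap N_G(x)$; the edge $c\ell$ belongs to $F_x$, yet neither endpoint is in $C'=C\setminus\{c\}$ (because $C'\subseteq N_G(V(S))$ is disjoint from $V(S)$), contradicting that $C'$ is a cover. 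Second, once $c\in N_G(x)$ is known, every $v\in N_G(c)\setminus\{x\}$ lies in $N_G(x)\cup N_G^2(x)$ and $cv\in E(F_x)$ (it is not an edge with both ends in $N_G^2(x)$), hence $v\in C'$. This single fact finishes everything: it shows $c$ is adjacent to no leaf of $S$ (the leaves lie in $V(S)$, disjoint from $C'$), so $G[V(S)\cup\{c\}]$ is an induced star with center $x$; and it shows $N_G(V(S)\cup\{c\})\subseteq C'$, which has size at most $t-1$, so the enlarged star is still $t$-secluded --- contradicting maximality. With these two observations supplied, your outline becomes precisely the paper's proof.
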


\begin{proof}
Let $u,v$ two adjacent vertices of $F_x$. Since $F_x[N^2(x)]$ is edgeless, we assume w.l.o.g. that $u$ belongs to $N_G(x)$. If $u$ is contained in $N_G(x) \setminus V(S)$, then $u$ is in $N(V(S))$ (because $x$ belongs to $S$, so $N_G(x) \setminus V(S) \subseteq N_G(V(S))$). If $u$ is in $S$, then either $v$ is in $N_G^2(x)$ or $v$ is in $N_G(x) \setminus V(S)$ (because $F_x[V(S)]$ is edgeless). In both cases $v$ is in $N_G(V(S))$. We conclude that either $u$ or $v$ is contained in $N_G(V(S))$. 

Assume that $S$ is an inclusion maximal $t$-secluded star with the center $x$. Suppose that, contrary to the second claim,  $N_G(V(S))$ is not an inclusion minimal vertex cover of $F_x$, that  is, there is $u\in N_G(V(S))$ such that $X=N_G(V(S))\setminus\{u\}$ is a vertex cover of $F_x$. Because $X$ is a vertex cover of $F_x$ and $X\subseteq N_G(V(S))$, we have that $u\in N_G(x)$ and $N_G(u)\setminus \{x\}\subseteq X$. It implies that $S\rq{}=G[V(S)\cup \{u\}]$ is a $t$-secluded star contradicting the maximality of $S$.      
\end{proof}

A basic result on parameterized complexity states that it could decided if a graph contains a vertex cover of size at most $t$ in time 
$\Oh(c^t(n+m))$ for some constant $c$ (see, e.g., \cite{CyganFKLMPPS15,DowneyF13})  by branching algorithms.  These algorithms could be adapted to output the list of all inclusion minimal vertex covers of size at most $t$ within the same running time. In fact, it could be done for $c<2$ but this demands some discussion. Hence, we use the following straightforward observation.

\begin{proposition}[\cite{CyganFKLMPPS15,DowneyF13}]\label{prop:vc-enum}
There is an algorithm computing the list of all the inclusion minimal vertex covers of size at most $t$ of a graph $G$ in time $\Oh(2^t(n+m))$.
\end{proposition}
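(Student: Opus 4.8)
The plan is to run the textbook bounded-search-tree algorithm for \textsc{Vertex Cover} (see, e.g., \cite{CyganFKLMPPS15,DowneyF13}) \emph{to completion}, collect the candidate cover produced at each leaf, and then discard the candidates that fail to be inclusion-minimal. In detail: recursively, given a graph $H$, a set $S$ already placed in the cover, and a remaining budget $k$, do nothing if $k<0$; record $S$ if $E(H)=\emptyset$; otherwise pick any edge $uv\in E(H)$ and recurse on $(H-u,\,S\cup\{u\},\,k-1)$ and on $(H-v,\,S\cup\{v\},\,k-1)$. Start from $(G,\emptyset,t)$, let $\mathcal{L}$ be the multiset of recorded sets, and output the members of $\mathcal{L}$ that are inclusion-minimal vertex covers of $G$, with duplicates removed.

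The correctness statement to prove is that $\mathcal{L}$ contains \emph{every} inclusion-minimal vertex cover of $G$ of size at most $t$ (it may also contain non-minimal covers, which the final filter removes; note every recorded $S$ is a vertex cover of $G$ of size at most $t$, since when $H$ becomes edgeless every edge of $G$ is incident with $S$, and the recursion descends at most $t$ levels before the budget goes negative). Fix such a cover $C$. I would maintain, along a suitably chosen root--leaf path, the invariant that $S\subseteq C$ and $C\setminus S$ is a vertex cover of $H$; at a branching edge $uv$ the set $C\setminus S$ contains an endpoint of $uv$, and descending into the child corresponding to that endpoint preserves the invariant while shrinking $C\setminus S$. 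When the path reaches a leaf, $H$ is edgeless, so the recorded set $S$ is itself a vertex cover of $G$ satisfying $S\subseteq C$; minimality of $C$ then forces $S=C$, so $C\in\mathcal{L}$. The filtering step is then immediate: a recorded $S$ is an inclusion-minimal vertex cover exactly when no $v\in S$ has $N_G(v)\subseteq S$.

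For the running time, every recursive call decreases $k$ by one and branches at most two ways, and recorded leaves occur at depth at most $t$, so the search tree has $\Oh(2^{t})$ nodes and $\mathcal{L}$ has at most $2^{t}$ members, each of size at most $\min\{t,n\}$. Locating an edge, forming $H-u$, and extending $S$ cost $\Oh(n+m)$ per node, so building $\mathcal{L}$ takes $\Oh(2^{t}(n+m))$ time; testing each of the at most $2^{t}$ candidates for being an inclusion-minimal vertex cover takes $\Oh(n+m)$ each, and removing duplicates by sorting costs $\Oh(2^{t}\cdot t)$, so the total is $\Oh(2^{t}(n+m))$. I do not expect a genuine obstacle here: the one point needing care is that the brute-force two-way branching also emits non-minimal covers, which is precisely why the explicit post-filter is included, and this crude branching is exactly what gives the base $2$ in the statement — obtaining a base below $2$ while still listing \emph{all} minimal covers is the ``discussion'' the surrounding text alludes to and is unnecessary for our use.
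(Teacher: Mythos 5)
The paper gives no proof of this proposition, only a citation and the remark that the standard bounded-search-tree algorithm for \textsc{Vertex Cover} "could be adapted to output the list of all inclusion minimal vertex covers"; your proposal is precisely that adaptation, carried out correctly (the invariant argument showing every minimal cover of size at most $t$ appears as a leaf, plus the minimality filter, are both sound). So this is correct and matches the approach the paper intends.
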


\begin{theorem}
\probSSt can be solved in time $\Oh(2^t \cdot n^{\cO(1)})$.
\end{theorem}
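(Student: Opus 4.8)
The plan is to use Lemma~\ref{lem:SStVC} to turn the search for a $t$-secluded star into a bounded number of \textsc{Vertex Cover} enumerations. First I would observe that any solution star $S$ has a well-defined center $x$ (for stars on at least three vertices this is the unique vertex of maximum degree; the degenerate cases of a star with one or two vertices are handled separately by brute force, since then $V(S)$ has size at most $2$). So the algorithm iterates over all $n$ candidate centers $x\in V(G)$. For each fixed $x$, it constructs the auxiliary graph $F_x$ in polynomial time, and then invokes Proposition~\ref{prop:vc-enum} to list, in time $\Oh(2^t(n+m))$, all inclusion-minimal vertex covers of $F_x$ of size at most $t$; call this list $\mathcal{X}_x$.

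The correctness hinges on the following: if $(G,\omega,t,w)$ is a yes-instance, then there is a solution $S$ that is inclusion-maximal among $t$-secluded stars with its center $x$, and by the second part of Lemma~\ref{lem:SStVC} the set $C=N_G(V(S))$ is an inclusion-minimal vertex cover of $F_x$ of size at most $t$, hence $C\in\mathcal{X}_x$. It remains to recover a heaviest star compatible with a given center $x$ and a given "boundary" set $C\in\mathcal{X}_x$. Given $x$ and $C$, I would take $V(S)$ to consist of $x$ together with every vertex $v\in N_G(x)\setminus C$ all of whose neighbors other than $x$ lie in $C$ — equivalently, $v\in N_G(x)$ with $N_G(v)\setminus\{x\}\subseteq C$ and $v\notin C$. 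One checks that this $V(S)$ induces a star centered at $x$ (the leaves have no neighbors among themselves since any such edge would be an edge of $F_x$ not covered by $C$), that its open neighborhood is contained in $C$ and hence has size at most $t$, and that it is the unique maximal such star for the pair $(x,C)$; therefore it has the largest total weight among all $t$-secluded stars centered at $x$ with neighborhood contained in $C$. The algorithm computes $\omega(V(S))$ for each pair $(x,C)$ with $x\in V(G)$ and $C\in\mathcal{X}_x$, and answers \yes iff some such $V(S)$ satisfies $\omega(V(S))\geq w$ (and this is a genuine star, i.e.\ $V(S)\neq\emptyset$).

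For the running time: there are $n$ choices of $x$; for each, building $F_x$ costs $n^{\Oh(1)}$, enumerating $\mathcal{X}_x$ costs $\Oh(2^t(n+m))$ and $|\mathcal{X}_x|=\Oh(2^t)$, and for each $C\in\mathcal{X}_x$ the star $V(S)$ and its weight are computed in $n^{\Oh(1)}$ time. Multiplying out gives the claimed bound $\Oh(2^t\cdot n^{\Oh(1)})$. I do not expect any serious obstacle here; the only point requiring a little care is verifying that the star reconstructed from $(x,C)$ is indeed maximal and weight-optimal among all $t$-secluded stars with that center and with boundary inside $C$, which follows directly from the characterization in Lemma~\ref{lem:SStVC} together with the observation that enlarging the leaf set can only be blocked by a leaf having a neighbor outside $\{x\}\cup C$.
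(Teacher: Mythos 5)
Your proposal is correct and follows essentially the same route as the paper: iterate over all candidate centers $x$, use Lemma~\ref{lem:SStVC} together with Proposition~\ref{prop:vc-enum} to enumerate the inclusion-minimal vertex covers of $F_x$ of size at most $t$, and recover the star as $N_G[x]$ minus the cover (your extra condition $N_G(v)\setminus\{x\}\subseteq C$ is automatically satisfied for every $v\in N_G(x)\setminus C$ once $C$ covers $F_x$, so your reconstruction coincides with the paper's $N_G[x]\setminus U$). The only difference is that you spell out the correctness of the reconstruction in more detail than the paper does, which is harmless.
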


\begin{proof}
Let $(G,\omega, t, w)$ be an input of \probSSt. The algorithm starts computing, for every $x \in V(G)$, the list of all inclusion minimal vertex covers of size at most $t$ of $F_x$ using Proposition~\ref{prop:vc-enum}. Then, for every vertex cover $U$  of size at most $t$ of $F_x$, the algorithm checks if $N_G[x]\setminus U$ induces in $G$ a solution of the problem. We know from Lemma \ref{lem:SStVC}, that if $S$ is a solution of \probSSt on input $(G, \omega, t, w)$, then  $N_G(V(S))$ is an inclusion minimal  vertex cover of size at most $t$ of $F_x$. Note also that $S$ is an induced star with center $x$ in a graph $G$, then $V(S) = N_G[x] \setminus N_G(V(S))$. 
\end{proof}

\subsection{Secluded Regular Subgraph}

Another example of a problem with single-exponential FPT algorithm is \probCSDRS, defined as follows.

 \defproblema{\probCSDRS}%
{A graph $G$, a weight function $\omega\colon V(G)\rightarrow \mathbb{Z}_{>0}$,  an nonnegative integer $t$,  and positive integers $w$ and $d$.}%
{Decide whether $G$ contains a connected $t$-secluded $d$-regular induced subgraph $H$ with $\omega(V(H))\geq w$.}

Let $(G,\omega,t,w,d)$ be an input of \probCSDRS and let $U$ be a set of vertices of $G$ such that $G[U]$ is a solution of the problem. Note first that any vertex of degree greater than $t+d$ can not be contained in $U$, otherwise $G[U]$ is not $t$-secluded. Let $W$ be the set of vertices of high degree, i.e., $x\in W$ if $|N(x)| \geq t+d+1$. Suppose that $N(U) \subseteq W$. This implies that $U$ is a component of $G-W$. Therefore, our algorithm will first compute the components of $G-W$ and check if some of them is a solution. In the following we assume that $N(U) \setminus W \neq \emptyset$. 

We call $L = N(U)\setminus W$, $U_1 = N(L) \cap U$, $U_2 = N(U_1) \cap U$ and $\tilde{U} = U_1 \cup U_2$. Note that $|L| \leq t$, $|U_1| \leq t\cdot (t+d)$ and $|U_2| \leq  d t \cdot (t+d)$. Therefore $|\tilde{U}| \leq t(d+1)(t+d)$.

A set of vertices $C$ is called \emph{good} for $U$ if 
\begin{itemize}
\item $C \subseteq U$, and
\item For all $u \in C$, $u\in U_1$ implies $N(u)\cap U \subseteq C$.  
\end{itemize}

Note that every node $u$ in a good set $C$ satisfies $|N(u)\cap C| \leq d$. Moreover if $u \in U_1 \cap C$ then $|N(u)\cap C| = d$. Note also that if $C_1$ and $C_2$ are good for $U$ then $C_1 \cup C_2$ is good for $U$.

\begin{lemma}\label{lem:dregulargrow}

Let $S$ be a set of vertices of $G$ satisfying $\tilde{U} \subseteq S$ and $S\cap L = \emptyset$. Let now $C$ be a component of $G[S]$ such that $C \cap U \neq \emptyset$, then:

\begin{itemize}
\item[1)] $C$ is good for $U$, and
\item[2)] if $u \in C$ is such that $|N(u) \cap C| < d$, then $S' = S \cup N(u)$ satisfies $\tilde{U} \subseteq S'$ and $S'\cap L = \emptyset$.  
\end{itemize}

\end{lemma}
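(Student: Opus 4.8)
The plan is to establish the two assertions in sequence, in each case essentially unfolding the definitions of $U_1$, $U_2$ and of ``$d$-regular''; the only genuinely delicate point is the treatment of the high-degree set $W$. Throughout I would use that $G[U]$ is connected, $t$-secluded and $d$-regular, so $|N_G(u)\cap U|=d$ for every $u\in U$ (and $U\neq\emptyset$, since a solution has positive weight), and that — the case $N(U)\subseteq W$ having already been disposed of — the construction of $S$ is carried out inside $G-W$, i.e. we may assume $S\cap W=\emptyset$.

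\emph{Part 1(a): $C\subseteq U$.} I would argue by contradiction. If $C\not\subseteq U$, then, $C$ being connected with $C\cap U\neq\emptyset$, there is an edge $xy$ of $G[C]$ with $x\in U$ and $y\notin U$; hence $y\in N_G(U)=L\,\cup\,(N_G(U)\cap W)$. Since $y\in C\subseteq S$ and $S\cap L=\emptyset$, we get $y\notin L$, so $y\in W$, contradicting $S\cap W=\emptyset$. I expect this to be the main obstacle, in the sense that it is the one place where the ``work in $G-W$'' convention must be invoked explicitly: if $S$ were allowed to meet $W$, the component $C$ could leave $U$ through a vertex of $N(U)\cap W$ and the statement would fail.

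\emph{Part 1(b), hence $C$ is good for $U$.} Let $u\in C$ with $u\in U_1$. By the very definition $U_2=N(U_1)\cap U$ we have $N_G(u)\cap U\subseteq U_2\subseteq\tilde U\subseteq S$; each such vertex is adjacent to $u\in C$ and lies in $S$, hence lies in the same component of $G[S]$, namely $C$. Thus $N_G(u)\cap U\subseteq C$, which together with Part~1(a) gives that $C$ is good for $U$. This step is precisely what the set $U_2$ was engineered for: including it in $\tilde U\subseteq S$ forces the $U$-neighbours of every $U_1$-vertex of $C$ back into $C$.

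\emph{Part 2.} Here $\tilde U\subseteq S\subseteq S'$ is immediate, so it remains to check $N_G(u)\cap L=\emptyset$ for the given $u\in C$ with $|N_G(u)\cap C|<d$. By Part~1(a), $u\in U$. If $u$ had a neighbour in $L$, then $u\in N(L)\cap U=U_1$; since $u\in C\cap U_1$, Part~1(b) would give $N_G(u)\cap U\subseteq C$, and as $|N_G(u)\cap U|=d$ this would force $|N_G(u)\cap C|\geq d$, contradicting the hypothesis. Hence $N_G(u)\cap L=\emptyset$, and therefore $S'\cap L=(S\cap L)\cup(N_G(u)\cap L)=\emptyset$, which completes the proof.
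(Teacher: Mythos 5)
Your proof is correct and follows essentially the same route as the paper's: part 1(a) by contradiction via an edge leaving $U$ landing in $L$, part 1(b) from $U_2\subseteq\tilde U\subseteq S$, and part 2 by showing $u\notin U_1$ (since $u\in U_1\cap C$ would force $|N(u)\cap C|=d$ by $d$-regularity and goodness). Your explicit invocation of $S\cap W=\emptyset$ in part 1(a) is a point the paper's proof leaves implicit (it jumps directly to ``$v$ is contained in $L$''), but it is the same argument.
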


\begin{proof} \hspace{1cm}

\begin{itemize}
\item[1)] Let $u \in C \cap U$ and $v \in N(u) \cap (C \setminus U)$. Then $v$ is contained in $L$, which contradicts the fact that $S\cap L = \emptyset$. Therefore $C \subseteq U$. On the other hand, if $u \in U_1 \cap C$ then $N(u) \cap U \subseteq C$, because $U_2$ is contained in $S$ and $C$ is a connected component of $G[S]$.  We conclude that $C$ is good for $U$.

\item[2)] Let $u \in C$ be such that $|N(u) \cap C| < d$. Since $C$ is good for $U$ we know that $u$ is not contained in $U_1$, so $N(u) \cap L = \emptyset$.  Therefore $\tilde{U} \subseteq S \subseteq S'$ and $S' \cap L = \emptyset$.  
\end{itemize}
\end{proof}

\begin{theorem}
\probCSDRS can be solved in time $2^{\cO(t\log (td))} \cdot n^{\cO(1)}$.
\end{theorem}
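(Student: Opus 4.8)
The plan is to guess a bounded-size ``core'' of a putative solution together with its outer boundary using the splitter family of Lemma~\ref{lem:derand}, and then to reconstruct the whole solution from this core by a monotone growing process governed by Lemma~\ref{lem:dregulargrow}. First I would clear away the degenerate cases: if $d=0$ a solution is a single vertex of weight $\ge w$ with open degree $\le t$, found in polynomial time; and if $t\ge n$ then $2^{\cO(t\log(td))}$ already dominates $2^{n}$, so brute force over all vertex subsets works. So assume $d\ge1$ and $t<n$. Compute $W=\{x\in V(G):d_G(x)\ge t+d+1\}$; as noted in the text, $U\cap W=\emptyset$ for every solution $U$. If $N_G(U)\subseteq W$ then $U$ is a connected component of $G-W$, so it suffices to test each component $D$ of $G-W$ for being $d$-regular with $\omega(D)\ge w$ and $|N_G(D)|\le t$, which is polynomial. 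It remains to handle a solution $U$ with $L:=N_G(U)\setminus W\neq\emptyset$; then $U_1$, $U_2$ and $\tilde U=U_1\cup U_2$ are as in the text, with $\tilde U\subseteq U$, $\tilde U\neq\emptyset$, $|\tilde U|\le t(d+1)(t+d)$ and $|L|\le t$.

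Next I would invoke Lemma~\ref{lem:derand} on the ground set $V(G)\setminus W$ with $a=\min\{t(d+1)(t+d),n\}$ and $b=t$. Since $t\ge1$, $a\ge b=t$, so $\min\{a,b\}=t$; and $a+b$ is polynomial in $t$ and $d$, so $\log(a+b)=\cO(\log(td))$. This produces in time $2^{\cO(t\log(td))}\cdot n^{\cO(1)}$ a family $\mathcal S$ of $2^{\cO(t\log(td))}\log n$ subsets of $V(G)\setminus W$ such that, applying it with $A=\tilde U$ and $B=L$, some $S^\star\in\mathcal S$ satisfies $\tilde U\subseteq S^\star$ and $L\cap S^\star=\emptyset$. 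For every $S\in\mathcal S$ and every $v\in V(G)\setminus W$ I would run the procedure: put $S':=S$; then, letting $C$ be the component of $v$ in $G[S']$, while some $u\in C$ has $|N_G(u)\cap C|<d$ and $N_G(u)\setminus W\not\subseteq S'$, pick such a $u$ and set $S':=S'\cup(N_G(u)\setminus W)$; stop otherwise. If the procedure stops with a deficient vertex still present (its entire non-$W$ neighbourhood lies in $S'$, so it can never reach degree $d$ inside $V(G)\setminus W$), discard the pair; otherwise $G[C]$ is $d$-regular and we check whether $\omega(C)\ge w$ and $|N_G(C)|\le t$, outputting \yes\ if so. Each round strictly enlarges $S'$, hence there are at most $n$ rounds; with $2^{\cO(t\log(td))}\cdot n^{\cO(1)}$ pairs and polynomial cost per round, the total running time is $2^{\cO(t\log(td))}\cdot n^{\cO(1)}$.

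Soundness is clear: any $C$ causing a \yes\ induces a connected (being a component of $G[S']$) $d$-regular subgraph of weight $\ge w$ with at most $t$ outer neighbours, hence a genuine solution; the same holds for the $G-W$ test. For completeness, given a solution $U$ with $L\neq\emptyset$, I would run the procedure on $(S^\star,v)$ for some $v\in\tilde U$ and argue by induction that the invariant ``$\tilde U\subseteq S'$, $S'\cap L=\emptyset$, $S'\subseteq V(G)\setminus W$'' is maintained: the component $C$ of $v$ in $G[S']$ always meets $U$ (it contains $v$), so by Lemma~\ref{lem:dregulargrow}(1) it is good for $U$, in particular $C\subseteq U$; hence any chosen deficient $u\in C$ lies in $U\setminus U_1$ (a vertex of $U_1$ inside a good component would have $d$ neighbours in $C$), whence $N_G(u)\cap L=\emptyset$, and Lemma~\ref{lem:dregulargrow}(2) (the guard $\setminus W$ only shrinking the added set) re-establishes the invariant. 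Thus $C\subseteq U$ at every step, and a deficient $u$ always has a $U$-neighbour outside $C$, which — lying in $V(G)\setminus W$ and not in $S'$ — gets added, so the pair is never discarded and $C$ keeps enlarging until $G[C]$ is $d$-regular; connectedness of $U$ then forces $C=U$, since an edge of $G[U]$ leaving $C$ would leave some vertex of $C$ with fewer than $d$ neighbours in $C$. Therefore $C=U$ passes the checks and \yes\ is output.

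The main obstacle is the correctness of the growing procedure under the right guess: one has to verify that Lemma~\ref{lem:dregulargrow} keeps $S'$ disjoint from $L$ and inside $V(G)\setminus W$ and keeps the tracked component within $U$ at every step — in particular that the vertices brought in by $N_G(u)\setminus W$ stay inside $U$ and the component stays good — and that the process necessarily halts precisely at $C=U$ rather than at some proper $d$-regular induced subgraph of $G[U]$. The rest is accounting: the case split on whether $N_G(U)\subseteq W$, the polynomial size bounds $|\tilde U|\le t(d+1)(t+d)$ and $|L|\le t$ that make $\min\{a,b\}=t$ and $\log(a+b)=\cO(\log(td))$ in the call to Lemma~\ref{lem:derand}, and the degenerate cases $d=0$ and $t\ge n$.
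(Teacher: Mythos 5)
Your proposal is correct and follows essentially the same route as the paper: compute the high-degree set $W$, test the components of $G-W$ for the case $N_G(U)\subseteq W$, apply Lemma~\ref{lem:derand} with bounds $|\tilde U|\le t(d+1)(t+d)$ and $|L|\le t$, and grow candidate components using Lemma~\ref{lem:dregulargrow}. Your variations (iterating over start vertices instead of candidate components, the explicit $\setminus W$ guard, and the degenerate cases $d=0$, $t\ge n$) are only presentational refinements of the paper's argument.
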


\begin{proof}

The algorithm for \probCSDRS on input $(G,\omega, t, w, d)$ first computes the set $W$ of vertices $v$ satisfying $|N(v)| > t+d+1$. Then computes the connected components of $G-W$ and checks if some of them is a solution. If a solution is not found this way, the algorithm computes $\tilde{G} = G-W$. Then, the algorithm uses Lemma \ref{lem:derand} to compute in time
$2^{\cO(t \log (td))}n \log n$ 
a family $\mathcal{S}$ of at most $2^{\cO(t\log (td))}\log n$
 subsets of $V(\tilde{G})$ such that: for any sets $A, B \subseteq V(\tilde{G})$, $A \cap B = \emptyset$, $|A| \leq t(d+1)(t+d)$, $|B| \leq t$, there exists a set $S \in \mathcal{S}$ with $A\subseteq S$ and $B\cap S = \emptyset$. 

For each set $S \in \mathcal{S}$, the algorithm mark as \emph{candidate} every component $C$ of $\tilde{G}[S]$ that satisfy that for all $u\in C$, $|N(u) \cap C| \leq d$. For each candidate component $C$, the algorithm looks for a node $u$ in $C$ such that $|N(u) \cap C| <d$. If such node is found, the corresponding component is enlarged adding to $C$ all nodes in $N(u)$. If $N(u)$ intersects other components of $G[S]$, we merge them into $C$. We repeat the process on the enlarged component $C$ until it can not grow any more, or some node $u$ in $C$ satisfies $|N(u) \cap C| >d$.  In the first case we check if the obtained component is a solution of the problem, otherwise the component is unmarked (is not longer a candidate), and the algorithm continues with another candidate component of $\tilde{G}[S]$ or other set $S'\in \mathcal{S}$.

Let $U$ be a set of vertices such that $G[U]$ is a solution of  \probCSDRS on input $(G,\omega, t, w, d)$ such that $N(U) \setminus W \neq \emptyset$. From construction of $\mathcal{S}$, we know that there exists some $S \in \mathcal{S}$ such that $\tilde{U} \subseteq S$ and $S \cap L = \emptyset$. From Lemma \ref{lem:dregulargrow} (1), we know that any component of $\tilde{G}[S]$ intersecting $U$ is good for $U$, so it will be marked as candidate. Finally, from Lemma \ref{lem:dregulargrow} (2) we know that when a component that is good for $U$ grows, the obtained component is also good for $U$.
Indeed, if $C$ is a good component of $\tilde{G}[S]$ containing a node $u$ such that $|N(u)\cap C| <d$, then the component of $\tilde{G}[S \cup N(u)]$ containing $C$ is also good for $U$.  We conclude that we correctly find $U$ testing the enlarging process on each component of $\tilde{G}[S]$, for each $S \in \mathcal{S}$. 
\end{proof}

\subsection{Secluded Long Path}

Our last problem is \probSP, defined as follows.

\defproblema{\probSP}%
{A graph $G$, a weight function $\omega\colon V(G)\rightarrow \mathbb{Z}_{>0}$,  an nonnegative integer $t$,  and a positive integer $w$.}%
{Decide whether $G$ contains a $t$-secluded induced path $P$ with $\omega(V(P))\geq w$.}

In this case the reasoning is very similar to the one for problem \probCSDRS when $d=2$. Let $(G,\omega,t,w)$ be an input of \probSP and let $P$ be a set of vertices of $G$ such that $G[P]$ is a solution of the problem. Note first that any vertex of degree greater than $t+2$ can not be contained in $P$, otherwise $G[P]$ is not $t$-secluded. Let $W$ be the set of vertices of high degree, i.e., $x\in W$ if $|N(x)| \geq t+3$. Suppose first that $N(P) \subseteq W$. This implies that $P$ is a component of $G-W$. Therefore, our algorithm will first compute the components of $G-W$ and check if some of them is a solution. In the following we assume that $N(P) \setminus W \neq \emptyset$.

We call again $L = N(P)\setminus W$ and define $P_1 = N(L) \cap P$, $P_2 = N(P_1) \cap P$ and $\tilde{P} = P_1 \cup P_2$. Note that  $|L| \leq t$, $|P_1| \leq t\cdot (t+3)$ and $|P_2| \leq  2t\cdot (t+3)$. Therefore $|\tilde{P}| \leq  3t\cdot (t+3)$.  For this case, it will be enough to define good sets as paths. A path $P'$ is called \emph{good} for $P$ if satisfies the following conditions:
\begin{itemize}
\item $P' \subseteq P$, and
\item For all $u \in P'$, $u\in P_1$ implies $N(u)\cap P \subseteq P'$.  
\end{itemize}

Note that every vertex $u$ in a path $P'$ good  for $P$ satisfies $|N(u)\cap C| \leq 2$. However, unlike the $d$ regular case, where the nodes in $U_1$ had $d$ neighbors in a good set, in this case we may have a node  $u \in P_1 \cap P'$ such that $|N(u)\cap P'| = 1$, when it is an endpoint of~$P'$.

\begin{lemma}\label{lem:pathgrow}

Let $S$ be any set of vertices of $G$ satisfying $\tilde{P} \subseteq S$ and $S\cap L = \emptyset$. Let now $C$ be a component of $G[S]$ such that $P \cap U \neq \emptyset$. Then the following holds

\begin{itemize}
\item[1)] $C$ is good for $U$ (so $C$ is a path).
\item[2)]  Suppose that $\omega(C) < w$. Then there exists an endpoint $u$ of $C$ satisfying $N(u)\setminus C = \{v\}$,  where $v$ is a vertex of $V(G) \setminus S$ contained in $P$ such that $N(v)\setminus W \subseteq P$. 

\end{itemize}

\end{lemma}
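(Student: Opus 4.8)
The plan is to follow the proof of Lemma~\ref{lem:dregulargrow} with $d=2$, while taking care of the two degeneracies absent from the $d$-regular case: an endpoint of the path $G[P]$ may itself lie in $P_1$, and a component $C$ may consist of a single vertex. Throughout I use that in the intended application $S$ avoids $W$, that is, $S\subseteq V(G)\setminus W$, so that $G[S]$ contains no vertex of degree at least $t+3$.

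\emph{Part~1.} First I would note that $N(P)=L\cup(N(P)\cap W)\subseteq L\cup W$, so from $S\cap L=\emptyset$ and $S\cap W=\emptyset$ we get $S\cap N(P)=\emptyset$. Hence a connected subgraph of $G[S]$ that meets $P$ cannot leave $P$ (leaving $P$ would use a vertex of $N(P)$), so $C\subseteq P$; being a connected induced subgraph of the induced path $G[P]$, $C$ is itself a subpath. For goodness, let $u\in C\cap P_1$; every neighbour $z$ of $u$ lying in $P$ belongs to $N(P_1)\cap P=P_2\subseteq\tilde P\subseteq S$, and since $z$ is adjacent to $u\in C$ we get $z\in C$. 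Thus $N(u)\cap P\subseteq C$, which is exactly the defining condition of being good for $P$.

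\emph{Part~2.} Since $\omega$ is positive and $\omega(C)<w\le\omega(P)$, we have $C\subsetneq P$, so $C$ is a proper contiguous segment $\{p_i,\dots,p_j\}$ of $G[P]$, where $p_1,\dots,p_\ell$ is the vertex order of $P$ along the path; relabelling the path if necessary, assume $p_{j+1}$ exists, and put $u:=p_j$, $v:=p_{j+1}\in P\setminus C$. The crucial observation is $u\notin P_1$: otherwise goodness from Part~1 would force $v\in N(u)\cap P\subseteq C$, which is false. Hence $u\notin N(L)$, so $N(u)\cap L=\emptyset$, and therefore every neighbour of $u$ outside $P$ lies in $N(P)\setminus L\subseteq W$; after discarding these (they are absent from $G-W$) only $N(u)\cap(P\setminus C)$ remains. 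When $|C|\ge 2$, $p_{j-1}\in C$ is the unique neighbour of $u$ inside $C$ and, as $G[P]$ is induced, $v$ is the unique neighbour of $u$ in $P\setminus C$; hence $u$ has exactly one neighbour $v\in P$ outside $C\cup W$. The degenerate case $|C|=1$ I would handle directly: if $p_j$ is an end-vertex of $G[P]$ the same count gives a single neighbour $v\in P$ outside $C\cup W$, and if $p_j$ is an interior vertex one first enlarges $C$ by $N(p_j)$ — which is legitimate since $p_j\notin P_1$ implies $N(p_j)\cap L=\emptyset$ — and reapplies the argument.

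\emph{Finishing.} Finally I would pin down $v$. Since $v$ is adjacent to $u\in C$ but $v\notin C$, it must be that $v\notin S$, for otherwise $v$ would lie in the component $C$ of $G[S]$. As $P_1\subseteq\tilde P\subseteq S$, this gives $v\notin P_1$, hence $v\notin N(L)$ and $N(v)\cap L=\emptyset$; therefore every neighbour of $v$ outside $P$ lies in $N(P)\setminus L\subseteq W$, i.e., $N(v)\setminus W\subseteq P$. Together with $v\in P$ this is the claimed statement. The step I expect to carry the real bookkeeping is Part~2 — checking that, with the high-degree neighbours stripped away, the selected endpoint has precisely one neighbour outside $C$, and dispatching the single-vertex and $P$-endpoint degeneracies cleanly; the conceptual content is only the short chain ``goodness $\Rightarrow u\notin P_1\Rightarrow N(u)\cap L=\emptyset$'' on the $C$-side, and ``$v\notin S\Rightarrow v\notin P_1\Rightarrow N(v)\cap L=\emptyset$'' on the outside.
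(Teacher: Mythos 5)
Your argument mirrors the paper's proof essentially step for step: part~1 via $S\cap N(P)=\emptyset$ (so $C\subseteq P$) together with $N(u)\cap P\subseteq P_2\subseteq\tilde{P}\subseteq S$ for $u\in P_1\cap C$, and part~2 by taking an extendable endpoint $u$, showing $u\notin P_1$ so that its neighbours off $P$ lie in $W$, and then $v\notin S\Rightarrow v\notin P_1\Rightarrow N(v)\cap L=\emptyset\Rightarrow N(v)\setminus W\subseteq P$, exactly as in the paper. You are merely more explicit on two points the paper leaves implicit -- that $S$ avoids $W$ (true in the application, where $S\subseteq V(\tilde{G})$ and neighbourhoods are effectively taken in $G-W$), and the degenerate single-vertex component sitting at an interior vertex of $P$, where the literal conclusion $N(u)\setminus C=\{v\}$ does not hold and your ``enlarge by $N(p_j)$ and reapply'' remark reflects how the algorithm actually uses the lemma -- so this is the same proof, with slightly more careful bookkeeping.
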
 

\begin{proof} \hspace{1cm}
\begin{itemize}
\item[1)] Let $u \in C \cap P$ and $v \in N(u) \cap (C \setminus P)$. Then $v$ is contained in $L$, which contradicts the fact that $S\cap L = \emptyset$. Therefore $C \subseteq P$. On the other hand, if $u \in P_1 \cap C$ then $N(u) \cap P = N(u) \cap \tilde{P} \subseteq C \cap S$. Then $N(u)\cap P \subseteq C$.  We conclude that $C$ is good for $P$. 

\item[2)] 
Since  $\omega(C) < w$ and $C$ is good for $P$, $C$ must be strictly contained in $P$. Hence one of the endpoints of $C$ can be extended, i.e.,  $P \cap (N(u_1) \setminus C) \neq \emptyset$ or $P \cap (N(u_2) \setminus C) \neq \emptyset$. Suppose w.l.o.g. that  $v \in  P \cap (N(u_1) \setminus C)$ (so $v \in P \setminus S$) and  that $N(u_1)$ is not contained in $P$. If $u'$ is a vertex in  $N(u_1) \setminus P$, then necessarily $u'$ belongs to $L$. Therefore $u_1$ is in $P_1$, which implies that $v$ is in $\tilde{P}$.  This contradicts the fact that $\tilde{P}$ is contained in $S$. We conclude that $N(u_1) \subseteq P$, so $N(u_1)\setminus C =  \{v\}$. Clearly $v$ is not in $P_1$, again because  otherwise $\tilde{P}$ would not be contained in $S$. We conclude that $N(v)\cap L = \emptyset$, so $N(v) \setminus W \subseteq P$.

\end{itemize}
\end{proof}

\begin{theorem}
\probSP can be solved in time $2^{\cO(t \log t)} \cdot n^{\cO(1)}$.
\end{theorem}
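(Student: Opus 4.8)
The plan is to mirror the argument used for \probCSDRS with $d=2$: a ``color-coding'' argument on a bounded-size boundary structure, followed by a monotone \emph{growing} process that exploits the fact that an induced path has maximum degree $2$. Given $(G,\omega,t,w)$, I would first compute the set $W$ of vertices of degree at least $t+3$ and test each connected component of $G-W$ as a potential solution; as already noted before the statement, a $t$-secluded induced path $P$ with $N_G(V(P))\subseteq W$ has every vertex of degree at most $t+2$ and therefore is exactly one component of $G-W$, so this handles that case. From now on I assume $N_G(V(P))\setminus W\neq\emptyset$, set $\tilde G=G-W$, and recall the sets $L=N_G(V(P))\setminus W$, $P_1=N_G(L)\cap V(P)$, $P_2=N_G(P_1)\cap V(P)$ and $\tilde P=P_1\cup P_2$, with $|\tilde P|\le 3t(t+3)$ and $|L|\le t$.

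Next I would apply Lemma~\ref{lem:derand} to $V(\tilde G)$ with $a=3t(t+3)$ and $b=t$ to obtain, in time $2^{\cO(\min\{a,b\}\log(a+b))}\,n\log n=2^{\cO(t\log t)}\,n\log n$, a family $\mathcal S$ of $2^{\cO(t\log t)}\log n$ subsets such that for every disjoint pair $A,B$ with $|A|\le 3t(t+3)$, $|B|\le t$ there is some $S\in\mathcal S$ with $A\subseteq S$ and $B\cap S=\emptyset$ (here $\min\{a,b\}=t$ and $a+b=\cO(t^2)$, so the exponent is $\cO(t\log t)$); in particular, for a solution $P$ there is $S\in\mathcal S$ with $\tilde P\subseteq S$ and $S\cap L=\emptyset$. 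The algorithm then, for every $S\in\mathcal S$ and every connected component $C$ of $\tilde G[S]$ of maximum degree at most $2$, runs the following loop: test whether $G[C]$ is an induced path with $\omega(V(C))\ge w$ and $|N_G(V(C))|\le t$ and, if so, output it; otherwise look, as prescribed by Lemma~\ref{lem:pathgrow}(2), for an endpoint $u$ of the path $C$ having a single $\tilde G$-neighbour $v\notin C$, replace $S$ by $S\cup N_{\tilde G}(u)$, and let $C$ become the component of the new $\tilde G[S]$ containing it; abandon the branch if no such endpoint exists or if $C$ ever acquires a vertex of degree larger than $2$. Each iteration strictly enlarges $C$, so there are at most $n$ iterations, and since $|\mathcal S|=2^{\cO(t\log t)}\log n$ and everything else is polynomial, the total running time is $2^{\cO(t\log t)}\cdot n^{\cO(1)}$.

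For correctness, fix $S$ as above. Since $\tilde P\subseteq V(P)$ is nonempty and lies in $S$, some component $C$ of $\tilde G[S]$ meets $V(P)$; by Lemma~\ref{lem:pathgrow}(1) it is good for $P$, hence a contiguous subpath of the induced path $G[P]$, so it survives the degree test. As long as $\omega(V(C))<w$, Lemma~\ref{lem:pathgrow}(2) gives an endpoint $u$ of $C$ whose unique outside neighbour $v$ lies in $V(P)$ with $N_G(v)\setminus W\subseteq V(P)$, so adding $N_{\tilde G}(u)$ keeps $\tilde P\subseteq S$ and $S\cap L=\emptyset$, and by Lemma~\ref{lem:pathgrow}(1) the enlarged component is again a strictly longer subpath of $P$. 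The key structural fact is that an endpoint of a component good for $P$ can lie in $P_1$ only when it is an endpoint of $P$ itself, since a $P_1$-vertex has both of its $P$-neighbours in $P_2\subseteq\tilde P\subseteq S$ and hence two neighbours inside its component; consequently, as long as $C\neq P$ at least one endpoint of $C$ is interior to $P$, is not in $P_1$, and growing from it only adds vertices of $P$. Iterating, the tracked component stays a subpath of $P$, and the process can stop only when it equals $P$ (both endpoints being endpoints of $P$), at which point the solution test succeeds.

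The step I expect to be the main obstacle is exactly this last one: proving that the growing process provably stays inside $P$ and terminates at a genuine $t$-secluded path. A proper subpath of $P$ may have as many as $t+2$ outside neighbours, so it is not automatically a solution, and an endpoint of the tracked subpath that is simultaneously an endpoint of $P$ lying in $P_1$ could, if one grew from it, pull a vertex of $L$ into $S$ and thereby break the invariants underpinning Lemma~\ref{lem:pathgrow}. Reconciling these — via the observation above that good-component endpoints avoid $P_1$ except at $P$'s own endpoints, together with Lemma~\ref{lem:pathgrow}(2)'s guarantee that while the weight is below $w$ there is always a safe (interior) endpoint to grow from, and with a careful check that the merging of components during an enlargement step preserves ``good for $P$'', degree at most $2$, $\tilde P\subseteq S$ and $S\cap L=\emptyset$ — is where the real work of the proof lies.
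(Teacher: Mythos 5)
Your proposal follows the paper's proof very closely: the same high-degree set $W$ and the component check on $G-W$, the same application of Lemma~\ref{lem:derand} with $a=3t(t+3)$ and $b=t$, and the same candidate-component plus growing process driven by Lemma~\ref{lem:pathgrow}. One of your deviations is harmless and even slightly cleaner: continuing to grow after the weight reaches $w$ until the full seclusion test also passes is justified, because the hypothesis $\omega(C)<w$ in Lemma~\ref{lem:pathgrow}(2) is used only to conclude $C\subsetneq P$, and your structural observation (an endpoint of a component that is good for $P$ lies in $P_1$ only if it is an endpoint of $P$ itself) supplies the extension step under the weaker hypothesis $C\neq P$.

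The step you yourself flag as ``the real work'' is, however, a genuine gap as written, and it is not closed by that structural fact. Knowing that a safe endpoint \emph{exists} whenever $C\neq P$ does not tell the algorithm \emph{which} endpoint to grow from: your loop ``looks for an endpoint $u$ with a single $\tilde G$-neighbour outside $C$'', and if it happens to pick the endpoint of $C$ that is an endpoint of $P$ lying in $P_1$, the unique vertex it adds is a vertex of $L$; from that moment $S\cap L=\emptyset$ fails irrevocably, the tracked component is no longer contained in $P$, and for the set $S\in\mathcal{S}$ whose existence Lemma~\ref{lem:derand} guarantees for this particular solution the path $P$ can be missed, with no other $S$ guaranteed to recover it. There is no local test distinguishing the bad endpoint from the good one, so the fix cannot be purely greedy. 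The paper's algorithm resolves this by explicitly testing both options (growing through either endpoint, and through either continuation) rather than following a single deterministic growth sequence; to complete your proof you must do the same and additionally argue that this exploration stays polynomial — for instance by exploiting that growth at a fixed endpoint is forced, so the configurations reachable from one candidate component can be enumerated without exponential branching — or otherwise justify why committing to one endpoint never loses the solution. As it stands, this choice-of-endpoint issue is exactly the point where your described algorithm could fail.
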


\begin{proof}

The algorithm for \probSP on input $(G,\omega, t, w)$ first computes the set $W$ of vertices  $v$ of $G$ satisfying $|N(v)| > t+2$. Then computes the connected components of $G-W$ and checks if some of them is a solution. If a solution is not found this way, the algorithm computes $\tilde{G} = G-W$.Then, the algorithm uses Lemma \ref{lem:derand} to compute in time $2^{\cO(t \log t)}n \log n$ a family $\mathcal{S}$ of at most $2^{\cO(t \log t)}\log n$ subsets of $V(\tilde{G})$ such that: for any sets $A, B \subseteq V(\tilde{G})$, $A \cap B = \emptyset$, $|A| \leq 3t(t+3)$, $|B| \leq t$, there exists a set $S \in \mathcal{S}$ with $A\subseteq S$ and $B\cap S = \emptyset$. 

For each set $S \in \mathcal{S}$, the algorithm mark as \emph{candidate} every component $C$ of $G[S]$ that is a path and check if one of them is a solution of the problem. Suppose that none of the marked components is a solution. The then algorithm runs a \emph{growing} process on each candidate component $C$ such that $\omega(C) < w$. For each such component $C$, and each endpoint $u$ of $C$, the algorithm checks if $u$ has only one neighbor $v$ outside $C$. If it does, the algorithm add $v$ to the path if $v$ has at most $2$ neighbors not intersecting $N(C)$. Then the growing process is repeated in the new component $C \cup \{v\}$, eventually through the other endpoint of $C$, and through a neighbor of $v$ outside $C$ (the algorithm test both options). If $N(v)$ intersect another path $C'$, we merge $C$ and $C'$.  The process stops when the resulting component $C$ is a path such that $\omega(C) \geq w$ or it can not grow any more. In the first case the algorithm checks if $C$ is a solution of the problem, 
and otherwise continues with another endpoint of the original component $C$, with another candidate component of $G[S]$ or other set $S' \in \mathcal{S}$. 

Let $P$ be a set of vertices such that $G[P]$ is a solution of  \probSP on input $(G,\omega, t, w)$, such that $N(P)\setminus W \neq \emptyset$. From construction of $\mathcal{S}$, we know that there exists some $S \in \mathcal{S}$ such that $\tilde{U} \subseteq S$ and $L \cap S = \emptyset$. From Lemma \ref{lem:pathgrow} (1), we know that any component of $\tilde{G}[S]$ intersecting $P$ is good for $U$, so it will be marked as candidate. We know from Lemma \ref{lem:pathgrow} (2) that each  component $C$ of $\tilde{G}[S]$ that is good for $P$, such that $\omega(C) < w$, can be extended to a node $v\in P$ such that $N(v) \cap L = \emptyset$.  Hence $S' = S \cup \{v\}$ satisfies $\tilde{P} \subseteq S'$ and $S' \cap L = \emptyset$. Then, from Lemma \ref{lem:pathgrow} (1) the component of $\tilde{G}[S']$ containing $C$ is good for $P$. We conclude that we correctly find $P$ testing the growing process on each endpoint of each component of $\tilde{G}[S]$, for each $S \in \mathcal{S}$. 
\end{proof}

\section{Concluding remarks}\label{sec:concl}
Another interesting question concerns kernelization for \probLCSS. We refer to~\cite{CyganFKLMPPS15} for the formal introduction to the kernelization algorithms.  Recall that a \emph{kernelization} for a parameterized problem is a polynomial algorithm that maps each instance $(x,k)$ with the input $x$ and the parameter $k$ to an instance $(x',k')$ such that i) $(x,k)$ is a yes-instance if and only if $(x',k')$ is a yes-instance of the problem, and ii) $|x'|+k'$ is bounded by $f(k)$ for a computable function $f$. The output $(x',k')$ is called a \emph{kernel}. The function $f$ is said to be a \emph{size} of a kernel. Respectively, a kernel is \emph{polynomial} if $f$ is polynomial. 

For \probLCSS, we hardly can hope to obtain polynomial kernels as it could be easily proved by applying the results of Bodlaender et al.~\cite{BodlaenderDFH09}  that, 
unless  $\classNP\subseteq\classCoNP/\text{\rm poly}$,
 \probLCSS has no polynomial kernel when parameterized by $t$  if \probLCSS is \classNP-complete. 
Nevertheless, \probLCSS can have a polynomial \emph{Turing kernel}.

Let $f\colon\mathbb{N}\rightarrow \mathbb{N}$.  A \emph{Turing kernelization} of size $F$ for a parameterized problem is an algorithm that decides whether a given instance $(x,k)$ of the problem, where $x$ is an input and $k$ is a parameter, is a yes-instance in time polynomial in $|x|+k$, when given the access to an oracle that decides whether an instance $(x',k')$, where  $|x'|+k'\leq f(k)$, is a yes-instance in a single step. Respectively, a Turing kernel is polynomial if $f$ is a polynomial.

We show that \probLCSS has a polynomial Turing kernel if $\Pi$ is the property to be a star.

\begin{theorem}
\probSSt problem admits a polynomial Turing kernelizaiton. 
\end{theorem}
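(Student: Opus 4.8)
The plan is to exploit the structural characterization of secluded stars already established in Lemma~\ref{lem:SStVC}: a $t$-secluded star $S$ with center $x$ corresponds to an (inclusion-minimal, if $S$ is maximal) vertex cover of size at most $t$ of the auxiliary graph $F_x$, and conversely, for any vertex cover $U$ of $F_x$ with $|U|\le t$, the set $N_G[x]\setminus U$ induces a star whose neighborhood is contained in $U$. The key observation for a \emph{Turing} kernel is that we may process each potential center $x\in V(G)$ separately, asking an oracle a question whose size depends polynomially only on $t$. So first I would iterate over all $n$ choices of the center $x$. For a fixed $x$, the task reduces to: does $F_x$ have a vertex cover $U$ with $|U|\le t$ such that $\omega(N_G[x]\setminus U)\ge w$? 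Equivalently, among inclusion-minimal vertex covers of $F_x$ of size at most $t$, is there one leaving behind enough weight in $N_G[x]$?

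Next I would shrink the instance $F_x$ (together with the relevant weight data) to size polynomial in $t$ before handing it to the oracle. The standard crown-decomposition / LP-based kernelization for \textsc{Vertex Cover} parameterized by solution size produces, in polynomial time, an equivalent instance on $O(t)$ vertices (a $2t$-vertex kernel, via Nemhauser--Trotter); moreover it has the property that inclusion-minimal vertex covers of the kernel correspond to inclusion-minimal vertex covers of the original within the same size bound, and the vertices removed are either forced into every small vertex cover or forced out of every minimal one. The subtlety is that we are not merely deciding whether a vertex cover of size $\le t$ exists: we must track the total weight $\omega(N_G[x]\setminus U)$. I would handle this by splitting $\omega(N_G[x])$ into the contribution of the vertices that lie in $V(F_x)$ and the contribution of $N_G[x]\setminus V(F_x)$ (the latter is a fixed additive constant, since those vertices are in no vertex cover of $F_x$ and hence always kept in the star). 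Then the question becomes purely about $F_x$: is there a vertex cover $U$ of $F_x$, $|U|\le t$, minimizing $\omega(U\cap N_G[x])$ below a threshold? By the Nemhauser--Trotter decomposition, the vertices forced into every minimum vertex cover and those forced out are determined in polynomial time, so their weight contribution is again a precomputable constant, and the residual instance has $O(t)$ vertices; I would pass this residual weighted instance, plus the adjusted threshold, to the oracle. Thus each oracle call has size $f(t)=t^{O(1)}$, and there are $n$ of them, giving a polynomial Turing kernel.

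The main obstacle, and the step that needs the most care, is marrying the weight bookkeeping with the vertex-cover kernelization: the classical \textsc{Vertex Cover} kernel is stated for the unweighted decision problem, and one must verify that the crown/LP reduction is "weight-transparent" in the sense that it partitions $V(F_x)$ into (i) an $O(t)$-size core, (ii) vertices that belong to every vertex cover of size $\le t$ (whose $\omega$-mass we add to the "removed" side), and (iii) vertices that belong to no inclusion-minimal vertex cover of size $\le t$ (whose $\omega$-mass stays in the star). A clean way to do this is to observe that the Nemhauser--Trotter theorem gives a partition $V_0\cup V_1\cup V_{1/2}$ of $V(F_x)$ from an optimal half-integral LP solution, with $V_1$ in every minimum vertex cover, $V_0$ in none, and $|V_{1/2}|\le 2\cdot\mathrm{OPT}_{LP}\le 2t$; any minimal vertex cover of size $\le t$ contains $V_1$, avoids $V_0$, and restricts to a minimal vertex cover of $F_x[V_{1/2}]$. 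This lets me compute the constants from $V_0,V_1$ exactly and reduce the oracle query to the $\le 2t$-vertex weighted instance $F_x[V_{1/2}]$ with the residual threshold $w - \omega(N_G[x]\setminus V(F_x)) - \omega(V(F_x)\cap V_0\cap N_G[x])$, taking care that the center $x$ itself is never in $F_x$ and is always in the star. Assembling these pieces yields the claimed Turing kernel; I would then remark that the oracle is being asked a "weighted secluded-star-at-a-fixed-center" question of size $t^{O(1)}$, which is exactly a bounded-size instance of (a trivially equivalent reformulation of) \probSSt.
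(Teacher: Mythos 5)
Your overall strategy (iterate over all candidate centers $x$, translate the boundary of a secluded star with center $x$ into a vertex cover of $F_x$ of size at most $t$ via Lemma~\ref{lem:SStVC}, shrink $F_x$ to size $t^{O(1)}$, and query the oracle once per center) is exactly the paper's plan. However, the specific shrinking step you propose has a genuine flaw. The Nemhauser--Trotter partition $V_0\cup V_1\cup V_{1/2}$ only guarantees that \emph{some minimum-cardinality} vertex cover contains $V_1$ and avoids $V_0$; it is false that every minimum vertex cover, let alone every inclusion-minimal vertex cover of size at most $t$, respects this partition. (Already for $K_{1,3}$ with $t=3$, the leaf set is an inclusion-minimal vertex cover of size $3$ entirely contained in $V_0$.) This matters here because the objective is not to minimize $|U|$ but to minimize $\omega(U\cap N_G(x))$ subject to $|U|\le t$ for an arbitrary positive weight function: the cover realizing the heaviest star may be far from minimum-cardinality and need not contain $V_1$ nor avoid $V_0$. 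So declaring the $\omega$-mass of $V_0$ ``always kept'' and that of $V_1$ ``always removed'' can discard the optimal solution. The paper avoids this by using the Buss rule instead: a vertex of degree greater than $t$ in $F_x$ lies in \emph{every} vertex cover of size at most $t$, which is a forcing valid for all candidate boundaries, and the residual low-degree graph either has at most $2t\cdot t'$ non-isolated vertices or no such cover exists. If you replace your NT step by this degree-based forcing, your weight bookkeeping (constant contributions from forced-in vertices and from the vertices of $N_G[x]$ outside the residual core) goes through.

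A second, smaller gap is the final sentence: the query you hand the oracle is a ``secluded star with prescribed center $x$'' instance, and turning that into a genuine instance of \probSSt is not trivial, because on a graph with only $t^{O(1)}$ vertices a secluded star could be centered elsewhere. The paper handles this by attaching a clique of size $2t$ to every vertex of $N^2_G(x)$ in the reduced graph, which forces any $t'$-secluded star of the query instance to have center $x$; some gadget of this kind is needed to make the Turing kernel self-reducing to \probSSt.
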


\begin{proof}
Let $S$ be a solution of \probSSt on input $(G, \omega, t, w)$.  Remember that for each $x\in V(G)$, we called $F_x$ the subgraph of $G[N_G(x) \cup N_G^2(x)]$ obtained by the deletion of all edges with both endpoints in $N_G^2(x)$. From Lemma \ref{lem:SStVC} if $S$ is a $t$-secluded star of $G$ with center $x$, then $N_G(S)$ is a vertex cover of size at most $t$ of $F_x$.  Our kernelization algorithm will first compute, for each $x \in V(G)$, the graph $F_x$. Then, it performs a Buss's kernelization on each graph $F_x$. 

For a vertex $x \in V(G)$, let $W_x$ the set of vertices of degree greater than $t$ in $F_x$. Note that every vertex cover of size at most $t$ of $F_x$ must contain $W_x$. Hence, if $|W_x|$ is greater than $t$, then $x$ can not be the center of a $t$-secluded star.

Suppose now that $|W_x|\leq t$ and let $t' = t - |W_x|$. Note that $F_x$ contains a vertex cover of size $t$ if and only if $F_x - W_x$ contains a vertex cover of size at most $t'$. Since $F_x - W_x$ is a graph of degree at most $t$, a vertex cover of size $t'$ can cover at most $t\cdot t'$ edges. In other words, if $F_x - W_x$ contains more than  $ 2t\cdot t'$ non isolated vertices, then $x$ can not be the center of a $t$-secluded star.

Suppose now that $F_x - W_x$ has at most $2t\cdot t'$ non isolated vertices. Let  $F^+_x$ be the subgraph of $G[N_G^2[x]]$, obtained from $F_x - W_x$ removing all the isolated nodes, and then adding $x$ with all its incident edges. Note that $F^+_x$ contains at most $2t^2 +1$ vertices.  Now let $I_x$ be the set of isolated vertices of $F_x - W_x$ contained in $N_G(x)$. Since $N_G(I_x) \setminus \{x\}$ is contained in $W_x$, and $W_x$ is contained in $N_G(S)$ for every $t$-secluded star $S$ with center $x$, we deduce that $I_x$ may be contained in any such star $S$. In other words, if $S$ is a star with center $x$, then $S$ is a solution of \probSSt on input $(G, \omega, t, w)$ if and only if $S - I_x$ is a solution of \probSSt on input $(F^+_x, \omega^+, t', w')$, where  $\omega^+(v) = \omega(v)$ for all $v\in V(F^+_x)$ and $w' = w - \omega(I_x)$.

Now let $F'_x$ be the graph obtained from $F^+_x$, attaching to each node $u$ in $N_G^2(x) \cap V(F^+_x)$ a clique $K_u$ of size $2t$, where all vertices of $K_u$ are adjacent to $u$. Note that $F^+_x$ is a graph of size at most $ 4t^3 +1$. Moreover, every $t'$-secluded star in $F'_x$ has center $x$.  Indeed, if the center is some vertex in $N(x)$ or $N_G^2(x)$, then $S$ intersects or has as neighbors more than $t$ vertices in some clique $K_u$. Let $\omega'$ be a function on $V(F'_x)$ such that $\omega'(v) = \omega(v)$ if $v \in V(F'_x) \cap V(G)$ and $\omega'(v) = 1$ if $v$ is in one of the cliques adjacent to a node of $N_G^2(x) \cap V(F'_x)$. 

We conclude that $(F'_x, \omega', t', w')$ is a yes-instance of \probSSt for some $x \in V(G)$ if and only if $(G, \omega, t, w)$ is a yes-instance of \probSSt.

The kernelization algorithm computes for each $x \in V(G)$ the graph $F_x$ and the set $W_x$. If $W_x \cap N_G^2[x]$ contains more than $t$ nodes the algorithm rejects $x$ and continue with another vertex of $V(G)$.  If $|W_x|\leq t$ it computes $F^+_x$ deleting all the isolated vertices of $F_x - W_x$. If $F^+_x$ contains more than $2t\cdot t_1 +1$ nodes the algorithm rejects $x$ and continue with another vertex of $V(G)$. Finally, the algorithm computes $F_x'$, $t'$, $w'$ and calls the oracle on input $(F'_x, \omega', t', w')$. If the oracle answers that $(F'_x, \omega', t', w')$ is a yes-instance, the algorithm decides that $(G, \omega, t, w)$ is a yes-instance. Otherwise the algorithm continue with another vertex of $V(G)$. The algorithm ends when some oracle accepts, or all nodes are rejected. 
\end{proof}

%\bibliographystyle{siam}
%\bibliography{secluded}

\end{document}